\newcommand{\bc}[1]{\left\{{#1}\right\}}
\newcommand{\br}[1]{\left({#1}\right)}
\newcommand{\bs}[1]{\left[{#1}\right]}
\newcommand{\abs}[1]{\left| {#1} \right|}
\newcommand{\norm}[1]{\left\| {#1} \right\|}
\def\eqref#1{(\ref{#1})}
\def\1{\bm{1}}
\DeclareMathAlphabet{\mathsfit}{\encodingdefault}{\sfdefault}{m}{sl}
\SetMathAlphabet{\mathsfit}{bold}{\encodingdefault}{\sfdefault}{bx}{n}
\renewcommand{\P}[1]{\mathbb{P}\bs{{#1}}}
\newcommand{\Pp}[2]{\underset{#1}{\mathbb{P}}\bs{{#2}}}
\newcommand{\E}{\mathbb{E}}
\DeclareMathOperator*{\argmin}{arg\,min}
\newcommand{\trans}{\intercal}
\newcommand{\innerprod}[2]{\left\langle{#1},{#2}\right\rangle}
\newcommand{\cost}{c}
\newcommand{\fcost}{x}
\theoremstyle{plain}
\newtheorem{theorem}{Theorem}[section]
\newtheorem{lemma}{Lemma}
\newtheorem{corollary}{Corollary} 
\theoremstyle{definition}
\newtheorem{definition}[theorem]{Definition}
\newtheorem{remark}[theorem]{Remark}
\newtheorem*{openquestion}{Open Question}
\definecolor{shadecolor}{rgb}{.9, .9, .9}
\colorlet{shadecolor}{blue!10}
\newcommand{\thistheoremname}{}
\newmdtheoremenv[
outerlinewidth=2,
roundcorner=10 pt,
leftmargin=0,
rightmargin=0,
backgroundcolor=blue!5,
outerlinecolor=blue!70!black,
innerleftmargin=.4em,
    innerrightmargin=.4em,
    innertopmargin=.4em,
    innerbottommargin=.4em,
    skipabove=.3em,
        skipbelow=.3em,
hidealllines=true,
splittopskip=\topskip,
ntheorem=true]{genericthm}{\thistheoremname}
\DeclareFontFamily{OT1}{pzc}{}
\DeclareFontShape{OT1}{pzc}{m}{it}{<-> s * [1.200] pzcmi7t}{}
\DeclareMathAlphabet{\mathpzc}{OT1}{pzc}{m}{it}
\newtheorem*{rep@theorem}{\rep@title}
\newcommand{\newreptheorem}[2]{%
	\newenvironment{rep#1}[1]{%
		\def\rep@title{#2 \ref{##1}}%
		\begin{rep@theorem}}%
		{\end{rep@theorem}}}
\newtheorem*{question}{Question}
\newtheorem{defn}{Definition}
\renewcommand*{\Pr}{\mathrm{Pr}}
\newcounter{protocol}
\newenvironment{protocol}[1][htb]{%
  \let\c@algorithm\c@protocol
  \renewcommand{\ALG@name}{Protocol}
  \begin{algorithm}[#1]%
  }{\end{algorithm}
}
\title{Semi Bandit Dynamics in Congestion Games: Convergence to Nash Equilibrium and No-Regret Guarantees.}
\author{
\and\and
\textbf{Ioannis Panageas}$^\star$ \\
Computer Science\\ UC Irvine
\and
\textbf{Stratis Skoulakis}$^\star$ \\
LIONS \\
EPFL 
\and
\textbf{Luca Viano}$^\star$ \\
LIONS \\
EPFL\and
\and\and
\textbf{Xiao Wang} \\
ITCS\\ SUFE
\and
\textbf{Volkan Cevher} \\
LIONS \\
EPFL
}
\begin{document}
\maketitle
\begin{abstract}

In this work, we introduce a new variant of online gradient descent, which provably converges to Nash Equilibria and simultaneously attains sublinear regret for the class of congestion games in the semi-bandit feedback setting. Our proposed method admits convergence rates depending only polynomially on the number of players and the number of facilities, but not on the size of the action set, which can be \let\thefootnote\relax\footnotetext{$^\star$ Equal Contribution.} exponentially large in terms of the number of facilities. 
Moreover, the running time of our method has polynomial-time dependence on the implicit description of the game. 
As a result, our work answers an open question from \cite{du22}. 
\end{abstract}

\section{Introduction}

Congestion games is a class of multi-agent games at which $n$ selfish agents compete over a set of $m$ resources. Each agent selects a subset of the resources and her cost depends on the load of each of the selected resources (number of other agents using the resource). For example in \textit{Network Congestion Games}, given a graph, each agent $i$ wants to travel from a starting vertex $s_i$ to a target position $t_i$ and thus needs to select set of edges forming an $(s_i,t_i)$-path. Due to their numerous applications 
congestion games have been extensively studied over the years \cite{KoutsoupiasP99WorstCE,roughgarden2002bad,christodoulou,Fotakis2005226,Schafer10,Roughgarden09}.

It is well-known that congestion games always admit a Nash Equilibrium (NE) which is a \textit{steady state} at which no agent can unilaterally deviate without increasing her cost. At the same time, a long line of research studies the convergence properties to NE of \textit{game dynamics} (e.g. no-regret, best response, fictitious play) at which the agents of a congestion game iteratively update their strategies based on the strategies of the other agents
in their attempt to minimize their individual cost.

In most real-world scenarios, agents do not have access to the strategies of the other agents (\textit{full-information feedback}) and are only informed on the loads/cost of the resources they selected at each round. For example a driver learns only the congestion on the highways that she selected and not the congestion of the alternatives that she did not select. This type of feedback is called \textit{semi-bandit feedback} and has been extensively studied in the context of online learning. Motivated by the above, \cite{du22} in their recent work investigate the following question:

\begin{question}\cite{du22}
\textit{Are there update rules under (semi)-bandit feedback, that once adopted by all agents of a congestion game, the overall system converges to a Nash Equilibrium with rate that is \textbf{independent} of the number of possible strategies?} 
\end{question}
We note that in congestion games the number of possible strategies can be exponentially large with respect to the game description. For example the number of $(s,t)$-paths
can be of the order $\mathcal{O}\left(2^{\Theta(m
)}\right)$ with respect to the number of edges $m$. \cite{du22} provide an update rule (based on the Frank-Wolfe method)
that once adopted by all $n$ agents, the overall system requires $\mathcal{O}(n^{12} m^9/\epsilon^6)$ time-steps (samples) to reach an $\epsilon$-approximate NE.

Despite its fast convergence properties, the update rule of \cite{du22} is not aligned with the selfish nature of the agents participating in a congestion game. This is because their method does not provide any kind of guarantees on the \textit{regret} of the agents adopting it. As a result, \cite{du22} posed the following open question.
\begin{openquestion}\cite{du22}
\textit{Are there update rules under (semi)-bandit feedback that}
\begin{enumerate}
    \item \textit{provide (adversarial) no-regret guarantees to any agent that adopts them} and
    \item \textit{once adopted by all agents, the overall system converges to Nash Equilibrium with rate independent on the number of strategies?}
\end{enumerate}
\end{openquestion}
The term \textit{no-regret} refers to the fact that
the time-average cost of any agent, adopting the update rule, is upper bounded by the time-average cost
of the \textit{best fixed path in hindsight} (no matter how the other agents select their strategies). Due to their remarkable guarantees, no-regret algorithms have been the standard choice for modeling selfish behavior in non-cooperative environments \cite{EMN09}.

\textbf{Our Contribution.}
In this work, we provide a positive answer to the above open question, while we improve upon the results of \cite{du22} in two important aspects. Specifically, we propose a semi-bandit feedback online learning algorithm, called Semi-Bandit Gradient Descent with Caratheodory Exploration ($\mathrm{SBGD-CE}$), with the following properties: 
\begin{itemize}
\item \textit{\textbf{No-regret guarantees}}: Any agent adopting $\mathrm{SBGD}$-$\mathrm{CE}$ admits at most $\mathcal{\tilde{O}}(m^{2}T^{4/5})$ regret no matter how the other agents select their strategies.
\item \textit{\textbf{Convergence to NE}}: If $\mathrm{SBGD}-\mathrm{CE}$ is adopted by all agents, the overall system reaches an $\epsilon$-approximate NE within $\mathcal{O}(n^{6.5}m^{7}/\epsilon^5)$ time steps improving the $\mathcal{O}(n^{12} m^9/\epsilon^6)$ bound of \cite{du22}. 

\item \textit{\textbf{Polynomial Update Rule}}: 
The update rule of $\mathrm{SBGD}$-$\mathrm{CE}$ runs in polynomial-time with respect to the \textit{implicit description} of the strategy space (see Section~\ref{s:implicit}). On the other hand, the update rule of \cite{du22} requires linear time and space complexity w.r.t the number of strategies. Thus, in many interesting settings such as \textit{network congestion games}, the time and space complexity of Frank-Wolfe with Exploration II \cite{du22} is exponential w.r.t the number of edges. 
\end{itemize}
\begin{remark}[Notion of convergence]\label{rem:last} As in \cite{du22,leonardos2022global,DWZJ22,Ana22},the notion of convergence that we use the is so-called \textit{best-iterate convergence}. Mathematically speaking, the time-averaged exploitability (defined as the sum of best deviation minus chosen strategy per agent) is bounded by $\epsilon$ (see Theorem \ref{thm:convergence_to_nash}). From a \textit{game-theoretic} point of view, bset-iterate convergence implies that with high probability \textit{almost all iterates} are $\mathcal{O}(\epsilon)$-Mixed NE. From a \textit{learning} point of view, best-iterate convergence implies that we can learn an approximate NE of an \textit{unknown} congestion game by considering the strategy profile at a iterate $t \sim \mathrm{Unif}(1,\ldots,T)$ (see Corollary~\ref{c:markov}). The term "\textit{best-iterate convergence}" might not be the most descriptive for the above, however it is the one most commonly used in the literature. 
\end{remark}

\begin{table}[t]
\centering
\caption{ Comparison with previous related works. $^\star$See \Cref{remarkT34} for regret bound $\mathcal{O}(m^{3/2}T^{3/4})$ under different step size and exploration parameter choices. $\dagger$By convergence to NE we mean best-iterate convergence as explained in Remark \ref{rem:last}. We note that all the known results that provide rates use the same notion of convergence (as presented in the table).
\label{table:results}}
\resizebox{\textwidth}{!}{\begin{tabular}{@{}cccc@{}}
\toprule
\textbf{Method} & \textbf{Adversarial Regret} & \textbf{Convergence$\dagger$ to NE} &  \textbf{Running Time} \\ \midrule
IPPG \cite{leonardos2022global} &Not Established  & $\mathcal{O}\left(n2^{\Theta(m)}m/\epsilon^6\right)$  &  Exp. in Available Resources        \\ \midrule
IPGA \cite{DWZJ22} & Not Established  &  $\mathcal{O}\left(n^32^{\Theta(m)}m^5 /\epsilon^5\right)$ & Exp. in Available Resources \\ \midrule
FW with Exploration $\mathrm{II}$ \cite{du22} &   Not Established   & $\mathcal{O}\left(n^{12}m^{9}/\epsilon^6\right)$  & Exp. in Available Resources \\ \midrule
$\mathrm{SBGD}-\mathrm{CE}$ (this work) & $\mathcal{O}(m^2T^{4/5})^\star$   & $\mathcal{O}\left(n^{6.5}m^{7}/\epsilon^5\right)$  & Poly. in Implicit Description \\ \bottomrule
\end{tabular}}
\end{table}

\textbf{Our Techniques and Related Work}
The fundamental difficulty in designing no-regret online learning algorithms under (semi)-bandit feedback is to guarantee that each strategy is sufficiently explored. Unfortunately, standard bandit algorithms such as $\mathrm{EXP}3$ \cite{ACFS02} result in exponentially large in $m$ regret bounds, e.g. $\mathcal{O}\left(2^{\Omega(m)}\sqrt{T}\right)$, as well as time and space complexity.       
A long line of research in the context of \textit{combinatorial bandits} provides no-regret algorithms with polynomial dependence w.r.t to $m$ on the regret, while many of those algorithms can be efficiently implemented \cite{AK04, VHK07,GLLO07,BCK12,CL12,vempala, NB13, ASL14}. For example, \cite{BCK12} provide an online learning algorithm with regret $\mathcal{O}(m\sqrt{T})$. However, in order to overcome the exploration problem, these algorithms use involved techniques (e.g. barycentric spanners or entropic projections \cite{AK04,BCK12}) which introduce major technical difficulties in their \textit{multi-agent analysis}. To the best of our knowledge, none of these algorithms guarantees convergence to NE in congestion games once adopted by all agents.
\begin{remark}
We highly remark that the \textit{no-regret property} does not imply convergence to Nash Equilibrium in congestion/potential games \cite{CHM17,BR20}. No-regret dynamics are guaranteed to converge in Coarse Correlated Equilibrium that is a strict superset of NE and that can even contain strictly dominated strategies \cite{VZ13}.  
\end{remark}

On the somehow opposite front, recent works studying the convergence properties of semi-bandit game dynamics in potential games use \textit{explicit} exploration schemes at which each strategy is selected with a small probability \cite{leonardos2022global,DWZJ22}. However such exploration schemes lead to convergence rates that scale polynomially on the number of strategies (can be exponential w.r.t to $m$ in congestion games). \cite{du22} combine an explicit exploration scheme with the Frank-Wolfe method and establish that the resulting convergence rate (number of samples) to NE depends only polynomially in $m$. As mentioned above, the update rule of \cite{du22} does not guarantee the no-regret property in the adversarial case while its update rule is of exponential time and space. Table~\ref{table:results} concisely present the above mentioned results.

In order to solve the exploration problem with schemes that are simple enough to analyze in the multi-agent case, we introduce the notion of \textit{Bounded-Away Description Polytope}. These polytopes are subsets of description polytopes, the extreme points of which correspond to the available strategies and additionally impose lower bounds on the fractional selection of each resource. Our $\mathrm{SBGD-CE}$ method is based into running Online Gradient Descent \cite{Z07} while projecting into a \textit{time-expanding} Bounded-Away Description Polytope. At each round, $\mathrm{SBGD-CE}$ also uses the Caratheodory Decomposition to (randomly) select a valid set of resources. By extending the analysis of Online Gradient Descent as well as of Stochastic Gradient Descent constrained to \textit{time-varying feasibility sets}, we establish no-regret guarantees as well as fast converge to NE.


\textbf{Further Related Work} \cite{Ana22} establish best-iterate convergence rates to NE in congestion games, though \textit{full-information feedback} is assumed and the rates depends on the strategy space of each agent. 
\cite{CMS10} and \cite{PPP17,HCM17} prove \textit{asymptotic last-iterate convergence} of no-regret dynamics under \textit{full-information feedback} and \textit{bandit-feedback} respectively in potential/congestion games. To the best of our knowledge there do not exist \textit{last-iterate convergence rates} for congestion games (even with exponential dependence on the number of resources) unless the initial condition is close enough to an equilibrium. Even the well-studied \textit{full-information better-response dynamics} is only known to converge in the \textit{best-iterate sense} \cite{CS07}. \cite{du22} also provide provide convergence guarantees for congestion games under \textit{bandit feedback} with slightly worse rates that the one presented in Table~\ref{table:results}. \cite{VAM21} study accelerated methods to converging to \textit{Wardrop Equilibrium} in network congestion games. Other works studying no-regret dynamics beyond congestion games include \cite{PS14,MS17,CHM17,MPP18,BLM18,MZ19, VG20,GVM21}.

\section{Preliminaries and Our Results}
\label{sec:preliminaries}

\subsection{Congestion Games}
A \textit{congestion game} is composed by $n$ selfish agents and 
a set of resources $E$ with $|E| = m$. The strategy of each agent $i\in [n]$ is a subset of resources $p_i \in \mathcal{P}_i$ where $\mathcal{P}_i \subseteq 2^E$ is the strategy space of agent $i$. The set $\mathcal{P}$ denotes all joint strategy profiles, $\mathcal{P} :=  \mathcal{P}_1 \times \ldots \times\mathcal{P}_n$. Given a strategy profile $p \in \mathcal{P},$ we use the notation $p :=(p_i,p_{-i})$ where $p_i$ captures the strategy of agent $i$ and $p_{-i}$ denotes the strategies of all agents but $i$.

The \textit{load} of a resource $e\in E$ under the strategy profile $p:= (p_1,\ldots,p_n) \in \mathcal{P}$ is denoted by $\ell_e(p)$ and equals the number of agents using resource $e \in E$, i.e.,
\begin{equation}\label{eq:load}
\ell_e(p_1,\ldots, p_{n}) \triangleq \sum^n_{i=1} \mathds{1}\left[e\in p_i\right].
\end{equation}

Each resource $e$ admits a positive and non-decreasing cost function $c_e:~\mathbb{N} \mapsto \mathbb{R}_{\geq 0}$ where $c_e(\ell)$ denotes the congestion cost of $e \in E$
under load $\ell \in \mathbb{N}$. Additionally, we set $c_{\max} := \max_{e\in E} c_e(n)$.

Given a strategy profile $p := (p_1,\ldots,p_n) \in \mathcal{P}$, the cost of agent $i$ is defined as
        \[C_i(p) \triangleq \sum_{e \in p_i} c_e\left(\ell_e(p)\right).\]
        
\begin{defn}[\textbf{Nash Equilibrium}]
 A path selection $p=(p_1,\ldots,p_n) \in \mathcal{P}$ is an $\epsilon$-approximate \textit{Pure Nash Equilibrium} if and only if for each agent $i \in [n]$,
\begin{equation*}
    C_i(p_i, p_{-i}) \leq C_i(p'_i, p_{-i}) + \epsilon \quad \text{for all } p'_i \in \mathcal{P}_i.
\end{equation*}
A prob. distribution $\pi^\star:= (\pi^\star_1,\ldots,\pi^\star_n) \in \Delta(\mathcal{P}_1)\times \dots \times \Delta(\mathcal{P}_n)$ is an $\epsilon$-approximate \textit{Mixed Nash Equilibrium} (MNE) if and only if for each agent $i \in [n]$,
\begin{equation*}
    \E_{\pi^\star_i,\pi^\star_{-i}}\left[ C_i(p_i, p_{-i})\right] \leq \E_{\pi'_i,\pi^\star_{-i}}\left[ C_i(p_i, p_{-i})\right] + \epsilon .
\end{equation*}
for all $\pi_i' \in \Delta(\mathcal{P}_i)$.
For convenience of notation, we will use the shorthand $c_i(\pi_i, \pi_{-i}) := \E_{\pi_i,\pi_{-i}}\left[ C_i(p_i, p_{-i})\right]$.
\end{defn}

\begin{theorem}[Folkore]
Congestion games always admit a Pure NE equilibrium $p^\star \in \mathcal{P}$.
\end{theorem}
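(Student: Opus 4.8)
The plan is to exhibit Rosenthal's exact potential function and then invoke finiteness of the strategy space. Define $\Phi : \mathcal{P} \to \mathbb{R}_{\geq 0}$ by
\[
\Phi(p) \;\triangleq\; \sum_{e \in E} \sum_{k=1}^{\ell_e(p)} c_e(k),
\]
with the convention that an empty sum is $0$. The heart of the argument is to show that $\Phi$ is an \emph{exact potential}: for every agent $i$, every fixed $p_{-i}$, and every pair $p_i, p_i' \in \mathcal{P}_i$,
\[
\Phi(p_i', p_{-i}) - \Phi(p_i, p_{-i}) \;=\; C_i(p_i', p_{-i}) - C_i(p_i, p_{-i}).
\]

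To verify this identity I would first let $\ell_e^{-i} \triangleq \sum_{j \neq i} \mathds{1}[e \in p_j]$ denote the load on $e$ induced by all agents other than $i$, so that for any strategy $q_i \in \mathcal{P}_i$ a resource $e \in q_i$ experiences load $\ell_e^{-i} + 1$, whence $C_i(q_i, p_{-i}) = \sum_{e \in q_i} c_e(\ell_e^{-i} + 1)$. When agent $i$ switches from $p_i$ to $p_i'$, only the resources in the symmetric difference $p_i \,\triangle\, p_i'$ change their load. Tracking these changes in $\Phi$, each $e \in p_i' \setminus p_i$ has its load rise from $\ell_e^{-i}$ to $\ell_e^{-i}+1$ and hence contributes $+c_e(\ell_e^{-i}+1)$, while each $e \in p_i \setminus p_i'$ has its load fall and contributes $-c_e(\ell_e^{-i}+1)$; resources lying in both $p_i$ and $p_i'$, or in neither, keep their load and cancel. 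The resulting difference is exactly $C_i(p_i',p_{-i}) - C_i(p_i,p_{-i})$, establishing the identity.

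With the potential property in hand, existence is immediate. Since each $\mathcal{P}_i \subseteq 2^E$ is finite, the joint strategy space $\mathcal{P}$ is finite, so $\Phi$ attains a global minimum at some profile $p^\star \in \mathcal{P}$. For any agent $i$ and any deviation $p_i' \in \mathcal{P}_i$, minimality yields $\Phi(p_i', p^\star_{-i}) \geq \Phi(p^\star)$, and the exact potential identity converts this inequality into $C_i(p_i', p^\star_{-i}) \geq C_i(p^\star_i, p^\star_{-i})$. Thus no agent can strictly decrease her cost by a unilateral deviation, i.e. $p^\star$ is a Pure Nash Equilibrium.

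I do not anticipate a genuine obstacle, as this is the classical Rosenthal argument. The only point requiring care is the bookkeeping in the potential-difference computation—ensuring that the resources unchanged by the deviation cancel and that the load $\ell_e^{-i}+1$ is used consistently. Notably, no monotonicity of the cost functions $c_e$ is needed for this existence result, only finiteness of $\mathcal{P}$.
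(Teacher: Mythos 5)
Your proof is correct and is the classical Rosenthal potential argument; the paper itself states this result as folklore without proof (the appendix only remarks that congestion games are isomorphic to potential games and hence admit a pure NE), so there is no in-paper proof to diverge from. Your potential $\Phi(p)=\sum_{e\in E}\sum_{k=1}^{\ell_e(p)}c_e(k)$ is exactly the discrete counterpart of the fractional potential the paper later uses in Definition~\ref{def:fractional_potential}, the exact-potential bookkeeping over the symmetric difference is right, and your observation that only finiteness of $\mathcal{P}$ (not monotonicity of the $c_e$) is needed is accurate.
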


\subsection{Implicit Description of the strategy Space}\label{s:implicit}
The strategy space $\mathcal{P}_i$ can be exponentially large w.r.t the number of resources $E$ ($|\mathcal{P}_i| \leq 2^{\Theta(m)}$). For example in \textit{network congestion games}, $\mathcal{P}_i$ is the set of possible $(s_i,t_i)$-paths in a given directed graph $G(V,E)$ where $s_i \in V$ is the starting node of agent $i$ and $t_i \in V$ is her destination. Typically the number of possible $(s_i,t_i)$ paths is exponential in the number of edges. 

Exponentially large strategy spaces can be described through the following \textit{implicit polytopal description} \cite{KS21}. Given a strategy $p_i \in \mathcal{P}_i$, consider
its equivalent description as $\{0,1\}^m$ vector 
\[x_
{p_i} \triangleq \{x \in \{0,1\}^m~:~x_e = 1 ~\text{iff}~  e\in p_i\}.\] 
Consider the set $\hat{\mathcal{P}}_i := \{x_{p_i} \in \{0,1\}^m:~\text{for }p_i \in \mathcal{P}_i \}$ and the polytope $\mathcal{X}_i := \mathrm{conv}(\mathcal{\hat{P}}_i)$ denoting the convex hull of $\mathcal{\hat{P}}_i$. The polytope $\mathcal{X}_i$ admits the alternative description $\mathcal{X}_i := \{x \in [0,1]^m:~A_i \cdot x \leq b_i\}$ where $A_i \in \mathbb{R}^{r \times m}$ and $b_i \in \mathbb{R}^r$. Thus $\mathcal{P}_i$ can be \textit{implictly} described as the extreme points of a set $\mathcal{X}_i := \{x \in [0,1]^m:~A_i \cdot x \leq b_i\}$ that can be described with $\mathcal{O}(r m)$ fractional numbers.

In many classes of congestion games, the implicit polytopal description of $\mathcal{P}_i$ is polynomial in the number of resources ($r := \mathrm{poly}(m)$) while $|\mathcal{P}_i| = 2^{\Theta(m)}$ \cite{KS21}. For example in \textit{directed acyclic graphs} ($\mathrm{DAG}$s) the number all possible $(s_i,t_i)$-paths can be $2^{\Theta(m)}$ while the set of $(s_i,t_i)$-paths can be equivalently described as the extreme points of the following \textit{path polytope} (see Appendix~\ref{app:DAG}),
\begin{align*}
    \mathcal{X}_i \triangleq \bigg\{ \fcost&\in[0,1]^m : \sum_{e \in \mathrm{Out}(s_i)} x_e = 1, \sum_{e \in \mathrm{In}(t_i)} x_e = 1,\\ & \sum_{e \in\mathrm{In}(v)} x_e =  \sum_{e \in\mathrm{Out}(v)} x_e \quad \forall v \in V \setminus \bc{s_i, t_i}  \bigg\}
\end{align*}
where $\mathrm{In}(v),\mathrm{Out}(v) \subseteq E$ denote the incoming, outgoing edges respectively of the node $v \in V$. 
\begin{remark}\label{r:1}
One can always compute an implicit polytopal description $\mathcal{X}_i$ given an explicit description of $\mathcal{P}_i$. In the remaining paper, we assume access to the \textit{implicit polytopal description} $\mathcal{X}_i := \{x \in [0,1]^m:~A_i \cdot x \leq b_i\}$ where $A_i \in \mathbb{R}^{r_i \times m}$ and $b_i \in \mathbb{R}^{r_i}$. We note that the regret bounds and the convergence rates to NE of our proposed algorithm (Algorithm~\ref{alg:alg1}) only depend on the $n$ and $m$ and are totally independent of $\max_{i \in [n]}r_i$. The exact same holds for the convergence guarantees of the update rule proposed by \cite{du22}. On the other hand, the running time of Algorithm~\ref{alg:alg1} is polynomial in $r_i$ and $m$ while the time and space complexity of algorithm of \cite{du22} scales linearly with $\max_{i \in [n]}|\mathcal{P}_i|$ even if $r_i= \mathrm{poly}(m)$.
\end{remark}

\subsection{Semi-Bandit Learning Dynamics}
In game dynamics with \textit{semi-bandit feedback}, each agent iteratively updates her strategies based on the congestion cost of the previously selected resources so as to minimize her overall experienced cost. Semi-bandit dynamics in congestion are described in Protocol~\ref{def:games_protocol}.
\begin{protocol}[h]
			\caption{Semi-Bandit Game Dynamics}
			\label{def:games_protocol}
			\begin{algorithmic}[1]
				\FOR{each round $t = 1, \ldots, T$}

				\STATE Each agent $i \in [n]$ (randomly) selects a strategy $p^t_i \in \mathcal{P}_i$ and suffers cost \[C_i(p^t_i, p^t_{-i}):= \sum_{e \in p_i^{t}} c_e^t(p_i^t,p_{-i}^t)\]
    \STATE Each agent $i \in [n]$ \textbf{learns only} the congestion costs $c_e(p_i^t,p^t_{-i})$ of her selected resources $e \in p_i^t$.
    \smallskip
				\ENDFOR
   \end{algorithmic}
   
		\end{protocol}
  
It is not clear how a selfish agent $i$ should update her strategy to minimize her overall congestion cost since the loads depend on strategies of the other agents that can arbitrarily change over time. As a result, agent $i$ tries to minimize her experienced cost under the worst-case assumption that the cost of the resources $c_e^t$ are selected by a \textit{malicious adversary}. We refer to the latter online learning setting as \textit{Online Resource Selection},  described in Protocol~\ref{def:path_selection_protocol}.

\begin{protocol}[h]
			\caption{Online Resource Selection}
			\label{def:path_selection_protocol}
			\begin{algorithmic}[1]
				\FOR{each round $t = 1, \ldots ,T$}
                    \smallskip
				\STATE Agent $i$ selects a prob. distribution $\pi_i^t \in \Delta(\mathcal{P}_i)$.
    \smallskip
    \STATE An adversary selects a cost function $c^t: E \mapsto \mathbb{R}_{\geq 0}$.
    \smallskip
 \STATE Agent $i$ samples a path $p^t_i \sim \pi_i^t$ and suffers cost
        \[C_i(p^t_i, c^t) := \sum_{e \in p_i^t} c_e^t.\]
        \item Agent $i$ learns the costs $c_e^t$ for all resources $e \in p_i^t$ and updates $\pi_i^{t+1} \in \Delta(\mathcal{P}_i)$.
				\ENDFOR
   \end{algorithmic}
   
		\end{protocol}
  
A \textit{semi-bandit online learning algorithm} $\mathcal{A}$ for the Online Resource Selection selects a strategy $p_i^t \in \mathcal{P}_i$ based on the observed costs of selected resources in rounds before $t$. The quality of an algorithm $\mathcal{A}$ is measured through the notion of \textit{regret} capturing the overall cost of algorithm $\mathcal{A}$ with respect to the overall cost of the \textit{best strategy in hindsight}. 

\begin{definition}\label{d:no-regret}
The regret of an online learning algorithm $\mathcal{A}$ is defined as $\mathcal{R}_{\mathcal{A}}(T):=$
\[ \max_{c^1,\ldots,c^T}\left[\sum^T_{t=1} \E_{\pi^t_i}\left[C_i(p^t_i,c^t)\right] -  \min_{p^\star_i\in\mathcal{P}} \sum^T_{t=1} C_i(p^\star_i,c^t)\right]\]
In case $\mathcal{R}_{\mathcal{A}}(T) = o(T)$, i.e., it is sublinear in $T$, the algorithm $\mathcal{A}$ is called \textit{no-regret}.
\end{definition}

In the context of congestion games, if agent $i$ adopts a no-regret algorithm $\mathcal{A}$, her time-averaged cost approaches the cost of the optimal path with rate $\mathcal{R}_{\mathcal{A}}(T)/T \rightarrow 0$ no matter what the strategies of the other agents are.  

We conclude the section with the main result of our work.

\textbf{Main Result} \textit{There exists a no-regret semi-bandit online learning algorithm that admits $\mathcal{R}_{\mathcal{A}}(T) = \mathcal{O}(m^2 T^{4/5})$ regret for Online Resource Selection (Algorithm~\ref{alg:alg1}). Moreover if Algorithm~\ref{alg:alg1}
is adopted by all agents, then the agents converge to an $\epsilon$-Mixed NE after $\mathcal{O}(n^{6.5}m^{7}/\epsilon^5)$ time-steps.}

In Section~\ref{s:algo_presentation} we present our proposed semi-bandit online learning algorithm, called \textit{Semi-Bandit Gradient Descent with Caratheodory Exploration} (Algorithm~\ref{alg:alg1}). In Section~\ref{s:algo_presentation} we also present its regret guarantees (Theorem~\ref{t:regret}) and its convergence properties (Theorem~\ref{thm:convergence_to_nash} and Corollary~\ref{c:markov}). In Section~\ref{sec:first_thm_sketch} we provide the main steps for establishing the no-regret properties of Algorithm~\ref{alg:alg1} while in Section~\ref{sec:second_thm_sketch} we present the main ideas of the convergence proof. Finally in Section~\ref{sec:experiments} we experimentally evaluate our algorithm in network congestion games.

\section{Semi-Bandit Gradient Descent with Caratheodory Exploration}\label{s:algo_presentation}
In this section, we present our algorithm called \textit{Bandit Gradient Descent with Caratheodory Exploration} for the Online Path Selection Problem.

\subsection{Exploring via Caratheodory Decomposition}
{To avoid the exponentially large strategy space, we re-parametrize the problem using a fractional selection $x^t_e$ of the edges which represents the probability edge $e$ is selected at round $t$, i.e., $x^t_e = \P{e \in p^t}$.} The major challenge now is to ensure that each resource is sufficiently explored. We resolve the exploration problem by introducing the notion of \textit{Bounded-Away Description Polytope} (Definition~\ref{l:path_polytope2}) which guarantees that the selection probability of each useful resource is greater than an exploration parameter $\mu > 0$. The only similar idea in the literature we are aware of comes from \cite{chen2021impossible} that used it in the context of online predictions with experts advice.

We proceed with some necessary definitions and two important characterization lemmas.
\begin{definition}\label{d:active_edges} The set of active resources for agent $i \in [n]$ is the set
$E_i := \{e \in E:~e \in p_i \text{ for some } p_i \in \mathcal{P}_i \}$.
\end{definition}
\begin{lemma}
\label{lemma:active}
Given the implicit description $\mathcal{X}_i$ of the strategy space $\mathcal{P}_i$, the set of active resources $E_i$ can be computed in polynomial-time.  
 \end{lemma}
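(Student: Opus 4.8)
The plan is to reduce the computation of $E_i$ to solving $m$ linear programs over the implicit description $\mathcal{X}_i = \{x \in [0,1]^m : A_i x \le b_i\}$, one program per resource. The key observation is that membership in $E_i$ can be tested by a single linear objective.

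First I would establish the following linear-programming characterization of activity: for every $e \in E$,
\[
e \in E_i \quad\Longleftrightarrow\quad \max_{x \in \mathcal{X}_i} x_e = 1 .
\]
To justify it, recall that $\mathcal{X}_i = \mathrm{conv}(\hat{\mathcal{P}}_i)$, where $\hat{\mathcal{P}}_i$ consists of the $\{0,1\}^m$ indicator vectors $x_{p_i}$ with $p_i \in \mathcal{P}_i$; hence the extreme points of $\mathcal{X}_i$ are exactly these indicator vectors. Since $x \mapsto x_e$ is linear, its maximum over the polytope $\mathcal{X}_i$ is attained at an extreme point, so $\max_{x \in \mathcal{X}_i} x_e = \max_{p_i \in \mathcal{P}_i}(x_{p_i})_e \in \{0,1\}$. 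This maximum equals $1$ precisely when some strategy $p_i \in \mathcal{P}_i$ satisfies $e \in p_i$, i.e.\ when $e \in E_i$, and equals $0$ otherwise. Note that this already shows the optimal value is always exactly $0$ or $1$.

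The algorithm then follows immediately: for each $e \in E$, solve the linear program $\max\{\, x_e : A_i x \le b_i,\ 0 \le x \le 1 \,\}$ and declare $e$ active iff the optimal value is positive (equivalently, equal to one). Each such program has $m$ variables and $r_i + 2m$ constraints, so it is solvable in time polynomial in $m$ and $r_i$ by any polynomial-time linear-programming algorithm (e.g.\ the ellipsoid method or an interior-point method). Running the procedure over all $m$ resources produces $E_i$ in time polynomial in $m$ and $r_i$, as claimed.

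I do not expect a serious obstacle here; the only delicacy is numerical robustness of the positivity test. This is harmless because the extreme-point argument guarantees the optimum is \emph{exactly} $0$ or $1$, so thresholding the reported optimal value at $1/2$ correctly classifies every resource even when the linear program is solved only approximately. I would state this thresholding step explicitly to keep the correctness argument clean.
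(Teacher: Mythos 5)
Your proposal is correct and follows essentially the same route as the paper: the paper tests, for each resource $e$, feasibility of the polytope $\{x \in \mathcal{X}_i : x_e = 1\}$, which is equivalent to your test of whether $\max_{x \in \mathcal{X}_i} x_e = 1$, and both correctness arguments rest on $\mathcal{X}_i$ being the convex hull of the $\{0,1\}$ indicator vectors of strategies. Your extra remark on thresholding the optimal value at $1/2$ is a harmless refinement not present in the paper.
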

\begin{definition}\label{l:path_polytope2} The $\mu$-\textit{Bounded-Away Description Polytope} for an exploration parameter $\mu >0$ is defined as,
\begin{align*}
    \mathcal{X}^{\mu}_i \triangleq \bigg\{x \in  \mathcal{X}_i :~x_e \geq \mu \quad \forall e \in E_i \bigg\}
\end{align*} 
\end{definition}
\begin{restatable}{lemma}{lemmanonempty} \label{lemma:non_empty}
The set $\mathcal{X}^{\mu}_i$ is non-empty for all $\mu \leq 1 / |E_i|$.
\end{restatable}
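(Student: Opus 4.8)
The plan is to prove non-emptiness by exhibiting a single explicit point of $\mathcal{X}^\mu_i$. The guiding intuition is that, by Definition~\ref{d:active_edges}, every active resource is used by at least one feasible strategy, so averaging the indicator vectors of such strategies should deposit positive mass on each active coordinate simultaneously. First I would, for each active resource $e \in E_i$, fix a strategy $p_i^{(e)} \in \mathcal{P}_i$ with $e \in p_i^{(e)}$; this choice is available precisely by the definition of $E_i$. Then I would form the candidate point
\[
\bar{x} \triangleq \frac{1}{|E_i|} \sum_{e \in E_i} x_{p_i^{(e)}},
\]
the uniform average of the corresponding $\{0,1\}^m$ indicator vectors drawn from $\hat{\mathcal{P}}_i$.

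Next I would verify the two defining conditions of $\mathcal{X}^\mu_i$ in turn. For membership in $\mathcal{X}_i$: the point $\bar{x}$ is a convex combination of the vectors $x_{p_i^{(e)}} \in \hat{\mathcal{P}}_i$ with uniform weights summing to one, so $\bar{x} \in \mathrm{conv}(\hat{\mathcal{P}}_i) = \mathcal{X}_i$ directly from the definition of the description polytope. For the bounded-away condition: fix an arbitrary active resource $f \in E_i$ and examine coordinate $f$ of $\bar{x}$. Every summand $(x_{p_i^{(e)}})_f$ is nonnegative, and the single summand indexed by $e = f$ contributes $(x_{p_i^{(f)}})_f = 1$ because $f \in p_i^{(f)}$ by construction. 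Hence $\bar{x}_f \geq \frac{1}{|E_i|} \geq \mu$ under the hypothesis $\mu \leq 1/|E_i|$, and since $f$ was arbitrary this holds for all active resources. Therefore $\bar{x} \in \mathcal{X}^\mu_i$, establishing the claim.

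I do not anticipate a genuine obstacle here: the statement is a feasibility claim settled by a one-line construction. The only point worth stating carefully is that the chosen $\bar{x}$ meets the lower bound at every active coordinate \emph{at once}, which is immediate because, for each fixed $f$, the diagonal term $e = f$ already supplies weight $1/|E_i|$ and the remaining nonnegative terms cannot reduce it. The threshold $\mu = 1/|E_i|$ is exactly the mass that the uniform average can guarantee on each active edge, so the hypothesis on $\mu$ is precisely what the argument needs and nothing sharper is required.
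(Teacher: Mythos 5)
Your construction is exactly the paper's: average the indicator vectors of one strategy per active resource, note the average lies in $\mathcal{X}_i$ as a convex combination, and observe that the diagonal term gives each active coordinate mass at least $1/|E_i| \geq \mu$. The argument is correct and, if anything, stated more cleanly than the paper's version.
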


Notice that if a point $x_i \in \mathcal{X}^{\mu}_i$ then $x_i \in \mathcal{X}_i$. Since $\mathcal{X}_i = \mathrm{conv}(\mathcal{\hat{P}}_i)$ then any point  $x_i \in \mathcal{X}^{\mu}_i$ can be decomposed to a probability distribution over strategies $p_i \in \mathcal{P}_i$.
\begin{theorem}[Carathéodory Decomposition]\label{l:caratheodory}
For any point $x\in \mathcal{X}_i$ there exists a probability distribution $\pi_{x} \in \Delta(\mathcal{P}_i)$ with support at most $m+1$ strategies $p_i$ in $\mathcal{P}_i$ such that for all edges $e \in E$,
\[x_e = \sum_{p_i:e \in p_i} \mathrm{Pr}_{\pi_{x}}\left[p_i \text{ is selected} \right].\]
Such a distribution (it may not be unique) $\pi_{x}$ is called a \textit{Carathéodory decomposition} of $x$.
\end{theorem}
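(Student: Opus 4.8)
The plan is to invoke the classical Carathéodory theorem directly, since the statement is precisely its specialization to the polytope $\mathcal{X}_i = \mathrm{conv}(\mathcal{\hat{P}}_i)$, whose generating set $\mathcal{\hat{P}}_i$ sits inside $\{0,1\}^m \subseteq \mathbb{R}^m$. The whole content is carried by that classical result; the only genuine task is the bookkeeping that translates a convex combination of $\{0,1\}$-encodings into the claimed edge-selection probabilities.

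First I would recall that by the standard Carathéodory theorem, any point in the convex hull of a subset $S$ of $\mathbb{R}^m$ can be written as a convex combination of at most $m+1$ points of $S$. Applying this with $S = \mathcal{\hat{P}}_i$ and the given $x \in \mathcal{X}_i$, I obtain strategies $p_i^{(1)}, \ldots, p_i^{(k)} \in \mathcal{P}_i$ with $k \leq m+1$ and coefficients $\lambda_1, \ldots, \lambda_k \geq 0$ summing to one such that $x = \sum_{j=1}^k \lambda_j\, x_{p_i^{(j)}}$. A point worth stating explicitly is that the generating vectors supplied by Carathéodory's theorem are genuine strategy indicators in $\mathcal{\hat{P}}_i$, hence correspond to actual $p_i \in \mathcal{P}_i$; this is immediate because $\mathcal{X}_i$ is by definition the convex hull of exactly that finite set.

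Next I would define $\pi_x \in \Delta(\mathcal{P}_i)$ to be the distribution placing mass $\lambda_j$ on the strategy $p_i^{(j)}$ (summing masses if two of the $p_i^{(j)}$ happen to coincide); by construction its support has size at most $m+1$. It then remains to verify the per-edge identity. Fixing an edge $e \in E$ and reading off the $e$-th coordinate of the convex-combination equation, while using that $(x_{p_i^{(j)}})_e = \mathds{1}[e \in p_i^{(j)}]$ by the definition of the indicator encoding, I get
\[
x_e = \sum_{j=1}^k \lambda_j\, \mathds{1}[e \in p_i^{(j)}] = \sum_{j:\, e \in p_i^{(j)}} \lambda_j = \sum_{p_i:\, e \in p_i} \mathrm{Pr}_{\pi_x}\left[p_i \text{ is selected}\right],
\]
which is exactly the claimed statement.

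There is no substantive obstacle here: the heavy lifting is done entirely by the classical Carathéodory theorem, and everything else is a direct unwinding of the definitions of $\mathcal{X}_i$, $\mathcal{\hat{P}}_i$, and the $\{0,1\}^m$ encoding $x_{p_i}$. If anything counts as the ``hard'' part, it is merely making sure the dimension count gives $m+1$ rather than $m+2$ (which it does, since the ambient space of the encodings is $\mathbb{R}^m$) and that collapsing repeated support points only decreases the support size.
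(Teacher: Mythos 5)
Your proposal is correct and is exactly the argument the paper implicitly relies on: the paper states this result without a separate proof, treating it as the classical Carathéodory theorem applied to $\mathcal{X}_i = \mathrm{conv}(\mathcal{\hat{P}}_i) \subseteq \mathbb{R}^m$, which yields a convex combination of at most $m+1$ indicator vectors whose $e$-th coordinates give the claimed selection probabilities. Your write-up simply makes that standard specialization explicit, so there is nothing to add.
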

A Carathéodory decomposition of a point $x \in \mathcal{X}^{\mu}_i$ can be computed in polynomial-time through the \textit{decomposition algorithm} described in \cite{GLS88}.
\begin{theorem}\cite{GLS88}
Let a polytope $\mathcal{X}:=\{x\in \mathbb{R}^m:~A\cdot x \leq b\}$ with $A \in \mathbb{R}^{r \times m}$ and $b \in \mathbb{R}^r$. The Carathéodory Decomposition of any point $x \in \mathcal{X}$ can be computed in polynomial-time with respect to $r$  and $m$.
\end{theorem}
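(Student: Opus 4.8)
The plan is to produce the decomposition by an iterative \emph{dimension-reduction} procedure that only ever calls a linear-program (LP) solver and performs elementary ratio tests, both of which run in time polynomial in $r$ and $m$ on the explicit data $(A,b)$. This single algorithm simultaneously certifies the Carathéodory bound of Theorem~\ref{l:caratheodory} and witnesses its polynomial-time computability.

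First I would compute a single vertex of $\mathcal{X}$ by minimizing a generic linear functional $c^\top z$ over $\{z : A z \le b\}$; since $\mathcal{X}$ is a polytope the optimum is attained at an extreme point, and the ellipsoid method (or any polynomial LP algorithm) returns such a vertex $v$ in time $\mathrm{poly}(r,m)$. Next, starting from the input point $x$, I would run a line search along the ray emanating from $v$ through $x$: let $t^\star \ge 1$ be the largest scalar with $v + t^\star (x-v) \in \mathcal{X}$. This $t^\star$ is obtained by a min-ratio test over the rows of $A$ — for each row $a_k$ with $a_k^\top (x-v) > 0$ the constraint $a_k^\top z \le b_k$ caps $t$ at $(b_k - a_k^\top v)/(a_k^\top (x-v))$, and $t^\star$ is the minimum of these — which costs $O(rm)$ arithmetic operations. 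Writing $y := v + t^\star(x-v)$, we obtain the convex combination $x = (1 - 1/t^\star)\, v + (1/t^\star)\, y$, and by maximality of $t^\star$ at least one previously-slack inequality is now tight at $y$, so $y$ lies on a face of strictly smaller affine dimension.

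I would then recurse on $y$ inside that face, obtained by promoting the newly tight inequality to an equality and appending it to $(A,b)$; each recursion strictly decreases the affine dimension, so after at most $m$ rounds the residual point coincides with a vertex and the recursion terminates. Unrolling the recursion telescopes the per-step weights into a single convex combination of the at most $m+1$ collected vertices, which is exactly the bound guaranteed by Theorem~\ref{l:caratheodory}. Since the procedure issues $O(m)$ LP calls and $O(m)$ ratio tests, each polynomial in $r$ and $m$, the overall running time is $\mathrm{poly}(r,m)$.

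The main obstacle is not the geometric idea but the bit-complexity bookkeeping required to make ``polynomial time'' rigorous in the Turing model: one must argue that the vertices returned by the LP solver and the rational weights $1/t^\star$ accumulated along the recursion all have encoding lengths bounded polynomially in the input size, so that no intermediate quantity blows up. Verifying that the line search genuinely lands on a new facet at every step — so that the affine dimension strictly drops and the $m+1$ bound is met — is the other point needing care. Both are precisely the issues handled by the ellipsoid-method analysis of \cite{GLS88}, which is why I would ultimately invoke their guarantees for the LP subroutine rather than re-derive them.
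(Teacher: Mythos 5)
Your proposal is correct: the vertex-plus-line-search recursion, with the min-ratio test forcing a new tight constraint and hence a strict drop in the dimension of the ambient face, is exactly the standard argument behind the cited result of \cite{GLS88}, and the two issues you flag (bit-length control of the intermediate points and weights, and verifying that each step really lands on a lower-dimensional face) are indeed the only places where care is needed. Note, however, that the paper does not prove this general statement at all --- it is invoked as a black box from \cite{GLS88} --- so there is no in-paper proof to compare against; the only decomposition the paper actually proves is the special case of path polytopes (Algorithm~\ref{alg:charateodory} and Lemma~\ref{l:decompose_lemma}), which uses a different, purely combinatorial peeling argument: repeatedly extract an $(s_i,t_i)$-path through the minimum positive coordinate and subtract that weight, zeroing at least one coordinate per round. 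Your general-polytope argument buys applicability to arbitrary implicit descriptions $\{x : A x \le b\}$ at the cost of $O(m)$ LP calls and the bit-complexity bookkeeping, whereas the paper's path-polytope routine avoids LPs entirely and runs in $\mathcal{O}(|V||E|+|E|^2)$ elementary operations. Two small points worth making explicit if you were to write this up: (i) the vertices you collect in later rounds are vertices of faces of $\mathcal{X}$, and you should note that a vertex of a face is a vertex of $\mathcal{X}$ itself, since the decomposition must be over extreme points of the original polytope (the pure strategies); and (ii) an LP solver does not automatically return a vertex of the optimal face, so the first step needs the standard rounding-to-a-vertex post-processing, which is again part of the \cite{GLS88} machinery you invoke.
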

For the important special case of \textit{path polytopes} described in Section~\ref{sec:preliminaries}, a point $x \in \mathcal{X}_i^\mu$ can be decomposed to a probability distribution over $(s_i,t_i)$-paths with a simple and efficient algorithm outlined in \Cref{alg:charateodory}.
\begin{algorithm}[h]
			\caption{Efficient Computation of Charatheodory decomposition for Path Polytopes}
			\label{alg:charateodory}
			\begin{algorithmic}[1]
				\STATE {\bfseries Input:} A point $x \in \mathcal{X}^{\mu}_i$
                  \smallskip
                  \STATE $\mathrm{Res} \leftarrow \varnothing$
				\WHILE{$\sum_{e \in \mathrm{Out}(s_i)}x_e > 0$}
                    \smallskip
                    \STATE Let $A \leftarrow E \cap \{e\in E:~x_e >0\}$
                    \smallskip
			    \STATE Let $e_{\min} \leftarrow \argmin_{e \in A} x_e$ and $x_{\min} \leftarrow x_{e_{\min}}$.	    \smallskip
			    \STATE $\hat{p}_i \leftarrow$ An $(s_i,t_i)$-path of $G(V,A)$ with $e_{\text{min}} \in \hat{p}_i$.
                \smallskip
                \STATE $x_e \leftarrow x_e - x_{\text{min}}$ for all $e \in \hat{p}_i$.
                \smallskip
                \STATE $\mathrm{Res} \leftarrow \mathrm{Res} \cup \{\left(\hat{p}_i,x_{\text{min}}\right)\}$.
 
				\ENDWHILE
				\STATE {\bfseries return} $\mathrm{Res}$
			\end{algorithmic}
		\end{algorithm}

\begin{restatable}{lemma}{lemmadecompose}
\label{l:decompose_lemma}
Algorithm~\ref{alg:charateodory} requires $\mathcal{O}\left(|V||E|+|E|^2\right)$ steps to give a Caratheodory Decomposition for path polytopes.
\end{restatable}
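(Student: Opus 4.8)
The plan is to bound the total running time as (number of while-loop iterations) $\times$ (cost per iteration), so I would first pin down each factor separately. Throughout, I would maintain the invariant that the current vector $x$ stays a nonnegative $(s_i,t_i)$-flow in $\mathcal{X}_i$: nonnegativity holds because line~7 subtracts $x_{\min}=\min_{e\in A}x_e$ only along the edges of $\hat{p}_i$, each of which carries weight at least $x_{\min}$; and the flow-conservation constraints defining $\mathcal{X}_i$ are preserved because subtracting a constant amount of flow along a single $(s_i,t_i)$-path lowers the out-flow of $s_i$ and the in-flow of $t_i$ by the same amount while leaving every internal node balanced.

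To bound the number of iterations, I would observe that the chosen edge $e_{\min}$ always lies on $\hat{p}_i$, so after the update $x_{e_{\min}}$ becomes exactly $0$ and leaves the support set $A$. Since no coordinate of $x$ ever increases, the support $A$ strictly shrinks by at least one edge per pass, giving at most $|E|$ iterations before $\sum_{e\in\mathrm{Out}(s_i)}x_e=0$ and the loop halts.

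The hard part will be justifying line~6, namely that whenever $\sum_{e\in\mathrm{Out}(s_i)}x_e>0$ a simple $(s_i,t_i)$-path through $e_{\min}=(u,v)$ genuinely exists inside the positive-weight subgraph $G(V,A)$, and that it can be extracted cheaply. Here I would invoke the flow/DAG structure underlying the path polytope of \Cref{sec:preliminaries}: since the edge carries positive flow, $v$ has positive in-flow, hence (by conservation, unless $v=t_i$) positive out-flow, so following positive edges forward yields a walk that, in the acyclic graph, cannot repeat a vertex and must terminate at $t_i$; symmetrically, tracing positive in-edges backward from $u$ reaches $s_i$. Because $u$ precedes $v$ in topological order, the forward $v\rightsquigarrow t_i$ and backward $s_i\rightsquigarrow u$ segments are automatically vertex-disjoint, so their concatenation with $(u,v)$ is a simple path $\hat{p}_i$; each of the two searches is a single graph traversal of cost $\mathcal{O}(|V|+|E|)$.

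Finally I would tally the per-iteration cost: recomputing $A$ and scanning for $e_{\min}$ take $\mathcal{O}(|E|)$, the path extraction takes $\mathcal{O}(|V|+|E|)$, and updating $x$ along $\hat{p}_i$ touches at most $|V|-1$ edges, i.e.\ $\mathcal{O}(|V|)$. The dominant term is thus $\mathcal{O}(|V|+|E|)$ per iteration, which multiplied by the $\mathcal{O}(|E|)$ iteration bound yields the claimed $\mathcal{O}(|V||E|+|E|^2)$ total. I expect the only genuine subtlety to be this existence-and-simplicity argument for the extracted path; the counting of iterations and of elementary per-iteration operations is routine.
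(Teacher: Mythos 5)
Your proposal is correct, and its accounting (at most $|E|$ iterations of the while loop since $x_{e_{\min}}$ is zeroed out and no coordinate ever increases, times $\mathcal{O}(|V|+|E|)$ per iteration for recomputing $A$, finding $e_{\min}$, extracting the path, and updating $x$) matches the paper's proof exactly. The one place where you diverge is the step you yourself flagged as the crux: the existence of a simple $(s_i,t_i)$-path through $e_{\min}$ in the positive-support subgraph. The paper delegates this to Lemma~\ref{lemma:path_existence}, which argues via Carath\'eodory's theorem that any point of $\mathcal{X}^\mu$ decomposes as a convex combination of simple paths, so some path with positive weight must contain $e_{\min}$. You instead argue directly from flow conservation and the DAG structure: forward from the head of $e_{\min}$ positive in-flow forces positive out-flow until $t_i$ is reached, backward symmetrically to $s_i$, and topological order makes the two segments vertex-disjoint. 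Your route has a concrete advantage: after the first subtraction the residual $x$ is no longer a point of $\mathcal{X}_i$ (its flow value has dropped below $1$), so the paper's lemma as stated does not literally apply to later iterations without an implicit rescaling argument, whereas your invariant --- that $x$ remains a nonnegative conservation-respecting flow --- is exactly what survives the update and is all your traversal argument needs. The Carath\'eodory route, on the other hand, is shorter and generalizes beyond path polytopes. Both are valid; yours is the more self-contained and, for the iterated setting, the more airtight of the two.
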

In our experimental evaluations for network congestion games, presented in Section~\ref{sec:experiments}, we use Algorithm~\ref{alg:charateodory} to more efficiently implement \textit{Online Gradient Descent with Caratheodory Exploration} that we subsequently present.  
\subsection{Our Algorithm and Formal Guarantees}
In this section we present our semi-bandit online learning algorithm called 
\textit{Online Gradient Descent with Caratheodory Exploration} (Algorithm~\ref{alg:alg1}) for Online Resource Selection.

\begin{algorithm}[h]
			\caption{Online Gradient Descent with Caratheodory Exploration for Agent $i$}
			\label{alg:alg1}\nonumber
			\begin{algorithmic}[1]
                    \STATE $\mu_1 \leftarrow 1/|E_i|$
                    \smallskip
				\STATE Agent $i$ selects an arbitrary $x^1_i \in \mathcal{X}^{\mu_1}_i$.
    \smallskip
				\FOR{each round $t=1,\ldots, T$}
       \smallskip
       \STATE Agent $i$ computes a Caratheodory decomposition $\pi^t_i \in \Delta(\mathcal{P}_i)$ for $x^t_i  \in \mathcal{X}^{\mu_t}_i$.	
       \smallskip
       \STATE Agent $i$ samples a strategy $p^t_i \sim \pi^t_i$ and suffers cost, \[C_i^t(p_i^t,c^t):=\sum_{e\in p^t_i}c_e^t.\] 
       \textcolor{blue}{/* $c_e^t$ is the cost of edge $e$ with load the number of agents that chose $e$ at time $t$. */}
\smallskip       
         \STATE Agent $i$ sets $\hat{c}^t_e \leftarrow c^t_e\cdot \mathds{1}\left[e \in p_i^t\right]/x^t_e$ for all $e \in E$.
\smallskip
	\STATE Agent~$i$ updates $x^{t+1}_i \in \mathcal{X}^{\mu}_i$ as,
       \[x_i^{t+1} = \Pi_{\mathcal{X}^{\mu_{t+1}}_i}\left[x_i^t - \gamma_t \cdot \hat{c}^t \right],\]
        where  $\gamma_t \leftarrow t^{-3/5}$ and $\mu_t \leftarrow \min(1/m_i, t^{-1/5}).$
				\ENDFOR
			\end{algorithmic}
		\end{algorithm}

At Step~$4$, $\mathrm{OGD}-\mathrm{CE}$ performs a Caratheodory Decomposition to convert the fractional point $x_i^t \in \mathcal{X}^{\mu_t}_i$ into a probability distribution $\pi^t_i$ over pure strategies $p_i \in \mathcal{P}_i$. The latter guarantees that the experienced cost equals the fractional congestion cost, i.e.,
\begin{equation}\label{eq:imp}
\E \left[ \sum_{e \in p^t} c_e^t~|~ x_i^t \right] =  \innerprod{c^t}{x_i^t}.
\end{equation}
At Step~$7$, $\mathrm{OGD}-\mathrm{CE}$ runs a step of Online Gradient Descent to the \textit{time-expanding polytope} $\mathcal{X}^{\mu_t}_i$ that approaches $\mathcal{X}_i$ as $t \rightarrow \infty$. The latter is crucial as it gives the following:
\begin{lemma} \label{lemma:properties_estimator_online} The estimator $\hat{c}^t_e = c^t_e\cdot \mathds{1}\left[e \in p^t\right]/x^t_e$ satisfies
\begin{enumerate}
\item $\E\left[\hat{c}^t_e \right] = c^t_e$ for $e \in E_i$~~~~~~~~~ \textit{(Unbiasness)}.
\item $|\hat{c}^t_e| \leq c_{max}/\mu_t$ for $e \in E_i$~~~\textit{(Boundness)}.
\end{enumerate}
\end{lemma}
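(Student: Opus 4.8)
The plan is to derive both claims directly from two structural facts already built into the algorithm: the Carathéodory decomposition identity and the exploration floor of the Bounded-Away Description Polytope. Throughout I would fix the iterate $x^t_i \in \mathcal{X}^{\mu_t}_i$ and condition on it. Since the adversary commits to $c^t$ before agent $i$ draws $p^t_i \sim \pi^t_i$ (Protocol~\ref{def:path_selection_protocol}), the vector $c^t$ is deterministic given $x^t_i$, and the only randomness in $\hat{c}^t_e$ is the sampling of $p^t_i$. The single identity that drives everything is the following: because $\pi^t_i$ is a Carathéodory decomposition of $x^t_i$, Theorem~\ref{l:caratheodory} gives, for every $e \in E$,
\[
\P{e \in p^t_i \mid x^t_i} = \sum_{p_i : e \in p_i} \mathrm{Pr}_{\pi^t_i}\left[p_i \text{ selected}\right] = x^t_e .
\]
In words, the fractional coordinate $x^t_e$ equals \emph{exactly} the marginal probability that resource $e$ is used at round $t$; this is the whole point of decomposing the fractional point rather than playing it directly.

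For unbiasedness, I would take the conditional expectation of the estimator and use that $c^t_e$ is constant given $x^t_i$. For $e \in E_i$ we have $x^t_e \ge \mu_t > 0$, so the quotient is well defined, and
\[
\E\!\left[\hat{c}^t_e \mid x^t_i\right] = \frac{c^t_e}{x^t_e}\,\E\!\left[\mathds{1}[e \in p^t_i] \mid x^t_i\right] = \frac{c^t_e}{x^t_e}\,\P{e \in p^t_i \mid x^t_i} = \frac{c^t_e}{x^t_e}\,x^t_e = c^t_e .
\]
Taking the outer expectation over the randomness in $x^t_i$ then yields $\E[\hat{c}^t_e] = c^t_e$, which is the first claim.

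For boundedness, I would combine three elementary estimates valid for $e \in E_i$: the indicator satisfies $\mathds{1}[e \in p^t_i] \le 1$; the per-round cost is bounded by $c^t_e \le c_{\max}$, since every load is at most $n$ and each $c_e$ is non-decreasing; and the denominator obeys $x^t_e \ge \mu_t$, because $x^t_i \in \mathcal{X}^{\mu_t}_i$ forces $x^t_e \ge \mu_t$ on all active resources by Definition~\ref{l:path_polytope2}. Multiplying these gives $|\hat{c}^t_e| \le c_{\max}/\mu_t$ directly.

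I do not expect a serious obstacle here; the lemma is essentially bookkeeping once the decomposition identity is available. The only points requiring care are (i) confirming that the denominator is strictly positive so that the estimator is well defined — which is precisely why the exploration floor $\mu_t$ is baked into the feasible set, and why both claims are stated only for active resources $e \in E_i$ — and (ii) handling the conditioning correctly, namely verifying that $c^t$ is measurable with respect to the information available before $p^t_i$ is sampled, so that it factors out of the conditional expectation. Both are immediate from the ordering of events in Protocol~\ref{def:path_selection_protocol} together with Definition~\ref{l:path_polytope2}.
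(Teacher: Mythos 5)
Your proof is correct and follows essentially the same route as the paper's: unbiasedness via the Carathéodory identity $x^t_e = \E\left[\mathds{1}[e \in p^t_i]\right]$ with the cost factored out of the expectation, and boundedness via $c^t_e \le c_{\max}$ together with the floor $x^t_e \ge \mu_t$ on active resources. The extra care you take with conditioning on $x^t_i$ and the measurability of $c^t$ is a slightly more explicit version of what the paper leaves implicit, but the argument is the same.
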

\begin{remark}
Projecting to the time-expanding polytope $\mathcal{X}^{\mu_t}_i$ and not to $\mathcal{X}_i$ is crucial since in the latter case we cannot control the variance of the estimator, as $x_e^t$ can go to $0$ arbitrarily fast. On the other hand, it is crucial to 
compute a Caratheodory Decomposition with respect to $\mathcal{X}_i$ and not with respect to $\mathcal{X}^{\mu_t}_i$ since the extreme points of 
$\mathcal{X}^{\mu_t}_i$ do not correspond to pure strategies $p_i \in \mathcal{P}_i$.
\end{remark}
 We conclude the section by presenting the formal guarantees of Algorithm~\ref{alg:alg1}. In Theorem~\ref{t:regret} we establish the \textit{no-regret} property of Algorithm~\ref{alg:alg1}. In Theorem~\ref{thm:convergence_to_nash} and Corollary~\ref{c:markov} we present its convergence guarantees.

\begin{restatable}{theorem}
{thmnoregret}\label{t:regret}
Let $p_i^1,\ldots,p^T_i \in \mathcal{P}_i$ the sequence of strategies produced by Algorithm~\ref{alg:alg1} given as input the costs $c^1,\ldots,c^T$ with $\norm{c^t}_\infty \leq c_{\max}$. Then with probability $1- \delta$, 
\[\sum_{t=1}^T\sum_{e \in p^t_i} c_e^t - \min_{p_i^\star \in \mathcal{P}_i}\sum_{e \in p^\star_i} c_e^t \leq \mathcal{O}\left(mc^2_{\max}T^{4/5} \log(1/\delta) \right)\]
\end{restatable}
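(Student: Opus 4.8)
The plan is to pass from the realized regret to the \emph{fractional regret} $\sum_{t}\innerprod{c^t}{x_i^t-x^\star}$, where $x^\star\in\hat{\mathcal P}_i$ is the indicator vector of the best fixed strategy $p_i^\star$, so that $\min_{p_i^\star}\sum_t\sum_{e\in p_i^\star}c_e^t=\sum_t\innerprod{c^t}{x^\star}$ by linearity. By the Caratheodory step and \eqref{eq:imp}, the conditional expected per-round cost of the algorithm equals $\innerprod{c^t}{x_i^t}$, so $\sum_t\left(\sum_{e\in p_i^t}c_e^t-\innerprod{c^t}{x_i^t}\right)$ is a martingale with increments bounded by $c_{\max}m$; an Azuma--Hoeffding bound controls it by $\mathcal O(c_{\max}m\sqrt{T\log(1/\delta)})$, which is lower order than $T^{4/5}$.

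For the fractional regret I would introduce a \emph{time-varying feasible comparator}. Fix any $z\in\mathcal X_i^{1/m_i}$ (non-empty by Lemma~\ref{lemma:non_empty}), so $z_e\ge 1/m_i$ on active edges $E_i$ (with $m_i=|E_i|\le m$), and set $u_t:=(1-\lambda_t)x^\star+\lambda_t z$ with $\lambda_t:=m_i\mu_t$. This choice guarantees $(u_t)_e\ge\mu_t$ for every active $e$, hence $u_t\in\mathcal X_i^{\mu_t}\subseteq\mathcal X_i^{\mu_{t+1}}$ (the sets expand since $\mu_t$ is non-increasing), so $u_t$ is admissible in the projection onto $\mathcal X_i^{\mu_{t+1}}$. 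I would then split
\[ \innerprod{c^t}{x_i^t-x^\star}=\underbrace{\innerprod{\hat c^t}{x_i^t-u_t}}_{(A)}+\underbrace{\innerprod{c^t}{u_t-x^\star}}_{(B)}+\underbrace{\innerprod{c^t-\hat c^t}{x_i^t-u_t}}_{(C)}. \]

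Term $(A)$ is handled by the \emph{pathwise} projected-gradient inequality: since $u_t$ is feasible, $\innerprod{\hat c^t}{x_i^t-u_t}\le(\norm{x_i^t-u_t}^2-\norm{x_i^{t+1}-u_t}^2)/(2\gamma_t)+\tfrac{\gamma_t}{2}\norm{\hat c^t}^2$. Summing, I would telescope the quadratic terms by Abel summation, producing a diameter term $\mathcal O(\mathrm{diam}(\mathcal X_i)^2/\gamma_T)=\mathcal O(mT^{3/5})$ and a comparator-drift term $\sum_t\norm{u_t-u_{t-1}}/\gamma_t$ that is lower order because $\lambda_t$ changes slowly. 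The gradient term uses the pathwise bound $\norm{\hat c^t}^2\le m_ic_{\max}^2/\mu_t^2$ (at most $m_i$ nonzero coordinates, each $\le c_{\max}/\mu_t$ by the boundedness part of Lemma~\ref{lemma:properties_estimator_online}); with $\gamma_t=t^{-3/5}$ and $\mu_t=t^{-1/5}$ this gives $\sum_t\tfrac{\gamma_t}{2}\norm{\hat c^t}^2\lesssim mc_{\max}^2\sum_t t^{-3/5}t^{2/5}=mc_{\max}^2\sum_t t^{-1/5}=\mathcal O(mc_{\max}^2T^{4/5})$, the dominant contribution. Term $(B)$ is the \emph{exploration bias}: $\innerprod{c^t}{u_t-x^\star}=\lambda_t\innerprod{c^t}{z-x^\star}\le\lambda_tc_{\max}\norm{z-x^\star}_1=\mathcal O(\lambda_tc_{\max}m)$, and $\sum_t\lambda_t=m_i\sum_t\mu_t=\mathcal O(mT^{4/5})$, so $(B)$ scales like $T^{4/5}$ as well. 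Term $(C)$ has conditional mean zero by the unbiasedness part of Lemma~\ref{lemma:properties_estimator_online} (both $x_i^t$ and $u_t$ are $\mathcal F_{t-1}$-measurable and supported on active edges), so it is a martingale with increments $\mathcal O(mc_{\max}/\mu_t)$; an Azuma/Freedman-type inequality, summed against $\mu_t=t^{-1/5}$, bounds it by a term of lower order in $T$ while producing the stated $\log(1/\delta)$ factor. A union bound over the two high-probability events closes the argument.

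The main obstacle, in my view, is the interaction between the \emph{time-expanding} feasibility sets $\mathcal X_i^{\mu_t}$ and the \emph{blowing-up} magnitude of the importance-weighted estimator. One must simultaneously keep a feasible comparator $u_t$ close enough to $x^\star$ that the exploration bias $(B)$ stays $o(T)$, absorb the comparator drift and the varying step size in the telescoping of $(A)$, and control $(C)$ in high probability even though $\norm{\hat c^t}_\infty$ grows like $1/\mu_t=t^{1/5}$. The delicate point is the joint schedule $\gamma_t=t^{-3/5}$, $\mu_t=t^{-1/5}$: it is exactly what balances the gradient term $\sum_t\gamma_t/\mu_t^2$ against the exploration bias $\sum_t\mu_t$ at the common rate $T^{4/5}$, whereas any other pair of exponents would let one of the two dominate.
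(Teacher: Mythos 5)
Your proposal is correct and follows essentially the same route as the paper: the same reduction from realized to fractional regret via Azuma--Hoeffding, the same time-varying feasible comparator $u_t=(1-m_i\mu_t)x^\star+m_i\mu_t z$ (the paper's $x^\star_{\mu_t}$ built from a covering set of strategies), the same three-term decomposition into the projected-OGD term, the exploration-bias term, and the estimator-concentration martingale, and the same balancing of $\sum_t\gamma_t/\mu_t^2$ against $\sum_t\mu_t$ under the schedule $\gamma_t=t^{-3/5}$, $\mu_t=t^{-1/5}$. The only differences are cosmetic (Abel summation phrasing for the comparator drift versus the paper's explicit telescoping correction in Lemma~\ref{thm:ogd}), so no further comparison is needed.
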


\begin{remark}
\label{remarkT34}
Setting $\delta := \mathcal{O}\left(1/mTc_{\max}\right)$ directly implies that Algorithm~\ref{alg:alg1} admits regret $\tilde{\mathcal{O}}\left(m^2 c_{\max}^2 T^{4/5}\right)$ regret. The better $\mathcal{O}(m^{3/2}T^{3/4})$ regret bound can be obtained by selecting the parameters $\gamma_t = c_{\max}^{-1}m^{-1/2}t^{-3/4}$ and $\mu_t = \min(1/m_i, m^{-1/2}t^{-1/4})$. However such a parameter selection leads to $\mathcal{O}(T^{-1/8})$ convergence rate to NE. 
\end{remark}
In Theorem~\ref{thm:convergence_to_nash} we establish that the agents converge to a NE if all agents adopt Algorithm~\ref{alg:alg1}.

\begin{restatable}{theorem}{thmconvergencetoNash}
\label{thm:convergence_to_nash}
Let $\pi^1,\ldots, \pi^T$ the sequence of strategy profiles produced if all agents adopt Algorithm~\ref{alg:alg1}. Then for all $T \geq \Theta\left( m^{12.5}n^{7.5}/\epsilon^5 \right)$,
\[\frac{1}{T}\E \left[\sum_{t=1}^T\max_{i \in [n]}\left[c_i(\pi^{t}_i, \pi^{t}_{-i}) - \min_{\pi_i \in \Delta(\mathcal{P}_i)} c_i(\pi_i, \pi_{-i}^{t})\right]\right] \leq \epsilon.\]
The same holds for $T \geq \Theta(n^{6.5}m^{7}/\epsilon^5)$ in case the agents know $n,m$ and select $\gamma_t :=\Theta(m^{-4/5} n^{-8/5}c^{-1}_{\max} t^{-3/5})$ and $\mu_t := \Theta(n^{-6/5}m^{-11/10} t^{-1/5})$.
\end{restatable}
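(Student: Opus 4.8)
The plan is to use the fact that a congestion game is a potential game and that, once all agents run Algorithm~\ref{alg:alg1}, the synchronous dynamics is exactly stochastic projected gradient descent on a smoothed Rosenthal potential. First I would introduce the potential $\Phi(x_1,\dots,x_n)=\E_{p\sim x}\big[\sum_{e\in E}\sum_{\ell=1}^{\ell_e(p)}c_e(\ell)\big]$ evaluated on the fractional profile $x=(x_1,\dots,x_n)$ with $x_i\in\mathcal{X}_i$. Because each edge's contribution depends only on the per-edge marginals and agents sample independently, $\Phi$ is multi-affine (linear in each block $x_i$), satisfies $0\le\Phi\le\mathcal{O}(nmc_{\max})$, and its blockwise gradient equals the expected cost vector faced by agent $i$, namely $\nabla_{x_i}\Phi(x)=g_i(x_{-i})$ with $[g_i(x_{-i})]_e=\E_{p_{-i}}[c_e(1+\ell_e(p_{-i}))]$ (the Bernoulli-parameter derivative collapses the telescoping sum to a single marginal cost). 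Combining this identity with Lemma~\ref{lemma:properties_estimator_online} shows that the estimator of Step~6 satisfies $\E[\hat c^t\mid x^t]=\nabla_{x_i}\Phi(x^t)$ on $E_i$, so each agent runs one step of stochastic projected gradient descent on $\Phi$ onto the expanding set $\mathcal{X}_i^{\mu_t}$ (note $\mathcal{X}_i^{\mu_t}\subseteq\mathcal{X}_i^{\mu_{t+1}}$, so iterates stay feasible).

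Next I would rewrite the target exploitability in gradient form. Since the mixed cost $c_i(\pi_i,\pi_{-i}^t)$ is linear in agent $i$'s marginal and minimizing a linear functional over $\Delta(\mathcal{P}_i)$ is the same as minimizing over $\mathcal{X}_i=\mathrm{conv}(\hat{\mathcal{P}}_i)$, the per-agent gap becomes the Frank--Wolfe gap $\mathrm{Gap}_i^t=\langle g_i^t,x_i^t-x_i^{\star,t}\rangle\ge 0$, where $g_i^t=\nabla_{x_i}\Phi(x^t)$ and $x_i^{\star,t}=\argmin_{x_i\in\mathcal{X}_i}\langle g_i^t,x_i\rangle$. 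As every gap is nonnegative, $\max_i\mathrm{Gap}_i^t\le\sum_i\mathrm{Gap}_i^t$, so it suffices to bound $\tfrac1T\E[\sum_t\sum_i\mathrm{Gap}_i^t]$; the factor $n$ paid here is part of what produces the polynomial-in-$n$ iteration complexity. I stress that this step is exactly where the no-regret property alone is insufficient (static regret controls $\sum_t\langle g_i^t,x_i^t-x_i^\star\rangle$ for a \emph{fixed} $x_i^\star$, not the per-round best response), so the potential structure is essential.

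The heart of the argument is a potential-descent inequality. Using multi-affineness and the bound $L=\mathcal{O}(nc_{\max})$ on the cross-agent (same-edge) coupling of $\Phi$, one synchronous round gives $\Phi(x^{t+1})\le\Phi(x^t)+\sum_i\langle g_i^t,x_i^{t+1}-x_i^t\rangle+\tfrac{L}{2}\sum_i\|x_i^{t+1}-x_i^t\|^2$. The projection optimality of Step~7, taken in expectation over $\hat c^t$ with the variance bound $\E\|\hat c^t\|^2\le mc_{\max}^2/\mu_t^2$ of Lemma~\ref{lemma:properties_estimator_online}, lower-bounds the realized progress $\mathrm{prog}_i^t:=\langle g_i^t,x_i^t-x_i^{t+1}\rangle$ by $\Omega(\|x_i^t-x_i^{t+1}\|^2/\gamma_t)$ minus a term of order $\gamma_t mc_{\max}^2/\mu_t^2$; for $\gamma_t\lesssim 1/L$ the smoothness term is absorbed, yielding $\E[\Phi(x^t)-\Phi(x^{t+1})]\gtrsim\sum_i\E[\mathrm{prog}_i^t]-(\text{variance})$. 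To convert progress into the gap I would replace the possibly-infeasible $x_i^{\star,t}$ by $\tilde x_i^{\star,t}=(1-\alpha_t)x_i^{\star,t}+\alpha_t z_i$ for a fixed interior $z_i\in\mathcal{X}_i^{1/|E_i|}$ (feasible by Lemma~\ref{lemma:non_empty}) with $\alpha_t=\Theta(\mu_t|E_i|)$, paying an exploration bias of order $\mu_t mc_{\max}$ per round, and then apply the projection inequality against the feasible comparator $\tilde x_i^{\star,t}\in\mathcal{X}_i^{\mu_{t+1}}$ to get $\mathrm{Gap}_i^t\lesssim(\|g_i^t\|+D/\gamma_t)\|x_i^t-x_i^{t+1}\|$ with $D^2=\mathrm{diam}(\mathcal{X}_i)^2\le m$; squaring and using $\|x_i^t-x_i^{t+1}\|^2\le\gamma_t\,\mathrm{prog}_i^t$ gives $(\mathrm{Gap}_i^t)^2\lesssim (D^2/\gamma_t)\,\mathrm{prog}_i^t$.

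Finally I would sum over agents and time and telescope. Cauchy--Schwarz (in expectation) over agents and rounds turns the squared-gap estimate into $\sum_t\sum_i\mathrm{Gap}_i^t\lesssim D\sqrt n\,(\sum_t\gamma_t^{-1})^{1/2}(\sum_t\mathrm{prog}^t)^{1/2}$, where $\sum_t\mathrm{prog}^t$ is controlled by the telescoped potential $\sum_t\E[\Phi(x^t)-\Phi(x^{t+1})]\le\Phi(x^1)=\mathcal{O}(nmc_{\max})$ plus the accumulated variance $\sum_t\gamma_t mc_{\max}^2/\mu_t^2$ and bias $\sum_t\mu_t mc_{\max}$. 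With $\gamma_t=t^{-3/5}$ and $\mu_t=t^{-1/5}$ one has $\sum_t\gamma_t^{-1}=\Theta(T^{8/5})$ and the weighted quantities stay bounded, so $\tfrac1T\E[\sum_t\max_i\mathrm{Gap}_i^t]=\mathcal{O}(\mathrm{poly}(n,m)\,T^{-1/5})$, i.e. $\epsilon\sim\mathrm{poly}(n,m)\,T^{-1/5}$ and the stated $T=\Theta(m^{12.5}n^{7.5}/\epsilon^5)$; the sharper $T=\Theta(n^{6.5}m^{7}/\epsilon^5)$ follows from re-running the identical computation with the $n,m$-tuned schedules $\gamma_t=\Theta(m^{-4/5}n^{-8/5}c_{\max}^{-1}t^{-3/5})$, $\mu_t=\Theta(n^{-6/5}m^{-11/10}t^{-1/5})$, which balance the variance, bias, smoothness and diameter contributions optimally. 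I expect the main obstacle to be exactly this balancing inside a single telescoping potential argument: $\mu_t$ must decay slowly enough that the $1/\mu_t^2$ gradient variance remains summable against $\gamma_t$, yet fast enough that the exploration bias vanishes, all while the synchronous multi-agent smoothness coupling $L=\mathcal{O}(nc_{\max})$ and the $\mu_t$-expanding feasible sets $\mathcal{X}_i^{\mu_t}$ are simultaneously kept under control.
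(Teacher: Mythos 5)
Your high-level strategy is the same as the paper's: view the joint dynamics as projected SGD on the fractional Rosenthal potential $\Phi$ (the paper's Theorem~\ref{thm:estimator_props}), use smoothness of $\Phi$ (Lemma~\ref{lemma:smoothness}), and convert approximate stationarity into an approximate Mixed NE by exploiting linearity of $\Phi$ in each block (Lemma~\ref{lemma:conversion}). However, the technical core of your argument — bounding the per-round Frank--Wolfe gap by the step displacement and then applying Cauchy--Schwarz against $\sum_t \gamma_t^{-1}$ — does not close with the stated schedules. Your key inequality $(\mathrm{Gap}_i^t)^2 \lesssim (D^2/\gamma_t)\,\mathrm{prog}_i^t$ carries a $1/\gamma_t$ factor, so the final bound is $\tfrac{1}{T} D\sqrt{n}\,(\sum_t \gamma_t^{-1})^{1/2}(\sum_t \mathrm{prog}^t)^{1/2}$ with $(\sum_t \gamma_t^{-1})^{1/2} = \Theta(T^{4/5})$. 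Even in the best case $\sum_t \E[\mathrm{prog}^t] = \mathcal{O}(nmc_{\max})$ this is borderline; but the telescoping in fact gives $\sum_t \E[\mathrm{prog}^t] \leq \Phi_{\max} + \mathcal{O}(nmc_{\max}^2\sum_t \gamma_t/\mu_t) = \Theta(T^{3/5})$, because the cross term $\E[\langle \nabla\Phi(x^t)-\nabla_t,\,x^{t+1}-x^t\rangle]$ cannot be killed by unbiasedness ($x^{t+1}$ is correlated with the noise $\nabla_t$) and must be paid at the second-moment level. Plugging in yields $\tfrac{1}{T}\cdot T^{4/5}\cdot T^{3/10}\cdot\mathrm{poly}(n,m) = \mathrm{poly}(n,m)\,T^{1/10}$, which \emph{diverges}. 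A related quantitative slip is your use of the deterministic bound $\|\hat{c}^t\|^2 \leq m c_{\max}^2/\mu_t^2$ where the in-expectation bound $\E\|\hat{c}^t\|^2 \leq m c_{\max}^2/\mu_t$ (Theorem~\ref{thm:estimator_props}) is what makes $\sum_t \gamma_t^2/\mu_t = \Theta(\log T)$ rather than $\Theta(T^{1/5})$.

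The paper avoids both problems with a Moreau-envelope / proximal-point analysis (Theorem~\ref{lemma:G_bound}). The comparator there is $y^{t+1} = \argmin_{y\in\mathcal{X}^{\mu_{t+1}}}\{\Phi(y)+\tfrac{1}{\lambda}\|x^t-y\|^2\}$, which is measurable with respect to $x^t$, so the inner product $\E[\langle\nabla_t, x^t-y^{t+1}\rangle\,|\,x^t] = \langle\nabla\Phi(x^t), x^t-y^{t+1}\rangle$ exactly and no second-moment cross term appears; the descent of the envelope controls the weighted sum $\sum_t\gamma_t\E\|y^{t+1}-x^t\|^2$, which after dividing by $T\gamma_T$ and Jensen gives $\tfrac{1}{T}\sum_t\E\|G(x^t)\| = \tilde{\mathcal{O}}(T^{-1/5})$. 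Crucially, the conversion from displacement to NE gap (Lemma~\ref{lemma:conversion_bis}, via the gradient-mapping results of Ghadimi--Lan and Agarwal et al.) multiplies by the \emph{fixed} constant $2\sqrt{m}/\lambda = 4n^2 m c_{\max}$ rather than by $1/\gamma_t$, which is exactly what your route is missing. The paper also needs the auxiliary sequence $\tilde{y}^{t+1}$ projected onto the final polytope $\mathcal{X}^{\mu_T}$, paying an extra $\sqrt{nm^3}\,\mu_t$ per round, to reconcile the time-varying projection sets with the fixed-$\mu$ stationarity notion; your exploration-bias term of order $\mu_t m c_{\max}$ plays an analogous role but is attached to the wrong object. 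To repair your proof you would essentially have to re-import the proximal-point comparator, at which point you recover the paper's argument.
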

We note that the exact same notion of \textit{best-iterate convergence} (as in Theorem~\ref{thm:convergence_to_nash}) is considered in \cite{du22,leonardos2022global,DWZJ22,Ana22}. In Corollary~\ref{c:markov} we present a more clear interpretation of Theorem~\ref{thm:convergence_to_nash}.
\begin{restatable}{corollary}{Markov}
\label{c:markov}
In case all agents adopt Algorithm~\ref{alg:alg1} for $T\geq \Theta(n^{6.5}m^{7}/\epsilon^5)$ (resp. $\Theta\left( m^{12.5}n^{7.5}/\epsilon^5 \right)$) then with probability $\geq 1-\delta$, 
\begin{itemize}
\item $(1-\delta)T$ of the strategy profiles $\pi^1,\ldots, \pi^T$ are $\epsilon/\delta^2$-approximate Mixed NE.
\item $\pi^{t}$ is an $\epsilon/\delta$-approximate Mixed NE once $t$ is sampled uniformly at random in $\{1,\ldots,T\}$.
\end{itemize}
\end{restatable}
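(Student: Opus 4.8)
The plan is to derive both statements from Theorem~\ref{thm:convergence_to_nash} by two applications of Markov's inequality. For each iterate $t$, define the \emph{exploitability}
\[\Phi_t := \max_{i \in [n]}\left[c_i(\pi^{t}_i, \pi^{t}_{-i}) - \min_{\pi_i \in \Delta(\mathcal{P}_i)} c_i(\pi_i, \pi_{-i}^{t})\right],\]
which is a nonnegative random variable (nonnegative since the inner minimum is taken over all $\pi_i$, including $\pi_i^t$ itself, so it is at most $c_i(\pi_i^t,\pi_{-i}^t)$) whose randomness comes entirely from the execution of Algorithm~\ref{alg:alg1}. By the definition of an approximate Mixed NE, the profile $\pi^t$ is an $\eta$-approximate Mixed NE precisely when $\Phi_t \leq \eta$. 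With this notation, Theorem~\ref{thm:convergence_to_nash} reads $\E\left[\frac{1}{T}\sum_{t=1}^T \Phi_t\right] \leq \epsilon$ for the stated choices of $T$.

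For the second bullet, I would introduce an index $\tau \sim \mathrm{Unif}(1,\ldots,T)$ drawn independently of the algorithm's randomness, so that the tower property gives $\E[\Phi_\tau] = \E\left[\frac{1}{T}\sum_{t=1}^T \Phi_t\right] \leq \epsilon$, where the expectation is over both $\tau$ and the execution. Since $\Phi_\tau \geq 0$, Markov's inequality yields $\Pr[\Phi_\tau > \epsilon/\delta] \leq \delta$, i.e.\ with probability at least $1-\delta$ the sampled profile $\pi^\tau$ is an $\epsilon/\delta$-approximate Mixed NE, as claimed.

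For the first bullet, I would count the number of ``bad'' iterates $N := \left|\{t \in [T] : \Phi_t > \epsilon/\delta^2\}\right|$. On every realization of the algorithm, each bad iterate contributes more than $\epsilon/\delta^2$ to $\sum_{t=1}^T \Phi_t$, so pathwise $N \cdot (\epsilon/\delta^2) \leq \sum_{t=1}^T \Phi_t$, i.e.\ $N \leq \frac{\delta^2}{\epsilon}\sum_{t=1}^T \Phi_t$. Taking expectations and invoking the theorem yields $\E[N] \leq \frac{\delta^2}{\epsilon}\cdot \epsilon T = \delta^2 T$. Applying Markov's inequality to the nonnegative integer-valued random variable $N$ then gives $\Pr[N > \delta T] \leq \delta$. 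Hence with probability at least $1-\delta$ at most $\delta T$ iterates are bad, so at least $(1-\delta)T$ of the profiles $\pi^1,\ldots,\pi^T$ are $\epsilon/\delta^2$-approximate Mixed NE.

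The argument is essentially mechanical once the exploitability is identified as the correct nonnegative potential, and no serious obstacle is expected. The only points requiring care are bookkeeping the two independent sources of randomness (the execution of Algorithm~\ref{alg:alg1}, and for the second bullet the uniform index $\tau$), and applying the pathwise count bound $N \cdot (\epsilon/\delta^2) \leq \sum_t \Phi_t$ \emph{before} taking expectations so that Markov can subsequently be applied to $N$ itself rather than to its expectation.
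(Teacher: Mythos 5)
Your proposal is correct and follows essentially the same route as the paper: both bullets are obtained from Theorem~\ref{thm:convergence_to_nash} via Markov's inequality applied to the nonnegative exploitability, with the first bullet additionally using the pathwise counting bound on the number of iterates exceeding $\epsilon/\delta^2$. The only cosmetic difference is that you apply Markov to the count $N$ after taking expectations, whereas the paper applies Markov to $\sum_t E_t$ first and then deduces the bound on the count deterministically on the good event; the two orderings are equivalent.
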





\section{Sketch of Proof of Theorem~\ref{t:regret}}
\label{sec:first_thm_sketch}
In this section, we present the basic steps of the proof of Theorem~\ref{t:regret}. Due to Equation~\ref{eq:imp} established by the Caratheodory decomposition and the fact that $|\hat{c}_e^t| \leq c_{\mathrm{max}}/\mu_t$ we establish the following concentration result for the quantity 
$\sum_{t=1}^T\left[ \sum_{e \in p^t} c_e^t  - \innerprod{c^t}{x_i^t}\right]$.
\begin{restatable}{lemma}{thmconcentration}\label{thm:concentration}
Let the sequences $x^1_i,\ldots,x^T_i \in \mathcal{X}_i$ and $p^1_i,\ldots,p^T_i \in \mathcal{P}_i$ produced by \Cref{def:path_selection_protocol}. Then, with probabilty $1-\delta/2$,
\[ \sum_{t=1}^T \sum_{e \in p^t_i} c_e^t  = \sum_{t=1}^T\innerprod{c^t}{x_i^t} + \mathcal{O}\br{\sqrt{c_{\max}}\log (m/\delta)\sqrt{T}} .\]
\end{restatable}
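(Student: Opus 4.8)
The plan is to recognize the error $\sum_{t=1}^T\sum_{e\in p_i^t}c_e^t - \sum_{t=1}^T\innerprod{c^t}{x_i^t}$ as the terminal value of a martingale and to apply a Bernstein-type (Freedman) concentration inequality. Concretely, let $\mathcal{F}_{t-1}$ denote the $\sigma$-algebra generated by all randomness up to and including the choice of $x_i^t$ but before the sampling of $p_i^t\sim\pi_i^t$, and set
\[ Z_t := \sum_{e\in p_i^t}c_e^t - \innerprod{c^t}{x_i^t}. \]
Writing $\chi_e^t := \mathds{1}[e\in p_i^t]$, the Carathéodory decomposition of $x_i^t$ guarantees $\E[\chi_e^t\mid\mathcal{F}_{t-1}]=x_e^t$, so that Equation~\ref{eq:imp} yields $\E[Z_t\mid\mathcal{F}_{t-1}]=0$. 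Hence $(Z_t)_{t\ge 1}$ is a martingale difference sequence with respect to $(\mathcal{F}_t)$, and the object to be controlled is exactly $\sum_{t=1}^T Z_t$.

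Next I would isolate the two ingredients a martingale Bernstein bound requires: a range bound on each increment and a bound on the accumulated conditional variance. For the range, since $0\le c_e^t\le c_{\max}$ and both $\sum_{e\in p_i^t}c_e^t$ and $\innerprod{c^t}{x_i^t}$ lie in $[0,mc_{\max}]$, we get $|Z_t|\le mc_{\max}$ deterministically. For the variance, I would exploit that each $\chi_e^t$ is Bernoulli with $(\chi_e^t)^2=\chi_e^t$, so the diagonal terms of the conditional variance contribute at most
\[ \sum_{e\in E_i}(c_e^t)^2\,x_e^t(1-x_e^t)\le c_{\max}\sum_{e\in E_i}c_e^t x_e^t = c_{\max}\innerprod{c^t}{x_i^t}. \]
This is the crucial structural point: the per-round conditional variance scales like $c_{\max}$ times the fractional cost, rather than like the crude $(mc_{\max})^2$, which is what ultimately buys the improved dependence in the stated error.

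With these two estimates in hand, I would invoke Freedman's inequality for $\sum_{t=1}^T Z_t$, so that with probability at least $1-\delta/2$ the deviation is of order $\sqrt{(\sum_{t}\sigma_t^2)\log(m/\delta)} + mc_{\max}\log(m/\delta)$, where $\sigma_t^2$ is the conditional variance of $Z_t$. A clean way to produce the $\log(m/\delta)$ factor is to split $Z_t=\sum_{e\in E_i}c_e^t(\chi_e^t-x_e^t)$ into $|E_i|\le m$ scalar martingales, apply the Bernstein bound to each at failure level $\delta/(2m)$, and union bound. Substituting the variance estimate turns the leading term into a $\sqrt{T}$ quantity with only logarithmic dependence on $m$ and $1/\delta$, while the range term is independent of $T$. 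Since the concentration error only needs to be of lower order than the $T^{4/5}$ regret of Theorem~\ref{t:regret}, this is more than sufficient; indeed, even a plain Azuma--Hoeffding bound with range $mc_{\max}$ would already give an $o(T^{4/5})$ term.

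The step I expect to be the main obstacle is the conditional-variance control, specifically the treatment of the off-diagonal covariances $\mathrm{Cov}(\chi_e^t,\chi_{e'}^t)$ induced by the Carathéodory distribution $\pi_i^t$: the diagonal identity above is immediate, but distinct edges of a sampled path need not be negatively correlated, so one must either route through the per-edge decomposition and union bound (pushing the cross-edge dependence into the logarithmic factor and the $T$-independent range term) or bound the aggregate variance directly and absorb the resulting constants. Everything else — reading off the martingale structure from Equation~\ref{eq:imp} and feeding the range and variance bounds into a standard concentration inequality — is routine.
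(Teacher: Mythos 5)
Your proposal is correct and follows essentially the same route as the paper: the paper also identifies $\sum_{e} c_e^t\left(\E_t[\mathds{1}[e\in p^t]]-\mathds{1}[e\in p^t]\right)$ as a martingale difference sequence (using the Carath\'eodory decomposition to get $\E_t[\mathds{1}[e\in p^t]]=x_e^t$), bounds each increment by $c_{\max}$, and applies Azuma--Hoeffding per edge with a union bound over the $m$ resources --- precisely the ``per-edge decomposition at failure level $\delta/(2m)$'' fallback you describe. The Freedman/conditional-variance refinement you sketch is not used in the paper (and is not needed, since the error only has to be dominated by the $T^{4/5}$ terms), so your observation that plain Azuma--Hoeffding with the crude range bound already suffices is exactly what the paper relies on.
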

Let $p_i^\ast \in \mathcal{P}_i$ denote the optimal strategy for the sequence of costs $c^1,\ldots,c^T$ and $x^\star_i \in \mathcal{X}_i$ the corresponding $\{0,1\}^{m}$ extreme point of $\mathcal{X}_i$. Then Lemma~\ref{thm:concentration} implies that
with probability $1 - \delta/2$,
\begin{equation*}\label{eq:1}
     \sum_{t=1}^T \left(\sum_{e \in p_i^t} c_e^t - \sum_{e \in p^\star_i} c_e^t\right)   = \sum_{t=1}^T \innerprod{c^t}{x_i^t - x^\star_i} + \tilde{\mathcal{O}}\br{\sqrt{T}}.
     \end{equation*}
As a result, in the rest of the section we bound the term $\sum_{t=1}^T \innerprod{c^t}{x_i^t - x^\star_i}$. 

Unfortunately, this term can not be directly control because at any step $t$ the comparator point $x^\star_i$ is not necessarly in $\mathcal{X}^{\mu_t}$. To overcome the issue we construct a sequence of comparator points $\bc{x^\star_{\mu_t}}^T_{t=1}$ that are guaranteed to satisfy $x^\star_{\mu_t}\in\mathcal{X}^{\mu_t}$.

Formally, we have the following definition.
\begin{definition}\label{d:project}
Let $p_i^\star \in \mathcal{P}_i$ the optimal strategy
for the sequence $c^1,\ldots,c^T$ and $x_i^\star \in \mathcal{X}_i$ its corresponding extreme point in $\mathcal{X}_i$ . Moreover, consider constructing a collection of strategies $\mathcal{D}=\{\tilde{p}^{\ell}\}^m_{\ell=1}$ sampled as follows. For any active resource $e\in E_i$, add to $\mathcal{D}$ a strategy $\tilde{p}\in\mathcal{P}_i$ such that $e\in\tilde{p}$. Considering the collected strategies in a vector form, i.e. elements of $\bc{0,1}^m$, we define
\[x^\star_{\mu_t} \triangleq (1 - m \mu_t)x^\star_i + \mu_t\sum^m_{\ell=1} \tilde{p}^{\ell}.\]

\begin{remark} \label{remark:comparator}
To see that $x^\star_{\mu_t}\in\mathcal{X}^{\mu_t}$, denote as $s$ the vector $s = \frac{1}{m}\sum^m_{i=1} \tilde{p}^{\ell}$. Since by construction, for any $e\in E_i$ there exists $\ell\in[m]$ such that $e\in\tilde{p}^\ell$, we have that $s \geq 1/m$. Moreover, $s \in \mathcal{X}$ which implies  $s \in \mathcal{X}^{1/m}$. At this point, we can write $x^\star_{\mu_t} = (1 - m\mu_t)x_i^\star + m\mu_t s$. Then, it is evident that $s\geq 1/m$ implies $x^\star_{\mu_t}\geq \mu_t$ and that $x^\star_{\mu_t}$ is a convex combination of $x^\star_i$ and $s$ because $\mu_t \leq \frac{1}{m}$ and therefore that $x^\star \in \mathcal{X}$. These two facts allow to conclude that $x^\star_{\mu_t} \in \mathcal{X}^{\mu_t}$.
\end{remark}

\end{definition}

Up next, we decompose the right-hand term of Equation~\ref{eq:1} and separately bound each of the $(\mathrm{A} - \mathrm{C})$ terms.
\begin{align*}
\sum_{t=1}^T \innerprod{c^t}{x_i^t - x^\star} &= \underbrace{\sum_{t=1}^T \innerprod{\hat{c}^t}{x_i^t - x_{\mu_t}^\star}}_{(\mathrm{A})}  + \underbrace{\sum_{t=1}^T \innerprod{c^t}{x_{\mu_t}^\star - x^\star}}_{(\mathrm{B})}\\&+ \underbrace{\sum_{t=1}^T \innerprod{c^t - \hat{c}^t}{x_i^t - x_{\mu_t}^\star}}_{_{(\mathrm{C})}}.
\end{align*}
The bound on term $(\mathrm{A})$ is established in Lemma~\ref{thm:ogd}. 
\begin{restatable}{lemma}{thmogd}\label{thm:ogd}
Let the sequences $x^1_i,\ldots,x^T_i \in \mathcal{X}_i$ and $\hat{c}^1,\ldots,\hat{c}^T$ produced by \Cref{alg:alg1}. Then,
\begin{align*}\sum_{t=1}^T &\innerprod{\hat{c}^t}{x_i^t - x_{\mu_t}^\star} \leq \frac{2m}{\gamma_T} + mc_{\max}^2 \sum^T_{t=1}\frac{\gamma_t}{\mu_t^2}.
\end{align*}
\end{restatable}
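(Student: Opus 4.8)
The plan is to run the standard Online Gradient Descent potential argument — expand the squared distance to a comparator after a projection step and telescope — but adapted to the two nonstandard features of Algorithm~\ref{alg:alg1}: the feasible set $\mathcal{X}^{\mu_t}_i$ changes with $t$, and the comparator $x^\star_{\mu_t}$ also moves with $t$. The structural fact that makes everything go through is monotonicity of the schedule: since $\mu_t = \min(1/m_i, t^{-1/5})$ is non-increasing and $\gamma_t = t^{-3/5}$ is non-increasing, the Bounded-Away polytopes are \emph{nested}, $\mathcal{X}^{\mu_t}_i \subseteq \mathcal{X}^{\mu_{t+1}}_i$. Consequently the comparator used at round $t$, which lies in $\mathcal{X}^{\mu_t}_i$ by Remark~\ref{remark:comparator}, is automatically feasible for the projection performed at round $t$ (onto $\mathcal{X}^{\mu_{t+1}}_i$). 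This is exactly what is needed to invoke nonexpansiveness of the Euclidean projection against the moving comparator.

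First I would write the one-step inequality. Using $x^{t+1}_i = \Pi_{\mathcal{X}^{\mu_{t+1}}_i}[x^t_i - \gamma_t \hat{c}^t]$, the feasibility $x^\star_{\mu_t} \in \mathcal{X}^{\mu_{t+1}}_i$ just established, and nonexpansiveness of $\Pi_{\mathcal{X}^{\mu_{t+1}}_i}$, I expand $\norm{x^{t+1}_i - x^\star_{\mu_t}}^2 \leq \norm{x^t_i - \gamma_t\hat{c}^t - x^\star_{\mu_t}}^2$ and rearrange to
\[\innerprod{\hat{c}^t}{x^t_i - x^\star_{\mu_t}} \leq \frac{\norm{x^t_i - x^\star_{\mu_t}}^2 - \norm{x^{t+1}_i - x^\star_{\mu_t}}^2}{2\gamma_t} + \frac{\gamma_t}{2}\norm{\hat{c}^t}^2.\]
Summing over $t$ and writing $\Phi_t := \norm{x^t_i - x^\star_{\mu_t}}^2$, I split the telescoping numerator as $\norm{x^t_i - x^\star_{\mu_t}}^2 - \norm{x^{t+1}_i - x^\star_{\mu_t}}^2 = (\Phi_t - \Phi_{t+1}) + \big(\norm{x^{t+1}_i - x^\star_{\mu_{t+1}}}^2 - \norm{x^{t+1}_i - x^\star_{\mu_t}}^2\big)$, so the total bound decomposes into a fixed-comparator telescoping part, a comparator-drift part, and the gradient-norm part $\tfrac{1}{2}\sum_t \gamma_t\norm{\hat{c}^t}^2$.

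For the telescoping part $\sum_t (\Phi_t - \Phi_{t+1})/(2\gamma_t)$ I would use Abel summation: since $1/\gamma_t$ is non-decreasing and each $\Phi_t$ is at most the squared diameter of $\mathcal{X}_i$ (equal to $m_i := |E_i|$, because all iterates and comparators are supported on and bounded in $[0,1]^{E_i}$), the sum collapses to at most $m_i/(2\gamma_T) \leq m/(2\gamma_T)$. For the comparator-drift part I would first bound the per-step drift using the explicit form $x^\star_{\mu_t} = (1-m\mu_t)x^\star_i + m\mu_t s$ from Remark~\ref{remark:comparator}, which gives $x^\star_{\mu_t} - x^\star_{\mu_{t+1}} = m(\mu_t - \mu_{t+1})(s - x^\star_i)$ and hence $\norm{x^\star_{\mu_t} - x^\star_{\mu_{t+1}}} \leq m(\mu_t-\mu_{t+1})\sqrt{m_i}$; combined with $\big|\norm{a}^2 - \norm{b}^2\big| \leq \norm{a-b}\,(\norm{a}+\norm{b})$ and the diameter bound $2\sqrt{m_i}$, each drift term is $O(m\, m_i (\mu_t-\mu_{t+1}))$. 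The key cancellation is then that, using $1/\gamma_t \leq 1/\gamma_T$ and the telescoping $\sum_t(\mu_t-\mu_{t+1}) = \mu_1 - \mu_{T+1} \leq \mu_1 = 1/m_i$, this part sums to $O\big(m\, m_i \cdot \tfrac{1}{m_i}/\gamma_T\big) = O(m/\gamma_T)$ — the $1/m_i$ coming from the initial exploration level $\mu_1 = 1/m_i$ exactly cancels the $m_i$ coming from the diameter. Adding the two pieces yields the claimed $2m/\gamma_T$.

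Finally, the gradient-norm part is routine: by the Boundedness clause of Lemma~\ref{lemma:properties_estimator_online}, $\hat{c}^t$ is supported on at most $m_i$ coordinates with $|\hat{c}^t_e| \leq c_{\max}/\mu_t$, so $\norm{\hat{c}^t}^2 \leq m c_{\max}^2/\mu_t^2$ and $\tfrac12\sum_t\gamma_t\norm{\hat{c}^t}^2 \leq mc_{\max}^2\sum_t \gamma_t/\mu_t^2$. I expect the main obstacle to be the comparator-drift term: naively it scales like $m^2 \sum_t(\mu_t-\mu_{t+1})/\gamma_t$, and it is only because the diameter can be taken in the \emph{active} dimension $m_i$ and $\mu_1 = 1/m_i$ that the drift collapses to $O(m/\gamma_T)$ rather than to a genuinely larger, non-telescoping quantity; engineering these factors to cancel — rather than accumulate across the $T$ rounds — is the crux of the argument.
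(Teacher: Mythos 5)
Your proposal is correct and follows essentially the same route as the paper's proof: the one-step projected-OGD inequality (valid because the nested polytopes make $x^\star_{\mu_t}$ feasible for the round-$t$ projection), a split of the summed potential into a fixed-comparator telescoping piece bounded by the squared diameter over $\gamma_T$, a comparator-drift piece controlled via the explicit form $x^\star_{\mu_t}=(1-m\mu_t)x^\star_i+m\mu_t s$ and the telescoping $\sum_t(\mu_t-\mu_{t+1})\leq \mu_1$, and the $\tfrac{\gamma_t}{2}\norm{\hat c^t}^2\leq \tfrac{mc_{\max}^2\gamma_t}{2\mu_t^2}$ term. The only cosmetic difference is that you organize the telescoping via Abel summation and track $m_i=|E_i|$ against $\mu_1=1/m_i$ explicitly, whereas the paper regroups the $1/\gamma_t$ factors directly; the resulting constants match.
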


The proof of Lemma~\ref{thm:ogd} is based on extending the arguments of \cite{Z07} for Online Projected Gradient Descent. The basic technical difficulty comes from the fact that in Step~$7$, Algorithm~\ref{alg:alg1} projects in the time-changing feasibility set $\mathcal{X}^{\mu_t}_i$ while in the analysis of \cite{Z07} the feasibility set is invariant.

The term~$(\mathrm{B})$ quantifies the suboptimality of the projections of $x^\star_i$ on the time expanding polytopes $\mathcal{X}_i^{\mu_t}$. Notice that $x^\star_i$ is possibly outside the $\mathcal{X}_i^{\mu_t}$ where \Cref{alg:alg1} projects to.
\begin{restatable}{lemma}{thmboundedpoly}(Sub-optimality of Bounded Polytopes)\label{thm:bounded_poly}
Let a sequence of costs $c^1,\ldots,c^T$
with $\norm{c^t}_\infty\leq c_{\max}$. Then,
\[\sum_{t=1}^T \innerprod{c^t}{x^\star_{\mu_t} - x^\star_i} \leq m^2 c_{\max} \sum^T_{t=1}\mu_t. \]
where $x_i^\star$ and  $x^\star_{\mu_t}$ are introduced in \Cref{d:project}.
\end{restatable}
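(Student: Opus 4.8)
The plan is to bound each summand $\innerprod{c^t}{x^\star_{\mu_t} - x^\star_i}$ separately and then sum the per-round estimates; no global/telescoping argument is needed here. First I would write out the difference explicitly from the definition of $x^\star_{\mu_t}$ in \Cref{d:project}. Substituting $x^\star_{\mu_t} = (1-m\mu_t)x^\star_i + \mu_t\sum_{\ell=1}^m \tilde{p}^\ell$ gives
\[x^\star_{\mu_t} - x^\star_i = -m\mu_t\, x^\star_i + \mu_t\sum_{\ell=1}^m \tilde{p}^\ell = \mu_t\br{\sum_{\ell=1}^m \tilde{p}^\ell - m\, x^\star_i},\]
so that the $t$-th inner product factors as $\innerprod{c^t}{x^\star_{\mu_t} - x^\star_i} = \mu_t\br{\sum_{\ell=1}^m \innerprod{c^t}{\tilde{p}^\ell} - m\innerprod{c^t}{x^\star_i}}$.

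The key observation is that every quantity here is controlled by the nonnegativity of the congestion costs. Since each $c_e$ is positive and $x^\star_i \in \{0,1\}^m$, we have $\innerprod{c^t}{x^\star_i} = \sum_{e \in p^\star_i} c^t_e \geq 0$, so the term $-m\innerprod{c^t}{x^\star_i}$ may be discarded to obtain the upper bound
\[\innerprod{c^t}{x^\star_{\mu_t} - x^\star_i} \leq \mu_t \sum_{\ell=1}^m \innerprod{c^t}{\tilde{p}^\ell}.\]
Next, each comparator strategy $\tilde{p}^\ell$ is a $\{0,1\}^m$ vector with at most $m$ nonzero coordinates, and $\norm{c^t}_\infty \leq c_{\max}$, so $\innerprod{c^t}{\tilde{p}^\ell} = \sum_{e \in \tilde{p}^\ell} c^t_e \leq m\, c_{\max}$. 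Summing the $m$ terms in the collection $\mathcal{D}$ yields $\innerprod{c^t}{x^\star_{\mu_t} - x^\star_i} \leq m^2 c_{\max}\,\mu_t$, and summing over $t = 1,\ldots,T$ gives exactly the claimed bound $m^2 c_{\max}\sum_{t=1}^T \mu_t$.

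The argument is essentially a one-line computation per round, so there is no serious technical obstacle; the only point requiring care is the sign of the dropped term. This is precisely where the structural assumption that congestion costs are nonnegative enters, guaranteeing $\innerprod{c^t}{x^\star_i} \geq 0$ and hence that subtracting $m\mu_t\, x^\star_i$ can only decrease the inner product. I would also note that the feasibility of the comparator sequence, namely $x^\star_{\mu_t} \in \mathcal{X}^{\mu_t}_i$, is needed for this benchmark to be meaningful against the projected iterates, but this is already established in \Cref{remark:comparator} and plays no role in the magnitude bound itself.
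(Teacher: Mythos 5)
Your proof is correct, and it reaches the same bound by bounding each summand separately and summing, but the per-round estimate is obtained by a slightly different argument than the paper's. The paper applies H\"older's inequality directly, $\innerprod{c^t}{x^\star_{\mu_t} - x^\star_i} \leq \norm{c^t}_\infty \norm{x^\star_{\mu_t} - x^\star_i}_1$, and then computes $\norm{x^\star_i - x^\star_{\mu_t}}_1 = m\mu_t\norm{x^\star_i - s}_1 \leq m^2\mu_t$ using only that both points lie in $[0,1]^m$; this bounds the absolute value of the inner product and never uses the sign of the costs. You instead expand $x^\star_{\mu_t} - x^\star_i = \mu_t\left(\sum_{\ell=1}^m \tilde{p}^\ell - m x^\star_i\right)$, discard the term $-m\mu_t\innerprod{c^t}{x^\star_i}$ using nonnegativity of the congestion costs, and bound each $\innerprod{c^t}{\tilde{p}^\ell}$ by $mc_{\max}$. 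Both routes give the identical constant $m^2 c_{\max}$; the paper's version is marginally more robust (it would survive signed cost vectors, which is irrelevant here but keeps the lemma self-contained as a pure norm estimate), while yours makes explicit where the structural assumption $c^t \geq 0$ could be exploited. Your closing remark that feasibility $x^\star_{\mu_t}\in\mathcal{X}^{\mu_t}_i$ is needed elsewhere but not for the magnitude bound is accurate.
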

Finally, we bound the term $(\mathrm{C})$ that quantifies the concentration of the cost estimators built at Step 7 of \Cref{alg:alg1} and the realized costs. The latter is established in Lemma~\ref{thm:concentration} and its proof lies on the \textit{Unbiasness} and \textit{Boundness} property of the estimator $\hat{c}_t$.
\begin{restatable}{lemma}{thmcostconcentration}\label{thm:cost_concentration}
Let $\hat{c}^1,\ldots,\hat{c}^T$ the sequence produced by Algorithm~\ref{alg:alg1} given as input the sequence of costs $c^1,\ldots,c^T$. Then with probability $1-\delta/2$,
\[ \sum_{t=1}^T \innerprod{c^t - \hat{c}^t}{x_i^t - x_{\mu_t}^\star} \leq \frac{m c_{\max}}{\mu_T}\sqrt{T \log(2m/\delta)}. \] 
\end{restatable}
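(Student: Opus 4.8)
The plan is to recognize the sum in term $(\mathrm{C})$ as a martingale and control it with an Azuma--Hoeffding bound. First I would introduce the filtration $\mathcal{F}_t := \sigma(p_i^1,\ldots,p_i^t)$ generated by the strategies sampled up to round $t$. By construction $x_i^t$ is the output of the projection performed at round $t-1$, so it is $\mathcal{F}_{t-1}$-measurable, and hence so is its Carathéodory decomposition $\pi_i^t$; the comparator $x^\star_{\mu_t}$ of \Cref{d:project} is a fixed deterministic vector, since the cost sequence $c^1,\ldots,c^T$ is fixed by the (oblivious) adversary and the collection $\mathcal{D}$ is chosen once. Thus, conditionally on $\mathcal{F}_{t-1}$, the only randomness in the round-$t$ summand is the draw $p_i^t \sim \pi_i^t$.

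Next I would reduce the inner product to the active resources. Because $x_i^t \in \mathcal{X}_i = \mathrm{conv}(\hat{\mathcal{P}}_i)$ and $x^\star_{\mu_t}$ is a convex combination of strategy indicators, both vectors are supported on $E_i$, so $x^t_{i,e} - x^\star_{\mu_t,e} = 0$ for every $e \notin E_i$. Writing $Z_t := \innerprod{c^t - \hat{c}^t}{x_i^t - x^\star_{\mu_t}} = \sum_{e \in E_i}(c^t_e - \hat{c}^t_e)(x^t_{i,e} - x^\star_{\mu_t,e})$, only the coordinates on which the \emph{Unbiasedness} part of \Cref{lemma:properties_estimator_online} holds survive. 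Since $\E[\hat{c}^t_e \mid \mathcal{F}_{t-1}] = c^t_e$ for $e \in E_i$ (using that the Carathéodory marginal gives $\Pr[e \in p_i^t \mid \mathcal{F}_{t-1}] = x^t_e$) and $x^t_{i,e} - x^\star_{\mu_t,e}$ is $\mathcal{F}_{t-1}$-measurable, we obtain $\E[Z_t \mid \mathcal{F}_{t-1}] = 0$, so $(Z_t)_t$ is a martingale difference sequence.

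Then I would bound the increments and concentrate. Using the \emph{Boundedness} part of \Cref{lemma:properties_estimator_online}, for $e \in E_i$ the estimator equals either $0$ or $c^t_e/x^t_e$, whence $|c^t_e - \hat{c}^t_e| \le c_{\max}/\mu_t \le c_{\max}/\mu_T$ (the last step because $\mu_t$ is non-increasing); combined with $|x^t_{i,e} - x^\star_{\mu_t,e}| \le 1$ this bounds each per-coordinate increment by $c_{\max}/\mu_T$. I would run Azuma--Hoeffding coordinate-by-coordinate on each martingale $\sum_t (c^t_e - \hat{c}^t_e)(x^t_{i,e} - x^\star_{\mu_t,e})$, getting a deviation of order $(c_{\max}/\mu_T)\sqrt{T \log(2m/\delta)}$ at failure probability $\delta/(2m)$, then union-bound over the at most $m$ active resources and add the $m$ coordinate bounds. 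This yields the factor $m$ outside and the $\log(2m/\delta)$ inside, giving $\frac{m c_{\max}}{\mu_T}\sqrt{T \log(2m/\delta)}$ with probability $1-\delta/2$; a direct application of Azuma to $Z_t$ with range $m c_{\max}/\mu_T$ gives the same order with $\log(2/\delta)$.

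The hard part is not the concentration inequality but the careful setup of the martingale: one must argue that $x_i^t$ is \emph{predictable} (it is produced by the previous projection, not the current sample) and that $x^\star_{\mu_t}$ is a fixed comparator, so that conditioning on $\mathcal{F}_{t-1}$ isolates the draw of $p_i^t$ and lets the unbiasedness identity be applied coordinatewise. The restriction of the inner product to $E_i$ is precisely what makes this go through, since \Cref{lemma:properties_estimator_online} only guarantees unbiasedness and boundedness on the active resources, where $x^t_e \ge \mu_t > 0$ keeps the importance-weighted estimator well-defined and its range controlled by $1/\mu_T$.
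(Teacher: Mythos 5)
Your proposal is correct and follows essentially the same route as the paper's proof: identify $\innerprod{c^t-\hat{c}^t}{x_i^t-x^\star_{\mu_t}}$ as a bounded martingale difference sequence with respect to the filtration generated by the past samples (using that $x_i^t$ is predictable and the estimator is unbiased on $E_i$), then apply Azuma--Hoeffding. The only cosmetic difference is that you concentrate coordinate-by-coordinate with a union bound over the $m$ active resources, whereas the paper applies Azuma directly to the scalar sequence with increment bound $\norm{c^t-\hat{c}^t}_\infty\norm{x_i^t-x^\star_{\mu_t}}_1\leq 2mc_{\max}(1+1/\mu_t)$; both yield the stated order.
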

\begin{figure*}[t]
\centering
\begin{tabular}{cc}

\subfloat[\label{subfig:regret2}$\frac{\mathcal{R}_{\mathcal{A}}(T)}{T}$]{%
    \includegraphics[width=0.42\linewidth]{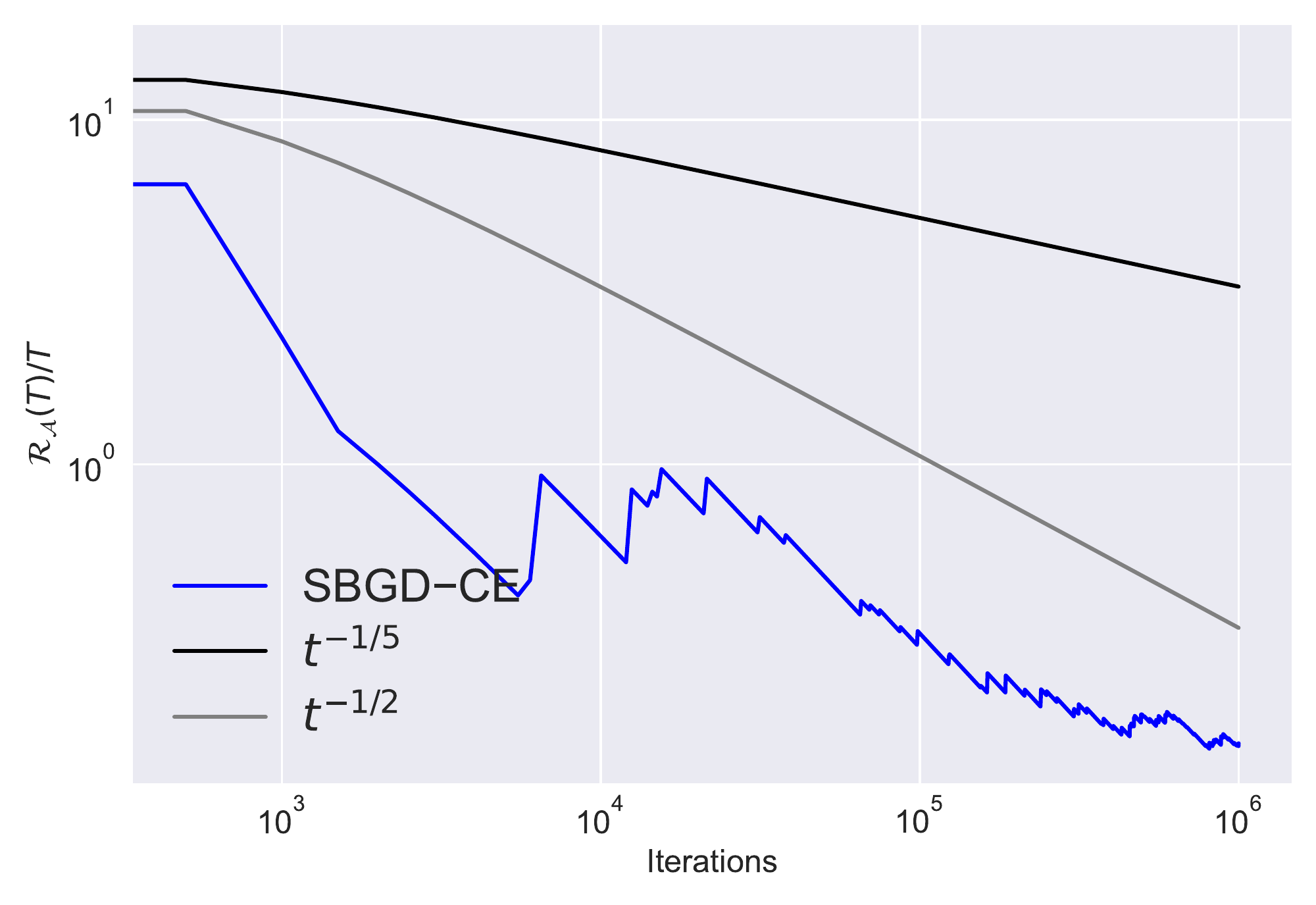}
     } 
    & \subfloat[\label{subfig:explo2}Exploitability]{%
    \includegraphics[width=0.42\linewidth]{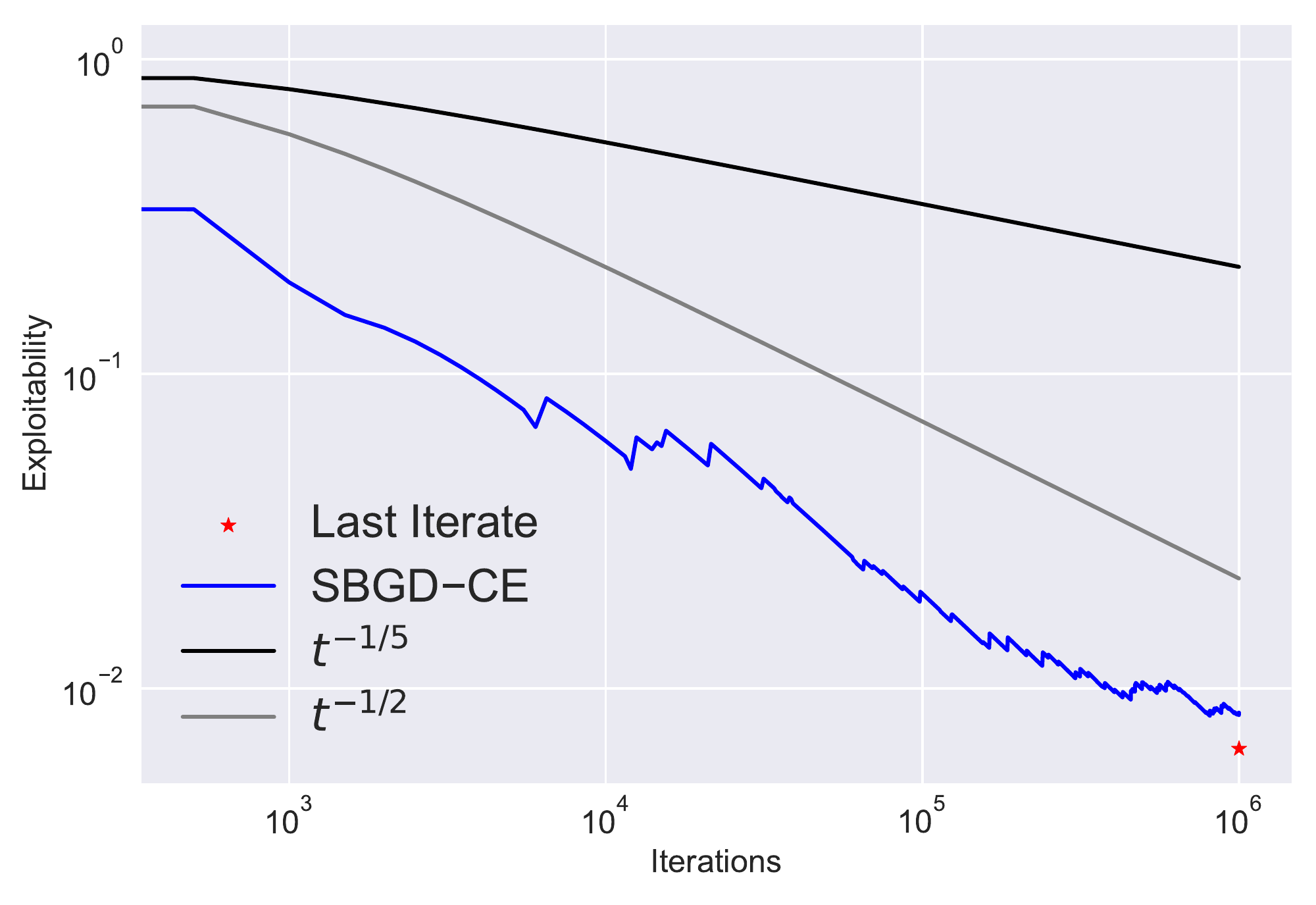}
     } \\
\end{tabular}
\caption{Regret and Exploitability on network games with $2$ agents.}
\label{fig:log2agent}
\end{figure*}
\section{Sketch of Proof of Theorem~\ref{thm:convergence_to_nash}}
\label{sec:second_thm_sketch}
In this section, we provide the main steps and ideas for proving Theorem~\ref{thm:convergence_to_nash}, establishing that in case all agents adopt Algorithm~\ref{alg:charateodory}, the overall system converges to a Mixed NE. We first introduce some important preliminary notions.

\begin{defn}[\textbf{Fractional Potential Function}]
\label{def:fractional_potential}
Let the fractional potential function $\Phi : \mathcal{X}_1 \times \ldots \times \mathcal{X}_n \rightarrow \mathbb{R}$ 
\begin{equation*}
\label{eq:lipschitz_potential}
        \Phi(x) \triangleq \sum_{e\in E} \sum_{\mathcal{S}\subset[n]} \prod_{j \in \mathcal{S}} x_{j,e} \prod_{j \notin \mathcal{S}} (1 - x_{j,e} ) \sum^{\abs{\mathcal{S}}}_{i=0} c_e(i).
\end{equation*}
\end{defn}
The potential function of Definition~\ref{def:fractional_potential} is crucial in our analysis since 
we can recast the problem of converging to NE into the problem of converging to a stationary point of $\Phi(x)$.
\begin{defn}
\label{d:stationary}
A point $x = (x_1,\ldots,x_n) \in \mathcal{X}_1 \times \ldots \times \mathcal{X}_n$ is called an \textit{$(\epsilon, \mu)$-stationary point} $\Phi(x)$ if and only if 
\[\norm{x - \Pi_{\mathcal{X}^\mu} \left[x - \frac{1}{2 n^2  c_{\max} \sqrt{m}}  \nabla \Phi(x) \right]}\leq \epsilon\]
where $\mathcal{X}^\mu \triangleq \mathcal{X}^\mu_1 \times \cdots \times \mathcal{X}^\mu_n$.
\end{defn}
In Lemma~\ref{lemma:smoothness}, we establish that the potential function is smooth and uniformly bounded over its domain.
\begin{restatable}{lemma}{lemmasmoothness}
\label{lemma:smoothness}
The potential function $\Phi(\cdot)$ of Definition~\ref{def:fractional_potential} is smooth. More precisely,
\begin{equation*}
    \norm{\nabla \Phi (x) - \nabla \Phi (x')}_2 \leq 2 n^2  c_{\max} \sqrt{m} \cdot \norm{x - x'}_2
\end{equation*}
for all $x,x' \in \mathcal{X}_1\times \ldots \times \mathcal{X}_n $. Moreover, the potential function $\Phi(x)$ is bounded by $m n c_{\max}$. We also denote $\lambda \triangleq (2 n^2  c_{\max} \sqrt{m})^{-1}$ 
\end{restatable}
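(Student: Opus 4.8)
The plan is to exploit the fact that $\Phi$ separates across resources and that, for each resource, the inner sum is an expectation under a product of Bernoullis. First I would write $\Phi(x)=\sum_{e\in E}\Phi_e(x_{\cdot,e})$, where $x_{\cdot,e}=(x_{1,e},\ldots,x_{n,e})$ and
\[\Phi_e(x_{\cdot,e})=\E\!\left[\,g_e(L_e)\,\right],\qquad L_e=\sum_{j=1}^n B_j,\quad B_j\sim\mathrm{Bernoulli}(x_{j,e})\ \text{independent},\]
with $g_e(k):=\sum_{i=0}^{k}c_e(i)$. The crucial structural observation is that $\Phi_e$ depends only on the $n$ coordinates $\{x_{j,e}\}_j$ of resource $e$, so the gradient entry indexed by $(k,e)$ involves only the variables of resource $e$, and the Hessian is block-diagonal across the $m$ resources. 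Boundedness is then immediate: since the weights $\prod_{j\in\mathcal S}x_{j,e}\prod_{j\notin\mathcal S}(1-x_{j,e})$ form a probability distribution over $\mathcal S\subseteq[n]$, we have $\Phi_e(x_{\cdot,e})\le\max_{k\le n}g_e(k)=g_e(n)$, and using $c_e$ non-decreasing with $c_e(n)\le c_{\max}$ and $c_e(0)=0$ (only loaded resources contribute) gives $g_e(n)\le n\,c_{\max}$; summing over the $m$ resources yields $\Phi(x)\le m n\,c_{\max}$.

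For smoothness I would first simplify the partial derivatives by a pairing/telescoping argument. Grouping each subset $\mathcal S$ containing $k$ with its counterpart $\mathcal S\setminus\{k\}$, the multilinear sum collapses and one obtains
\[\frac{\partial \Phi}{\partial x_{k,e}}(x)=\E\!\left[\,c_e\!\left(L_e^{-k}+1\right)\right],\]
where $L_e^{-k}=\sum_{j\neq k}B_j$ counts the agents other than $k$ selecting $e$; this quantity is nonnegative and lies in $[0,c_{\max}]$. Differentiating once more (again pairing subsets) gives, for $k\neq l$,
\[\frac{\partial^2 \Phi}{\partial x_{k,e}\,\partial x_{l,e}}(x)=\E\!\left[\,c_e\!\left(L_e^{-k,l}+2\right)-c_e\!\left(L_e^{-k,l}+1\right)\right],\]
while all cross-resource second derivatives vanish and the diagonal second derivatives vanish by multilinearity of $\Phi_e$ in each $x_{k,e}$.

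Given these formulas I would conclude in one of two equivalent ways. The direct route bounds the gap of a gradient coordinate by coupling the two product Bernoulli laws: since $c_e(\cdot+1)\le c_{\max}$, for any $x,x'$,
\[\left|\frac{\partial \Phi}{\partial x_{k,e}}(x)-\frac{\partial \Phi}{\partial x_{k,e}}(x')\right|\le c_{\max}\sum_{j\neq k}\left|x_{j,e}-x'_{j,e}\right|,\]
using that the total variation distance between two product Bernoulli measures is at most the sum of the marginal gaps. Squaring, summing over all $nm$ coordinates $(k,e)$, and applying Cauchy--Schwarz resource-by-resource then yields $\norm{\nabla\Phi(x)-\nabla\Phi(x')}_2\le n c_{\max}\norm{x-x'}_2$, comfortably below the stated constant $2n^2 c_{\max}\sqrt m$. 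Alternatively, the Hessian route bounds each of the at most $mn^2$ nonzero second partials by $c_{\max}$ (monotone increments of $c_e$, each at most $c_{\max}$) and controls the operator norm by the Frobenius norm, $\norm{\nabla^2\Phi}_{\mathrm{op}}\le\norm{\nabla^2\Phi}_F\le c_{\max}n\sqrt m$, again giving the claim. I expect the main obstacle to be purely bookkeeping: carrying out the subset-pairing that reduces the exponential sums to the clean expectations above, and tracking constants through the monotonicity estimates so the final bound lands below $2n^2 c_{\max}\sqrt m$. Once the probabilistic reformulation is in place, both the smoothness and boundedness claims follow by elementary estimates.
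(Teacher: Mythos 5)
Your proof is correct, and it follows the same overall skeleton as the paper's: exploit the block structure of the Hessian across resources, bound the nonzero second partials entrywise, and pass from the Frobenius norm to the operator norm. The difference is in how the entrywise bound is obtained. The paper leaves the mixed partial $\partial^2\Phi/\partial x_{\bar j e}\partial x_{\bar i e}$ as a signed sum of four subset-sums, bounds each of the four by $n c_{\max}$ using only that the subset-products form probability distributions, and concludes $|\partial^2\Phi/\partial x_{\bar j e}\partial x_{\bar i e}|\le 2nc_{\max}$, which yields the Frobenius bound $2n^2c_{\max}\sqrt m$. Your telescoping collapse of the subset sums into expected increments of $c_e$ --- namely $\partial\Phi/\partial x_{k,e}=\E[c_e(L_e^{-k}+1)]$ and $\partial^2\Phi/\partial x_{k,e}\partial x_{l,e}=\E[c_e(L_e^{-k,l}+2)-c_e(L_e^{-k,l}+1)]$ --- exploits monotonicity of $c_e$ to bound each nonzero Hessian entry by $c_{\max}$ rather than $2nc_{\max}$, giving the sharper smoothness constant $nc_{\max}\sqrt m$; your alternative first-order coupling argument sharpens this further to $nc_{\max}$, removing the $\sqrt m$ entirely. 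Both improvements are consistent with the lemma as stated (the claimed constant is only an upper bound) and would propagate to a larger $\lambda$ and hence slightly better rates downstream. Two minor points: your boundedness argument invokes $c_e(0)=0$, which the paper does not state explicitly (it only assumes $c_e\ge 0$ and non-decreasing), though the paper's own claimed bound $mnc_{\max}$ implicitly relies on the same convention; and in the coupling route you should either carry the factor $2$ from the total-variation inequality or, as is cleaner, use a one-coordinate-at-a-time hybrid argument, since each single Bernoulli flip changes $\E[c_e(L_e^{-k}+1)]$ by at most $c_{\max}$ --- either way the final constant still lands below $2n^2c_{\max}\sqrt m$.
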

In \Cref{lemma:conversion} we formalize the link between approximate NE and approximate stationary points of the $\Phi(x)$.
\begin{restatable}{lemma}{lemmaconversion}
\label{lemma:conversion}
Let $\pi = (\pi_1,\ldots,\pi_n) \in \Delta(\mathcal{P}_1)\times \ldots \times \Delta(\mathcal{P}_n)$ and  $x = (x_1,\ldots,x_n) \in \mathcal{X}_1 \times \ldots \times \mathcal{X}_n$ such that for all resources $e \in E$,
\[x_{i,e} = \Pp{p_i \sim \pi_i}{e \in p_i}.\]
In case $x$ is an $(\epsilon, \mu)$-stationary point of $\Phi(x)$ then $\mathbf{\pi}$ is a $(4 n^2  mc_{\max}\epsilon + 2 m^2 n c_{\max} \mu)$-approximate Mixed NE.
\end{restatable}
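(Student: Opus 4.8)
\textbf{Proof plan for Lemma~\ref{lemma:conversion}.}

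The plan is to connect the stationarity condition on $\Phi$ to the Nash deviation condition through two separate estimates: first, a relation between the gradient of $\Phi$ and the individual cost gradients (i.e.\ the vector of marginal costs that each agent faces), and second, a standard variational-inequality characterization of a projected-gradient fixed point. First I would compute, for each agent $i$ and resource $e$, the partial derivative $\partial \Phi / \partial x_{i,e}$. Because the congestion potential has the classical Rosenthal form, I expect this partial derivative to equal exactly the expected marginal cost of resource $e$ to agent $i$ under the product distribution induced by $x_{-i}$, i.e.\ something of the shape $\E_{x_{-i}}[c_e(\ell_e+1)]$ summed appropriately. Concretely, I would verify the identity $\ip{\nabla_{x_i}\Phi(x)}{y_i - x_i} = c_i(\pi_{y_i}, \pi_{-i}) - c_i(\pi_i, \pi_{-i})$ (up to the linearity of the cost in each agent's own fractional vector), so that directional derivatives of $\Phi$ coincide with cost differences for the game. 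This is the structural heart of why a potential exists, and carrying out the differentiation carefully is the step I expect to demand the most bookkeeping.

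Second I would invoke the defining inequality of an $(\epsilon,\mu)$-stationary point. Writing $z := \Pi_{\mathcal{X}^\mu}[x - \lambda \nabla \Phi(x)]$ with $\lambda = (2n^2 c_{\max}\sqrt{m})^{-1}$, the hypothesis gives $\norm{x - z} \leq \epsilon$. The first-order optimality (obtuse-angle) property of the Euclidean projection yields, for every $w \in \mathcal{X}^\mu$,
\[
\ip{x - \lambda \nabla \Phi(x) - z}{w - z} \leq 0,
\]
which rearranges into a bound of the form $\ip{\nabla \Phi(x)}{z - w} \leq \lambda^{-1}\ip{x - z}{z - w} \leq \lambda^{-1}\epsilon \cdot \mathrm{diam}(\mathcal{X}^\mu)$. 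Since the feasible set sits inside $[0,1]^{nm}$, its diameter is at most $\sqrt{nm}$, so the right-hand side is $\mathcal{O}(n^2 c_{\max} m \cdot \epsilon)$ after substituting $\lambda$. This controls the deviation of $\Phi$ (equivalently, of the summed costs) against any competitor $w$ inside the bounded-away polytope.

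The remaining gap is that a genuine Nash deviation compares against an arbitrary $\pi_i' \in \Delta(\mathcal{P}_i)$, whose fractional image $x_i'$ lies in the full polytope $\mathcal{X}_i$ but possibly not in $\mathcal{X}_i^\mu$. I would close this gap exactly as in Definition~\ref{d:project} and Remark~\ref{remark:comparator}: given any target $x_i' \in \mathcal{X}_i$, form the mixture $(1-m\mu)x_i' + \mu \sum_\ell \tilde{p}^\ell \in \mathcal{X}_i^\mu$, apply the projection inequality to this admissible competitor, and then bound the cost of the correction term $\mu \sum_\ell \tilde p^\ell - m\mu x_i'$. Each agent's cost is bounded by $m c_{\max}$ and the perturbation has $\ell_1$-mass $\mathcal{O}(m\mu)$ per agent, which contributes the additive $\mathcal{O}(m^2 n c_{\max}\mu)$ term, while converting the $x$-deviation (measured per resource) back to the full MNE deviation across all $n$ agents and $m$ resources produces the $4 n^2 m c_{\max}\epsilon$ term. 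Combining the three estimates — the potential-to-cost identity, the projection-stationarity bound scaled by $\lambda$ and the diameter, and the $\mu$-rounding correction — yields the claimed $(4 n^2 m c_{\max}\epsilon + 2 m^2 n c_{\max}\mu)$-approximate Mixed NE guarantee. The main obstacle is the first step: establishing that directional derivatives of the fractional potential reproduce the individual agents' expected cost differences, which is what licenses treating a stationary point of $\Phi$ as an approximate equilibrium.
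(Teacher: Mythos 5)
Your plan follows the paper's proof essentially step for step: the same multilinear potential-to-cost identity (the paper's Lemma on the fractional potential plus linearity of $\Phi$ in $x_i$), the same variational-inequality reading of the $(\epsilon,\mu)$-stationarity condition (which the paper imports from Ghadimi and from Agarwal et al.\ rather than re-deriving from the obtuse-angle property as you do), and the same $\mu$-mixture rounding of the comparator to pass from $\mathcal{X}_i^{\mu}$ to $\mathcal{X}_i$, contributing the $2m^2 n c_{\max}\mu$ term. The one point to tighten is the diameter: since a unilateral deviation changes only agent $i$'s block, the relevant perturbation satisfies $\norm{\delta}\leq\sqrt{m}$ per agent, not $\sqrt{nm}$; using the full diameter of $\mathcal{X}^{\mu}$ as you propose would cost an extra $\sqrt{n}$ and miss the stated constant $4n^2 m c_{\max}\epsilon$.
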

\subsection{Convergence to stationary points}
In this section, we show that in case all agents use Algorithm~\ref{alg:charateodory} to (randomly) select their strategies, the produced sequence $x^t = (x_1^t,\ldots,x_n^t)$ converges to a stationary point of the potential function $\Phi(x)$.

We first show that the updates generated by each agent's individual implementation of \Cref{alg:alg1} can be equivalently described as the update performed by stochastic gradient descent on the potential function projected on the \textit{time-varying polytope} $\mathcal{X}^{\mu_t}:= \mathcal{X}^{\mu_t}_1 \times \ldots \times \mathcal{X}^{\mu_t}_n$.

\begin{restatable}{theorem}{thmestimatorprops}
\label{thm:estimator_props}
If each agent $i$ (randomly) selects its strategy according to Algorithm~\ref{alg:charateodory}. Then the produced sequence of vectors $x^1,\ldots,x^T$ can be equivalently described as
\begin{equation} x^{t+1} = \Pi_{\mathcal{X}^{\mu_{t+1}}}\left[x^t - \gamma_t \cdot \nabla_t \right] \label{eq:update_rule}\end{equation}
where the estimator $\nabla_t \triangleq \bs{\hat{c}^t_1, \dots, \hat{c}^t_n }$ ($\hat{c}^t_i$ is the cost estimate generated by player $i$ according to Step 7 in \Cref{alg:alg1}) satisfies
\begin{enumerate}
\item $\E[ \nabla_t ] = \nabla \Phi(x^t)$ \textit{\;\; and \;\;2.} $\E\left[\norm{\nabla_t}^2 \right] \leq \frac{n c^2_{\max} m}{\mu_t}.$
\end{enumerate}
\end{restatable}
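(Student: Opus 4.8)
The plan is to verify the three assertions separately: the equivalent update rule \eqref{eq:update_rule}, the unbiasedness (1), and the second-moment bound (2). Throughout I condition on $x^t$ and use that, given $x^t$, the agents sample their strategies \emph{independently} across players, each agent $i$ using resource $e$ with marginal probability $\Pr[e \in p^t_i] = x^t_{i,e}$ by the Carathéodory decomposition (\Cref{l:caratheodory}).

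First, for the equivalent description I would simply observe that each agent $i$ independently executes Step~7 of \Cref{alg:alg1}, namely $x_i^{t+1} = \Pi_{\mathcal{X}^{\mu_{t+1}}_i}[x_i^t - \gamma_t \hat{c}^t_i]$, with a common step size $\gamma_t$. Since $\mathcal{X}^{\mu_{t+1}} = \mathcal{X}^{\mu_{t+1}}_1 \times \cdots \times \mathcal{X}^{\mu_{t+1}}_n$ is a Cartesian product and the Euclidean projection onto a product set equals the product of the coordinate-block projections, concatenating the $n$ per-agent updates yields exactly $x^{t+1} = \Pi_{\mathcal{X}^{\mu_{t+1}}}[x^t - \gamma_t \nabla_t]$ with $\nabla_t = [\hat{c}^t_1, \ldots, \hat{c}^t_n]$. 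This step is immediate.

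Second, for unbiasedness I compute $\nabla\Phi$ explicitly and recognize it as a conditional expected cost. Fix a player $k$ and an active resource $e \in E_k$; only the $e$-summand of $\Phi$ (\Cref{def:fractional_potential}) depends on $x_{k,e}$. Writing $P(s) := \sum_{i=0}^{s} c_e(i)$ and grouping subsets $\mathcal{S} \subseteq [n]$ by the set $\mathcal{T} = \mathcal{S}\setminus\{k\}$ of \emph{other} players on $e$, the terms with $k \in \mathcal{S}$ and $k \notin \mathcal{S}$ pair up and leave the telescoping difference $P(|\mathcal{T}|+1) - P(|\mathcal{T}|) = c_e(|\mathcal{T}|+1)$, giving
\[ \frac{\partial \Phi(x)}{\partial x_{k,e}} = \sum_{\mathcal{T} \subseteq [n] \setminus \{k\}} \prod_{j \in \mathcal{T}} x_{j,e} \prod_{j \in [n] \setminus (\mathcal{T}\cup\{k\})} (1 - x_{j,e}) \, c_e(|\mathcal{T}| + 1), \]
which is precisely $\E[c_e(\ell_e(p^t)) \mid e \in p^t_k]$ under independent selections with marginals $x^t_{j,e}$. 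On the other hand, using the independence of the agents and the Carathéodory marginal property I obtain $\E[\hat{c}^t_{k,e}\mid x^t] = \E[c_e(\ell_e(p^t))\,\mathds{1}[e \in p^t_k]]/x^t_{k,e} = \E[c_e(\ell_e(p^t)) \mid e \in p^t_k]$, matching the partial derivative block by block and yielding claim (1).

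Third, for the second moment I bound each coordinate: since $\mathds{1}[e\in p^t_i]^2 = \mathds{1}[e\in p^t_i]$ and $c_e(\ell_e(p^t)) \leq c_{\max} = \max_e c_e(n)$ (costs are nondecreasing and loads are at most $n$), I get $\E[(\hat{c}^t_{i,e})^2\mid x^t] = \E[c_e(\ell_e(p^t))^2\mathds{1}[e\in p^t_i]]/(x^t_{i,e})^2 \leq c_{\max}^2\,x^t_{i,e}/(x^t_{i,e})^2 = c_{\max}^2/x^t_{i,e}$. The defining constraint of the $\mu_t$-Bounded-Away polytope (\Cref{l:path_polytope2}) gives $x^t_{i,e} \geq \mu_t$ on active resources, so summing over the at most $m$ active resources of each of the $n$ agents gives $\E[\|\nabla_t\|^2\mid x^t] \leq n m c_{\max}^2/\mu_t$; taking the outer expectation gives claim (2). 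The main obstacle is the unbiasedness argument of the second paragraph — the pairing/telescoping identity for $\partial\Phi/\partial x_{k,e}$ together with the verification that cross-player independence makes the realized estimator exactly unbiased for that derivative; the projection decomposition and the second-moment estimate are comparatively routine.
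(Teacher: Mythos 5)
Your proposal is correct and follows essentially the same route as the paper's proof: separability of the Euclidean projection over the product polytope for the update rule, matching $\E[\hat{c}^t_{i,e}\mid x^t]$ to $\partial\Phi/\partial x_{i,e}$ via the telescoping identity $\sum_{\ell=0}^{|\mathcal{T}|+1}c_e(\ell)-\sum_{\ell=0}^{|\mathcal{T}|}c_e(\ell)=c_e(|\mathcal{T}|+1)$ together with cross-player independence of the Carath\'eodory samples, and the coordinatewise second-moment bound $\E[(\hat{c}^t_{i,e})^2]\leq c_{\max}^2/x^t_{i,e}\leq c_{\max}^2/\mu_t$ summed over $n$ agents and at most $m$ resources. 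The only cosmetic difference is the order in which you compute the two sides of the unbiasedness identity; the content is identical to the paper's argument.
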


The main technical contribution of this section, is to establish that the sequence $x^1,\ldots,x^T$ produced by Equation~\ref{eq:update_rule} converges to an $(\epsilon, \mu)$-stationary point of Definition~\ref{d:stationary}. The major challenge in proving the latter comes from the fact that in Equation~\ref{eq:update_rule} the projection step is respect to the time-changing polytope $\mathcal{X}^{\mu_t}$ while the projection in the definition of $(\epsilon, \mu)$-stationary point is with respect to the polytope $\mathcal{X}^\mu$.  
\begin{restatable}{theorem}{statpoint} \label{lemma:G_bound} Let $G(x) = \Pi_{\mathcal{X}^{\mu_T}}\bs{x - \lambda \nabla \Phi(x)} - x$ and the sequence $x^1,\ldots,x^T$ produced by Equation~\ref{eq:update_rule}. Then $\frac{1}{T}\sum^T_{t=1}\E \left[\norm{G(x^{t})}_2\right]$ is upper bounded by
\[ 2\sqrt{\frac{\lambda^2n mc_{\max} }{2 T \gamma_T}+ \frac{\lambda c^2_{\max} n m \sum_{t=1}^T \frac{\gamma_t^2}{\mu_t}}{ 2 T \gamma_T}}
    + \frac{8\sqrt{n  m^3} }{T} \sum^T_{t=1} \mu_{t}.\]
\end{restatable}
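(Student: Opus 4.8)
The plan is to view the projected stochastic update~\eqref{eq:update_rule} as stochastic gradient descent on the smooth potential $\Phi$ and to control the (deterministic) gradient mapping of $\Phi$. Write $g_t \triangleq \tfrac{1}{\gamma_t}\br{x^t - x^{t+1}}$ for the step actually taken, $\bar g_t \triangleq \tfrac{1}{\gamma_t}\br{x^t - \Pi_{\mathcal{X}^{\mu_{t+1}}}\bs{x^t - \gamma_t \nabla\Phi(x^t)}}$ for the gradient mapping driven by the true gradient on the same time-varying polytope, and $\delta_t \triangleq \nabla_t - \nabla\Phi(x^t)$ for the estimation error. Because $\mu_{t+1}\le \mu_t$ we have $x^t \in \mathcal{X}^{\mu_t}\subseteq \mathcal{X}^{\mu_{t+1}}$, so $x^t$ is feasible for the projection defining $x^{t+1}$; this lets me use the variational inequality of the projection with the comparator $u=x^t$ to obtain $\innerprod{\nabla_t}{g_t}\ge \norm{g_t}^2$, while nonexpansiveness of the projection gives $\norm{g_t}\le \norm{\nabla_t}$ and $\norm{g_t-\bar g_t}\le \norm{\delta_t}$. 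The endgame is: (i)~a one-step descent lemma; (ii)~telescoping it over $t=1,\dots,T$ using that $\Phi$ is nonnegative and bounded by $mnc_{\max}$ (\Cref{lemma:smoothness}); (iii)~Cauchy--Schwarz to pass from the second moment of $\bar g_t$ to its first moment; and (iv)~converting the time-varying, step-$\gamma_t$ mapping $\bar g_t$ into the fixed-set, step-$\lambda$ mapping $G$ of \Cref{d:stationary}.

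For step~(i) I would start from the smoothness descent $\Phi(x^{t+1})\le \Phi(x^t) - \gamma_t \innerprod{\nabla\Phi(x^t)}{g_t} + \tfrac{L\gamma_t^2}{2}\norm{g_t}^2$ with $L = 1/\lambda = 2n^2 c_{\max}\sqrt m$ from \Cref{lemma:smoothness}, and take the conditional expectation given the history $\mathcal{F}_t$. Using $\nabla\Phi(x^t)=\nabla_t-\delta_t$ and $\innerprod{\nabla_t}{g_t}\ge\norm{g_t}^2$, the linear term splits into the mean-zero martingale increment $\innerprod{\delta_t}{\bar g_t}$ (which vanishes in expectation by the unbiasedness $\E[\nabla_t\mid\mathcal{F}_t]=\nabla\Phi(x^t)$ of \Cref{thm:estimator_props}) and a residual coupling term, while the quadratic term is controlled by the second-moment bound $\E\bs{\norm{\nabla_t}^2}\le \tfrac{nc_{\max}^2 m}{\mu_t}$ of \Cref{thm:estimator_props} together with $\norm{g_t}\le\norm{\nabla_t}$. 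The target is a descent inequality of the form $\E\bs{\Phi(x^{t+1})\mid\mathcal{F}_t}\le \Phi(x^t) - \tfrac{\gamma_t}{2}\norm{\bar g_t}^2 + \tfrac{L\gamma_t^2}{2}\cdot\tfrac{nc_{\max}^2 m}{\mu_t}$.

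The hard part is exactly this residual coupling term: because projecting onto the \emph{time-varying} polytope makes $g_t$ a nonlinear function of $\nabla_t$, the naive bound $\innerprod{\delta_t}{g_t-\bar g_t}\le\norm{\delta_t}^2$ would inject a first-order error $\gamma_t\norm{\delta_t}^2\asymp \gamma_t/\mu_t$, whose telescoped, normalized contribution $\tfrac{1}{T\gamma_T}\sum_t \gamma_t/\mu_t$ \emph{diverges} for the schedule $\gamma_t=t^{-3/5},\ \mu_t=t^{-1/5}$. The whole analysis therefore hinges on forcing the estimation error to enter only at second order $\gamma_t^2/\mu_t$, for which $\tfrac{1}{T\gamma_T}\sum_t \gamma_t^2/\mu_t\to 0$, and this is where I expect the main obstacle to lie; I would exploit the bounded-second-moment property of \Cref{thm:estimator_props} and the $1$-Lipschitzness of the projection, retaining the quadratic descent contribution $-\tfrac{\gamma_t}{2}\norm{g_t}^2$ to absorb the correlated part of $\innerprod{\delta_t}{g_t-\bar g_t}$ rather than discarding it.

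Granting the descent inequality, step~(ii) telescopes to $\sum_t \gamma_t\,\E\norm{\bar g_t}^2 \le 2mnc_{\max} + L n m c_{\max}^2\sum_t \gamma_t^2/\mu_t$, hence $\tfrac{1}{T}\sum_t \E\norm{\bar g_t}^2 \le \tfrac{1}{T\gamma_T}\br{2mnc_{\max}+Lnmc_{\max}^2\sum_t\gamma_t^2/\mu_t}$ since $\gamma_t\ge\gamma_T$; multiplying by $\lambda^2$ and using $\lambda^2 L=\lambda$ reproduces the expression under the square root up to the constant absorbed into the leading factor $2$. For step~(iv), I would bound $\norm{G(x^t)}=\lambda\,\norm{g_{\lambda,\mathcal{X}^{\mu_T}}(x^t)}$ by $\lambda\norm{\bar g_t}$ plus a geometric correction: the step-size change from $\gamma_t$ to $\lambda$ is handled by monotonicity of the gradient-mapping norm in the step size, whereas the change of feasible set from $\mathcal{X}^{\mu_{t+1}}$ to $\mathcal{X}^{\mu_T}$ contributes a term proportional to their Hausdorff distance $O(\sqrt{nm}\,\mu_t)$, yielding the additive $O(\sqrt{nm^3}\,\mu_t)$ correction once the Lipschitz dependence of the projection on the constraint level is accounted for. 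Finally, Jensen/Cauchy--Schwarz gives $\tfrac{1}{T}\sum_t\E\norm{\bar g_t}\le \sqrt{\tfrac{1}{T}\sum_t \E\norm{\bar g_t}^2}$, and combining this with the per-step correction averaged as $\tfrac{1}{T}\sum_t \mu_t$ produces the stated bound $2\sqrt{\cdots}+\tfrac{8\sqrt{nm^3}}{T}\sum_t\mu_t$.
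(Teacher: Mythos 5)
Your plan correctly locates the difficulty, but the step you yourself flag as ``the hard part'' is a genuine gap, and the device you propose for closing it does not work. In your decomposition the residual coupling term is $\gamma_t\innerprod{\delta_t}{g_t-\bar g_t}$, where $g_t$ depends on $\nabla_t$ through the projection, so its conditional expectation does not vanish. The only structural control available is non-expansiveness, $\norm{g_t-\bar g_t}\le\norm{\delta_t}$, which gives $\abs{\innerprod{\delta_t}{g_t-\bar g_t}}\le\norm{\delta_t}^2$; and $\norm{\delta_t}^2$ is not dominated by $\norm{g_t}^2$ or $\norm{\bar g_t}^2$ (the projection can make $g_t$ nearly zero while $\delta_t$ is large), so it cannot be absorbed by the negative term $-\tfrac{\gamma_t}{2}\norm{g_t}^2$: any Young-type split still leaves a residual of order $\gamma_t\,\E\bs{\norm{\delta_t}^2}\asymp\gamma_t\, n m c_{\max}^2/\mu_t$. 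As you observe, $\tfrac{1}{T\gamma_T}\sum_t\gamma_t/\mu_t$ diverges for the schedule $\gamma_t=t^{-3/5}$, $\mu_t=t^{-1/5}$, so the announced descent inequality $\E\bs{\Phi(x^{t+1})\mid\mathcal{F}_t}\le\Phi(x^t)-\tfrac{\gamma_t}{2}\norm{\bar g_t}^2+\tfrac{L\gamma_t^2}{2}\cdot\tfrac{nc_{\max}^2m}{\mu_t}$ is not established, and steps (ii)--(iv) have nothing to telescope.

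The paper sidesteps this issue by running the analysis on the Moreau envelope $\phi_{\lambda\mathcal{X}^{\mu_{t+1}}}(x)=\min_{y\in\mathcal{X}^{\mu_{t+1}}}\br{\Phi(y)+\tfrac{1}{\lambda}\norm{x-y}^2}$ rather than on $\Phi$ itself. The comparison point $y^{t+1}=\argmin_{y\in\mathcal{X}^{\mu_{t+1}}}\br{\Phi(y)+\tfrac{1}{\lambda}\norm{x^t-y}^2}$ is a deterministic function of $x^t$, hence $\mathcal{F}_t$-measurable, so in the one-step bound $\phi_{\lambda\mathcal{X}^{\mu_{t+1}}}(x^{t+1})\le\phi_{\lambda\mathcal{X}^{\mu_{t+1}}}(x^t)+\tfrac{\gamma_t^2}{\lambda}\norm{\nabla_t}^2-\tfrac{2\gamma_t}{\lambda}\innerprod{x^t-y^{t+1}}{\nabla_t}$ the linear term has conditional expectation exactly $-\tfrac{2\gamma_t}{\lambda}\innerprod{x^t-y^{t+1}}{\nabla\Phi(x^t)}$ with no residual; the noise enters only through the second-order term $\gamma_t^2\,\E\bs{\norm{\nabla_t}^2}/\lambda\asymp\gamma_t^2/\mu_t$, which is exactly the summable quantity you were aiming for. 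Smoothness then turns the linear term into $-\tfrac{\gamma_t}{\lambda^2}\norm{y^{t+1}-x^t}^2$, and the nesting $\mathcal{X}^{\mu_t}\subseteq\mathcal{X}^{\mu_{t+1}}$ lets the envelopes telescope. If you want to keep a gradient-mapping formulation you need an analogous $\mathcal{F}_t$-measurable surrogate for the realized step; your $\bar g_t$ is $\mathcal{F}_t$-measurable but $g_t$ is not, and the entire problem is the gap between the two. Your step (iv) (switching the feasible set to $\mathcal{X}^{\mu_T}$ and the step size to $\lambda$ at a cost of order $\sqrt{nm^3}\,\mu_t$ per round) is in the right spirit; the paper implements it via a contraction argument on the fixed point $\tilde y^{t+1}=\Pi_{\mathcal{X}^{\mu_T}}\bs{x^t-\tfrac{\lambda}{2}\nabla\Phi(\tilde y^{t+1})}$ combined with the comparator construction of Remark~\ref{remark:comparator}.
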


\begin{figure*}[!h] 
\centering
\begin{tabular}{cc}
\subfloat[$\frac{\mathcal{R}_\mathcal{A}(T)}{T}$ \label{fig:logregret20}]{%
    \includegraphics[width=0.42\linewidth]{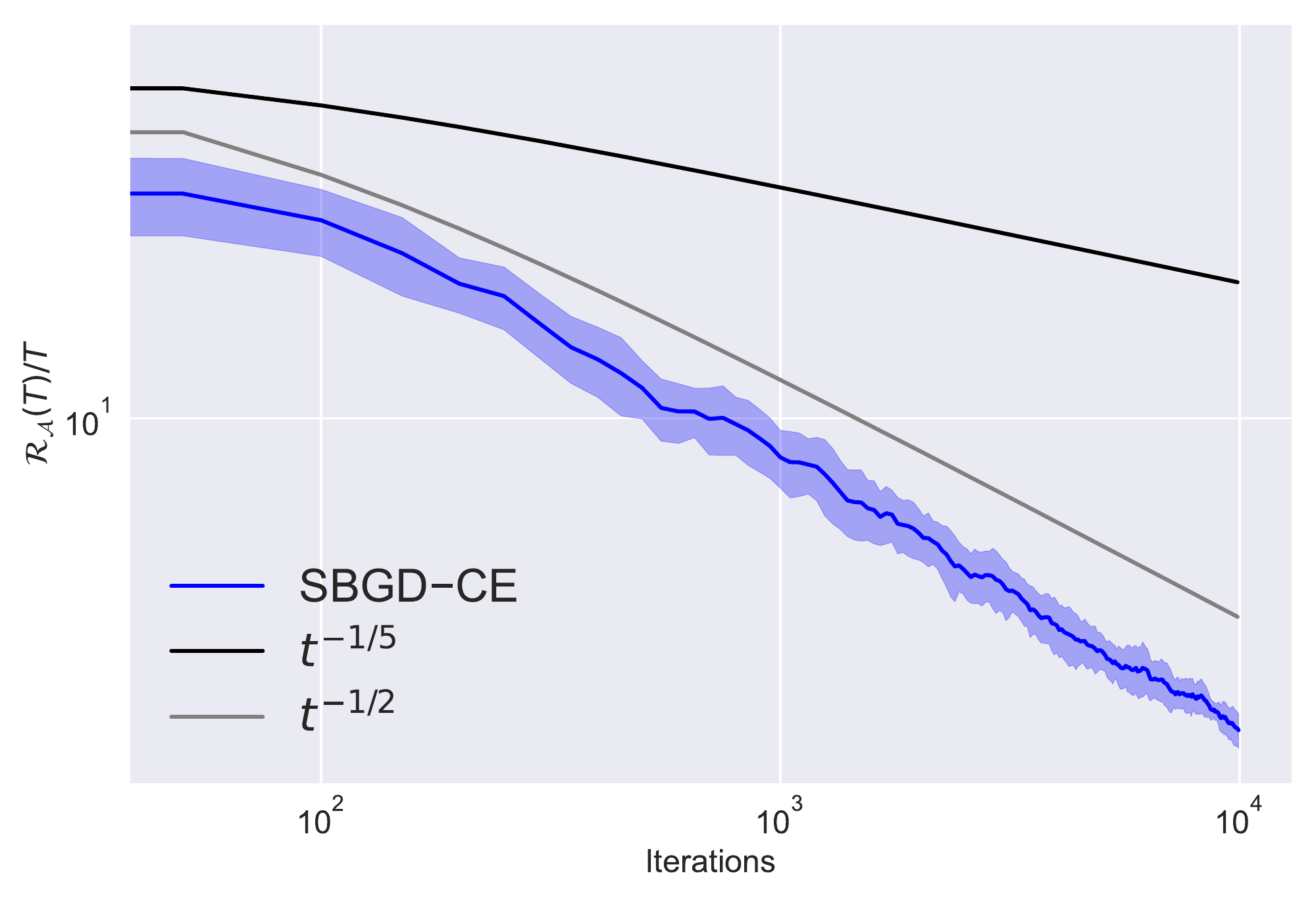}
     } &
\subfloat[Exploitability \label{fig:lognash20}]{%
    \includegraphics[width=0.42\linewidth]{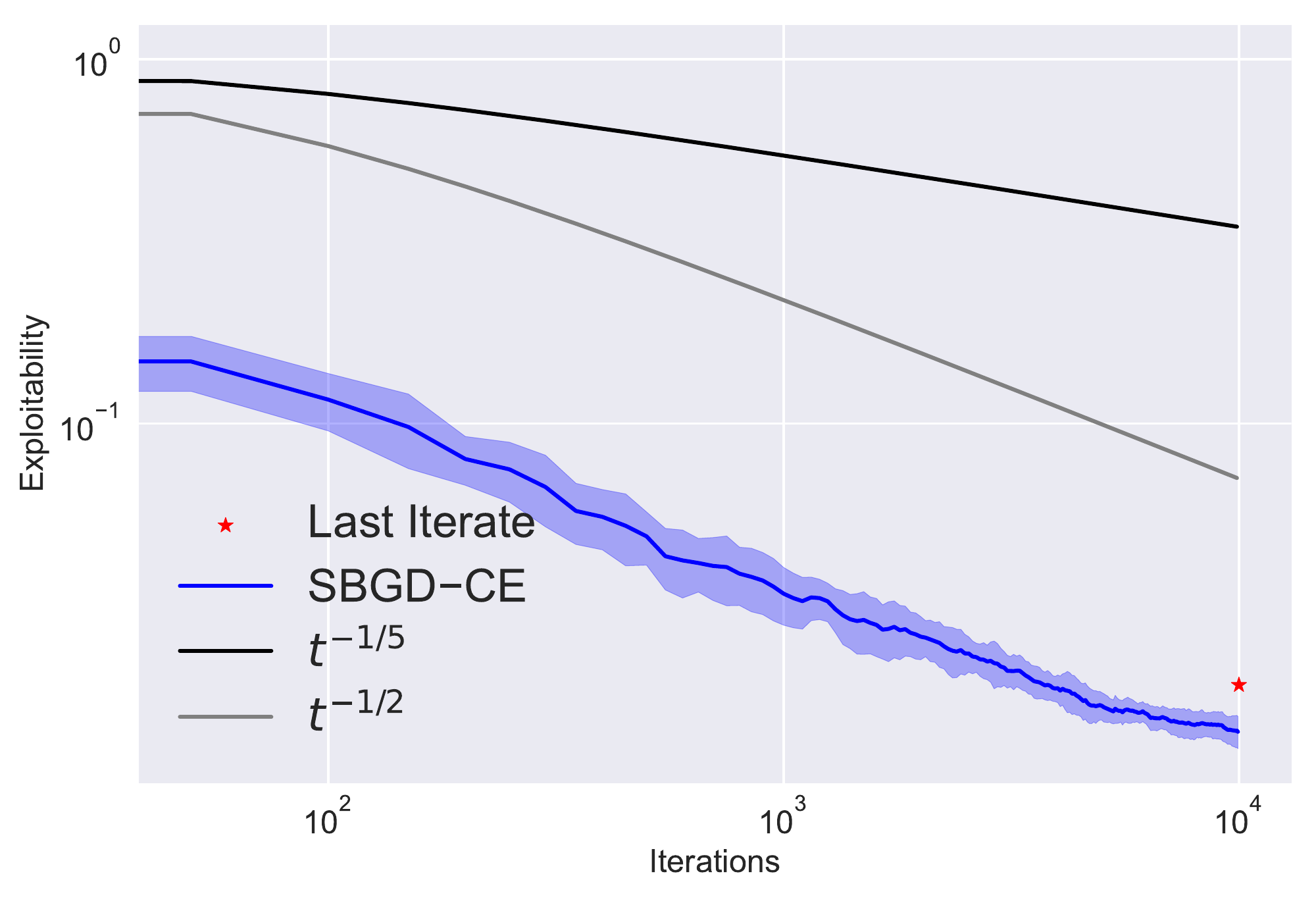}
     } \\
\subfloat[$\frac{\mathcal{R}_\mathcal{A}(T)}{T}$  \label{fig:logregret5}]{%
    \includegraphics[width=0.42\linewidth]{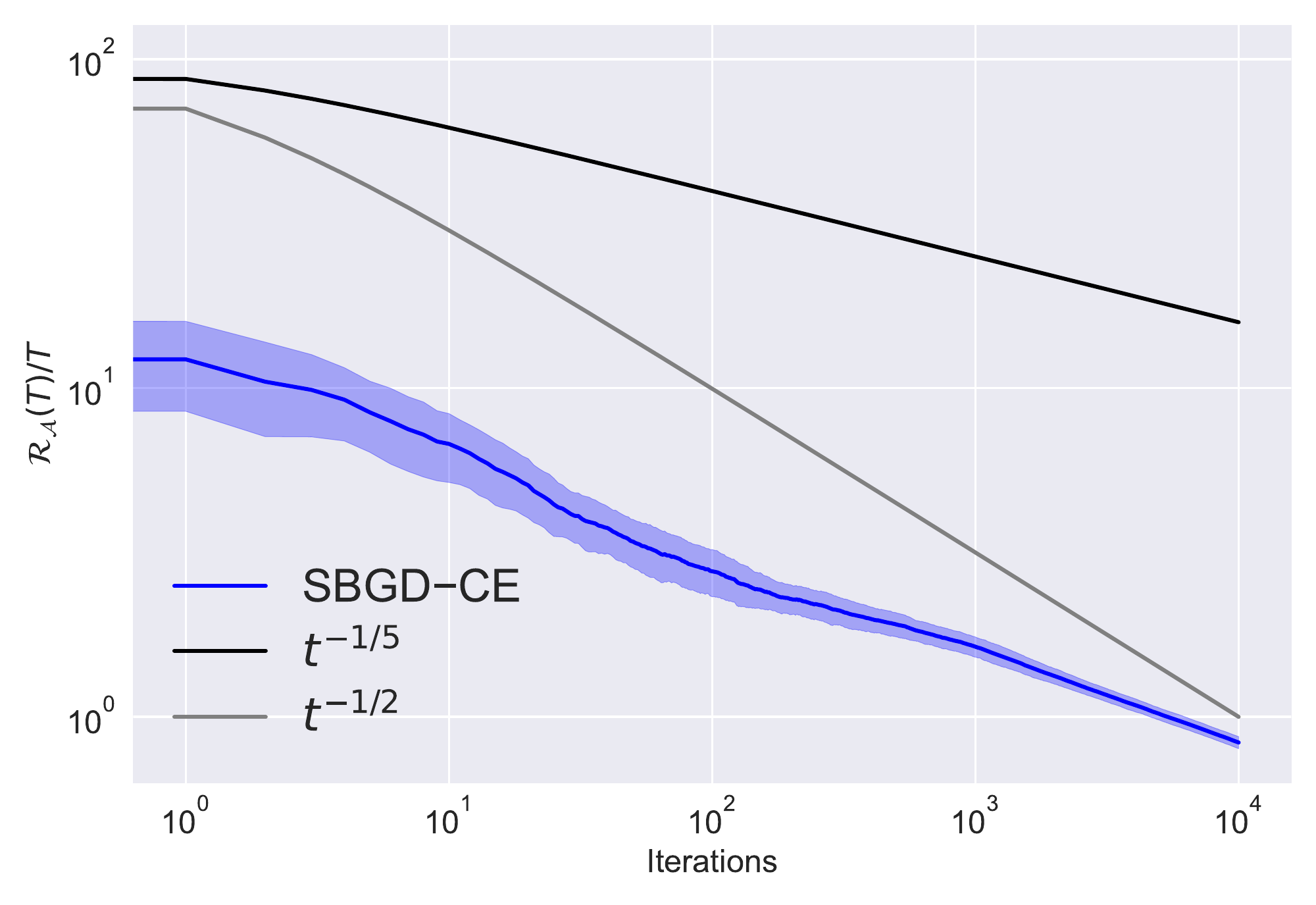}
     } &
\subfloat[Exploitability \label{fig:lognash_5_agent}]{%
    \includegraphics[width=0.42\linewidth]{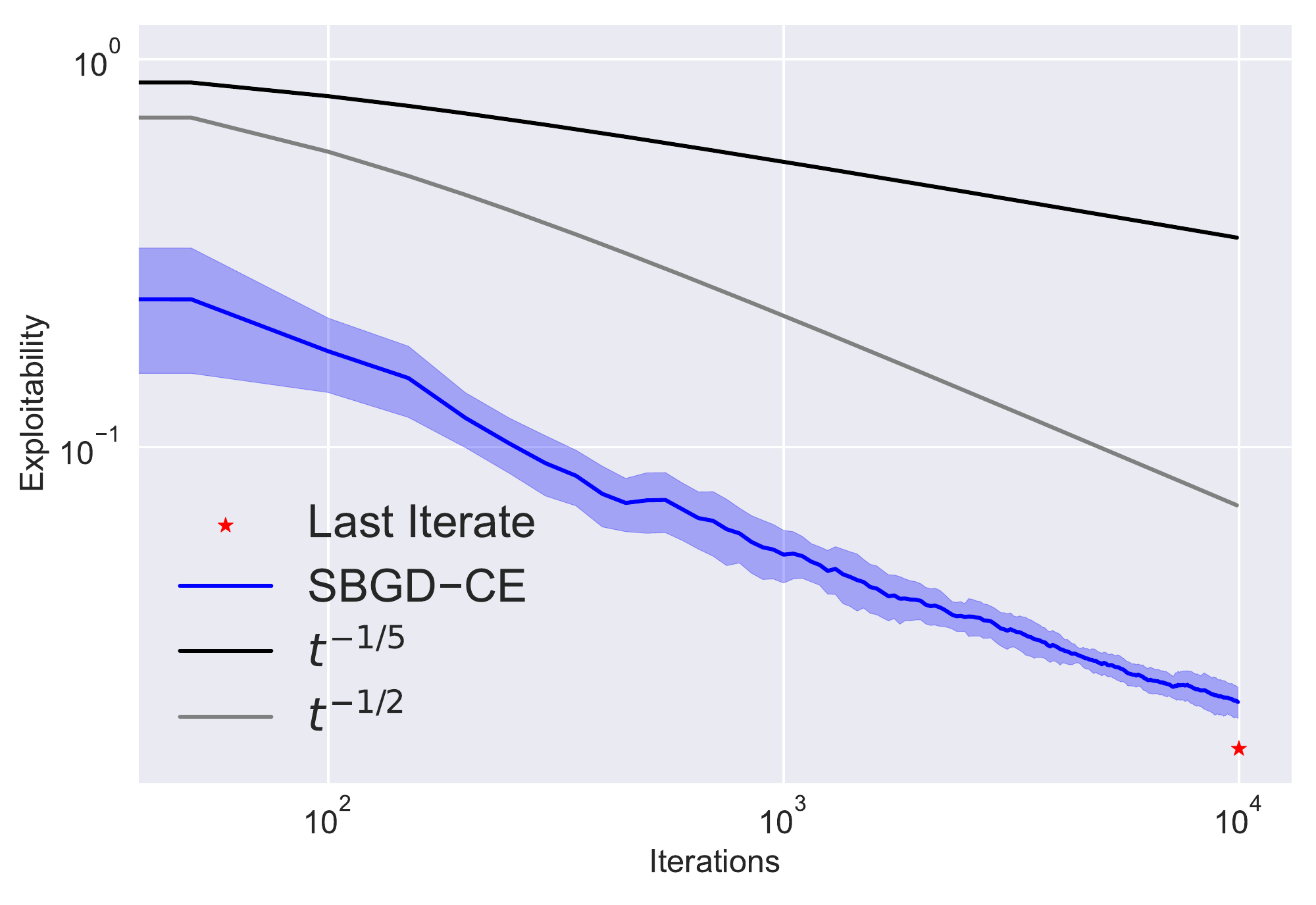}
     } \\
\end{tabular}
\caption{Experiments on network games with $20$ nodes for $20$ (\Cref{fig:logregret20} and \ref{fig:lognash20}) and $5$ agents (\Cref{fig:logregret5} and \ref{fig:lognash_5_agent}). Curves averaged over $10$ seeds for the 20 agents case and $50$ seeds for 5 agents.}
\label{fig:5agent_game}
\end{figure*}
\section{Experiments}
\label{sec:experiments}

We aim at verifying our theoretical statement by providing experiments in network congestion games. We consider a multigraph with chain topology composed by a set of nodes $\bc{v_i}^{|V|}_i$ (with $|V|=19$) where every node $v_i$ is connected only to $v_{i+1}$ by $2$ edges. Under this setting, Frank-Wolfe with Exploration \cite{du22} can not be implemented efficiently since there are $2^{19}$ possible paths. The same holds for \cite{leonardos2022global,DWZJ22}.
In order to verify empirically verify the convergence to NE, we monitor the exploitability of a strategy profile $(\pi_i,\pi_{-i})$ defined  as $\max_{i\in[n]} \frac{c_i(\pi_i,\pi_{-i}) -\min_{\pi^\prime_{i}} c_i(\pi'_i,\pi_{-i})}{c_i(\pi'_i,\pi_{-i})}$ which is $0$ for any NE. \Cref{subfig:explo2} shows the exploitability of the average path chosen by SBGD-CE. We notice it decreases at a rate $\approx t^{-1/2}$ which is better the theoretical bound provided in \Cref{thm:convergence_to_nash}. Furthermore, the red star in \Cref{subfig:explo2} represents the exploitability of the last iterate produced by \Cref{alg:alg1}, it can be seen that it also achieves a small value of exploitability.  We also verify the no-regret property of the algorithm in \Cref{subfig:regret2}. 
Experiments with $5$ and $20$ agents are provided in \Cref{fig:5agent_game}. The code is available at \url{https://github.com/lviano/SBGD-CE}.

\section{Conclusion}
This work introduces SBGD-CE which is the first no-regret online learning algorithm with semi-bandit feedback that once adopted by all agents in a congestion game converges to NE in the \textit{best-iterate sense}. As a result, our work answers an open question of \cite{du22} and improves upon their rates and complexity. The empirical evaluation inspires different future directions, in particular establishing \textit{last iterate} convergence rates to NE as well as tightening the rates for \textit{best-iterate convergence}.

\section*{Acknowledgements}
This work was supported by Hasler Foundation Program: Hasler Responsible AI (project number 21043), by the Swiss National Science Foundation (SNSF) under grant number 200021$\_$205011, by a PhD fellowship of the Swiss Data Science Center, a joint venture between EPFL and ETH Zurich, by Innosuisse under agreement number 100.960 IP-ICT. Ioannis Panageas would like to acknowledge UCI start-up grant. Xiao Wang acknowledges the Grant 202110458 from SUFE and support from the Shanghai Research Center for Data Science and Decision Technology.

\if 0
\begin{figure}[t!] 
\centering
\begin{tabular}{cccc}
\subfloat[Uniform initialization]{%
    \includegraphics[width=0.24\linewidth]{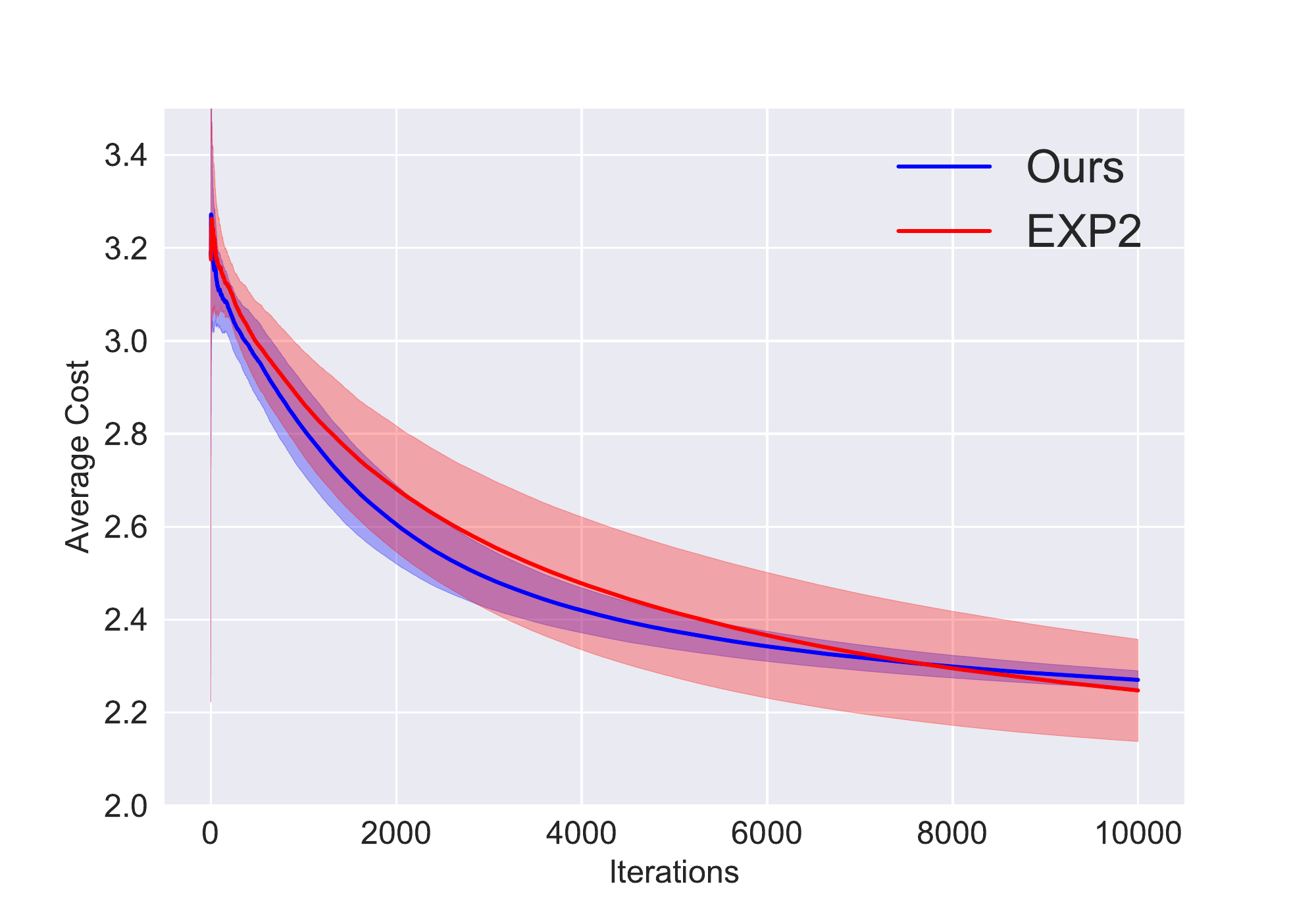}
     } &
\subfloat[High Traffic initialization]{%
    \includegraphics[width=0.24\linewidth]{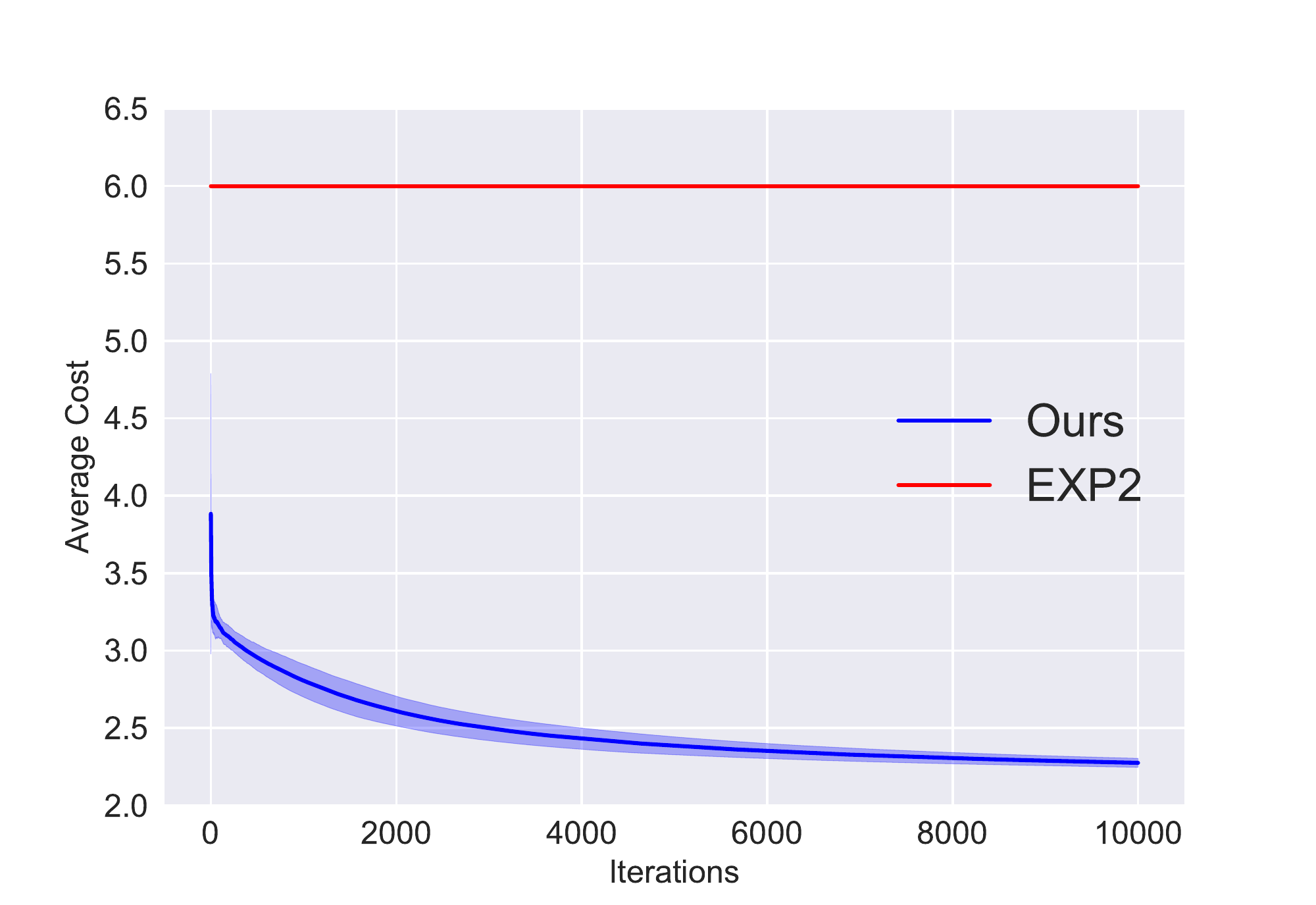}
     } \\
\end{tabular}
\caption{Experiments on network games with $3$ agents average of $50$ seeds.}
\label{fig:network_game}
\end{figure}
\fi

\bibliography{example_paper}
\bibliographystyle{plain}

\onecolumn
\appendix

\section{Additional related work on congestion games}
Congestion games, proposed in \cite{rosenthal73} are amongst the most well known and extensively studied class of games and have been successfully employed in myriad modeling problems. Congestion games have been proven to be isomorphic to potential games \cite{potgames}, and as a result, they always admit a potential function and a \textit{pure} Nash equilibrium. Moreover, (typically) due to existence of multiple Nash equilibria, Price of Anarchy has been proposed in \cite{KoutsoupiasP99WorstCE} for the purpose of efficiency guarantees in congestion games and is arguably amongst the most developed areas within algorithmic game theory, e.g.,
\cite{KoutsoupiasP99WorstCE,roughgarden2002bad,christodoulou,Fotakis2005226,Schafer10,Roughgarden09}. It is folklore knowledge that better-response dynamics in congestion games converge. In these dynamics, in every round, exactly one agent deviates to a better strategy. Convergence is guaranteed as the potential function always decreases along better response dynamics\footnote{Note that If two or more agents move at the same time then convergence is not guaranteed.}. Notwithstanding better dynamics converges, it has been shown that computing a pure Nash equilibrium is PLS-complete \cite{fabrikant04} and computing a (possible mixed) Nash equilibrium is CLS-complete \cite{babichenko2021settling}, i.e., it is unlikely to be able to provide an algorithm that computes (pure or mixed) Nash equilibrium in congestion games and runs in polynomial time (in the description of the game).
\section{Proof for \Cref{sec:preliminaries}}
\label{app:DAG}

\subsection{Path Polytope and Directed Acyclic Graphs}
\begin{definition}
A directed graph $G(V,E)$ is called \textit{acyclic} in case there are no cycles in $G(V,E)$.  
\end{definition}
\begin{definition}
Let a \textit{directed acyclic graph} $G(V,E)$ and the vertices $s_i,t_i \in V$. The $(s_,t_i)$-\textit{path polytope} is defined as follows, 
\begin{align*}
    \mathcal{X}_i \triangleq \bigg\{ \fcost\in\{0,1\}^m &: \sum_{e \in \mathrm{Out}(s_i)} x_e = 1\\ & \sum_{e \in\mathrm{In}(v)} x_e =  \sum_{e \in\mathrm{Out}(v)} x_e \quad \forall v \in V \setminus \bc{s_i, t_i} \\
    & \sum_{e \in \mathrm{In}(t_i)} x_e = 1 \bigg\} 
\end{align*}
\end{definition}

\begin{lemma}
The extreme points of the $(s_i,t_i)$-path polytope $\mathcal{X}_i$ correspond to $(s_i,t_i)$-paths of $G(V,E)$ and vice versa.
\end{lemma}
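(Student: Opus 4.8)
The plan is to prove the two directions of the correspondence separately, exploiting the flow-conservation structure that defines $\mathcal{X}_i$ together with the acyclicity of $G$.

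First I would dispatch the easy direction: every $(s_i,t_i)$-path $P$ yields an extreme point. Let $\mathbf{1}_P \in \{0,1\}^m$ be its indicator. I would check directly that $\mathbf{1}_P$ satisfies all the defining equalities — along a path exactly one edge leaves $s_i$, exactly one enters $t_i$, and every intermediate vertex carries exactly one incoming and one outgoing path edge so that $\sum_{e\in\mathrm{In}(v)}x_e=\sum_{e\in\mathrm{Out}(v)}x_e$, while vertices off the path contribute $0$ on both sides — hence $\mathbf{1}_P \in \mathcal{X}_i$. Since $\mathcal{X}_i \subseteq [0,1]^m$ and $\mathbf{1}_P$ is a vertex of the cube $[0,1]^m$, it cannot be written as a proper convex combination of two distinct points of $[0,1]^m$, and a fortiori not of $\mathcal{X}_i$; therefore it is an extreme point.

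For the converse I would prove the stronger statement that \emph{every} $x\in\mathcal{X}_i$ is a convex combination of path indicators, which immediately forces every extreme point to be one (an extreme point cannot be a proper convex combination of other feasible points). The argument is a greedy flow decomposition, essentially the correctness of Algorithm~\ref{alg:charateodory}: while $x\neq 0$, follow edges with $x_e>0$ out of $s_i$, using conservation at each intermediate vertex (positive incoming flow guarantees a positive outgoing edge) to extend the walk; acyclicity of $G$ forbids revisiting a vertex, so the walk must terminate at $t_i$, producing an $(s_i,t_i)$-path $\hat{P}$ inside the positive support. I would then subtract $\delta=\min_{e\in\hat{P}}x_e$ times $\mathbf{1}_{\hat{P}}$, which preserves feasibility of the residual, lowers $\sum_{e\in\mathrm{Out}(s_i)}x_e$ by exactly $\delta$, and zeroes at least one edge, so the process halts after at most $m$ steps and writes $x=\sum_k \delta_k \mathbf{1}_{P_k}$ with $\sum_k \delta_k = \sum_{e\in\mathrm{Out}(s_i)}x_e = 1$. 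Hence the extreme points of $\mathcal{X}_i$ are exactly the vectors $\mathbf{1}_P$.

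I expect the main obstacle to be this reverse direction, and precisely the two places where acyclicity is indispensable: one must argue that following positive-flow edges from $s_i$ can neither get stuck before reaching $t_i$ nor loop, since in a general digraph the identical constraints would admit support consisting of a path \emph{plus} disjoint cycles and the extreme points would no longer be paths. A subtle preliminary point is that the polytope constrains neither the in-degree of $s_i$ nor the out-degree of $t_i$; here I would run a topological-order argument, observing that the topologically earliest vertex meeting the positive support has no incoming support edge, and, were it intermediate, conservation would render it isolated — so it must be $s_i$, which shows the net flow into $s_i$ vanishes and legitimizes treating $x$ as a genuine unit $s_i$-$t_i$ flow. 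An equally valid alternative route to vertex integrality is to invoke total unimodularity of the node--arc incidence matrix and then apply the combinatorial peeling argument only to $0/1$ points; I would mention it but prefer the decomposition route, as it is self-contained and mirrors Algorithm~\ref{alg:charateodory}.
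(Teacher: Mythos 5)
Your proposal is correct, and both directions are sound; the forward direction (path indicators are vertices because they are $\{0,1\}$ points of a polytope inside the unit cube) is identical to the paper's. For the converse you take a genuinely different, and in fact stronger, route. The paper argues by contradiction that an extreme point must be integral: it takes $e_{\min}=\argmin_{e:x_e>0}x_e$, extracts a single $(s_i,t_i)$-path through $e_{\min}$ lying in the positive support, and writes $x=x_{\min}\cdot p_{\min}+(1-x_{\min})\cdot y$ with $y\in\mathcal{X}_i$, contradicting extremality; it then separately argues that an integral feasible point is a path using the degree constraints and acyclicity. You instead iterate this peeling to decompose \emph{every} $x\in\mathcal{X}_i$ into a convex combination of path indicators, which subsumes both of the paper's steps at once and directly mirrors the correctness of Algorithm~\ref{alg:charateodory}. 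Your version buys two things: it is self-contained where the paper is terse (the paper simply asserts the existence of an $(s_i,t_i)$-path through $e_{\min}$ inside the support, deferring to an auxiliary lemma whose own proof invokes Carath\'eodory, whereas you derive it from flow conservation plus acyclicity via the walk-extension argument), and it explicitly confronts the subtlety that the constraints do not control the in-degree of $s_i$ or the out-degree of $t_i$, which the paper glosses over. The one place to be careful in your write-up is the termination claim: you should spell out that when $\sum_{e\in\mathrm{Out}(s_i)}x_e$ reaches $0$ the entire residual vanishes (your topological-order argument does this), since otherwise the extracted weights summing to $1$ would not by itself certify that $x$ is a convex combination of the extracted paths.
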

\begin{proof}
We first show that an $(s_i,t_i)$-path $p_i \in \mathcal{P}_i$ corresponds to an extreme point of $\mathcal{X}_i$.
Given an $(s_i,t_i)$-path $p \in \mathcal{P}_i$ consider the point $x_{p_i}$ of the polytope with $x_e^{p_i} = 1$ for all $e \in p_i$ and $x_e^{p_i} =0$ otherwise. Let assume that $x_{p_i}$ is not an extreme point of $\mathcal{X}_i$. Notice that $x_{p_i}$ satisfies all the constraints of $\mathcal{P}_i$ and since is a $\{0,1\}$-vector it is an extreme point of $\mathcal{X}_i$.

On the opposite direction we show that for any extreme point $x \in \mathcal{X}_i$ the set of edges $\{e \in E
:~x_e = 1\}$ is an $(s_i,t_i)$-path of $G(V,E)$. We first show that $x$ is necessarily integral ($x_e =0$ or $x_e = 1$). Let assume that there exists an edge $e\in E$ with $x_e \in (0,1)$ and consider $x_{\min} := \min_{e: x_e >0} x_e$. Consider an $(s_i,t_i)$-path containing edge $e_{\min}:= \argmin_{e: x_e >0} x_e$. Notice that for the point $x \in \mathcal{X}_0^i$ the following holds,
\[ x = x_{\min} \cdot p_{\min} + (1 - x_{\min}) \cdot \underbrace{\frac{x - p_{\min}}{1 - x_{\min}}}_{y}\]
Notice that $y \in \mathcal{X}_i$ and thus $x$ can be written as convex combination of $p_{\min}$ and $y$, meaning that $x$ cannot be an extreme point of $\mathcal{X}_i$. As a result, $x$ is an $\{0,1\}^m$-vector. Now consider the set of edges $p_x:=\{e \in E
:~x_e = 1\}$. Notice that node $s_i$ admits exactly one edge $e \in \mathrm{Out}(s_i)$ belonging in $p_x$. Similarly there exists exactly one edge $e \in \mathrm{In}(t_i)$ belonging in $p_x$. Due to the fact that $G(V,E)$ is acyclic, $p_x$ is necessarily an $(s_i,t_i)$-path.
\end{proof}

\section{Proofs of Section~\ref{s:algo_presentation}}

\subsection{Proof of \Cref{lemma:active}}
\begin{lemma}
Given the implicit description $\mathcal{X}_i$ of the strategy space $\mathcal{P}_i$, the set of active resources $E_i$ can be computed in polynomial-time.  
 \end{lemma}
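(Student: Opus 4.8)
The plan is to reduce the computation of $E_i$ to solving $m$ linear programs over the implicitly described polytope $\mathcal{X}_i = \{x \in [0,1]^m : A_i x \leq b_i\}$, one per resource. The guiding observation is that, by the implicit description, the extreme points of $\mathcal{X}_i$ are exactly the indicator vectors $x_{p_i} \in \{0,1\}^m$ of the strategies $p_i \in \mathcal{P}_i$. Consequently a resource $e$ is active, i.e.\ $e \in E_i$, if and only if some extreme point $x$ of $\mathcal{X}_i$ satisfies $x_e = 1$. This turns a statement quantified over the (exponentially large) strategy set $\mathcal{P}_i$ into a statement about vertices of a polytope that we can describe with $\mathcal{O}(r_i m)$ numbers.

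Concretely, for each $e \in E$ I would solve the linear program $M_e := \max_{x \in \mathcal{X}_i} x_e$ and declare $e \in E_i$ precisely when $M_e = 1$ (equivalently $M_e > 0$, so thresholding at $1/2$ guards against numerical error). For correctness note that, since $\mathcal{X}_i \subseteq [0,1]^m$ is bounded, the optimum is attained at an extreme point, and every extreme point is a $\{0,1\}$-vector, so $M_e \in \{0,1\}$. If $M_e = 1$ the optimizing vertex is some $x_{p_i}$ with $(x_{p_i})_e = 1$, hence the strategy $p_i$ uses $e$ and $e \in E_i$. Conversely, if $e \in E_i$ then the indicator $x_{p_i}$ of a strategy using $e$ is feasible with $x_e = 1$, so $M_e = 1$. (An equivalent route is to test feasibility of $\{x \in \mathcal{X}_i : x_e = 1\}$: because $x_e \le 1$ is a valid inequality for $\mathcal{X}_i$, this set is a genuine face, so whenever it is non-empty it contains an extreme point of $\mathcal{X}_i$, i.e.\ a strategy using $e$.)

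For the running time, each such LP has $m$ variables and $r_i + 2m$ constraints (the rows of $A_i x \leq b_i$ together with the box constraints $0 \leq x \leq 1$), and is therefore solvable in time polynomial in $m$ and $r_i$ by the ellipsoid method or an interior-point method; solving $m$ of them keeps the total cost polynomial in $m$ and $r_i$, as required. The only delicate point — and the step I expect to be the main obstacle to state cleanly — is the justification that the LP value exactly separates active from inactive resources. This hinges entirely on the integrality of the extreme points of $\mathcal{X}_i$ guaranteed by the implicit description; once that fact is invoked, the standard attainment of LP optima at vertices of a bounded polytope does the rest, and no further numerical subtlety arises because the optimal value is provably integral.
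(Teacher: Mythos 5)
Your proposal is correct and takes essentially the same approach as the paper: the paper also runs one polynomial-size linear feasibility check per resource, testing whether the face $\{x \in \mathcal{X}_i : x_e = 1\}$ is non-empty, which is equivalent to your test $\max_{x\in\mathcal{X}_i} x_e = 1$. The only cosmetic difference is that the paper justifies correctness via the Carath\'eodory decomposition of a feasible point (when $x_e = 1$, every strategy in the support of the decomposition must contain $e$) rather than via integrality of the extreme points, but both arguments rest on the same fact that $\mathcal{X}_i = \mathrm{conv}(\hat{\mathcal{P}}_i)$.
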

\begin{proof}
For each resource $e \in E$, consider the polytope $\mathcal{X}^e_i = \bc{ x \in \mathcal{X}_i : x_e =1}$ and we check whether it is empty. Notice that $\mathcal{X}^e_i$ admits $r_i + 1$ linear constraints and thus checking for its feasibility is done in polynomial time with respect to $r_i$ and $m$.

The correctness of the above algorithm can be established with the following simple argument. Let an $x \in\mathcal{X}_i^e$ then $x \in \mathcal{X}_i$ and additionally $x_e^i = 1$. The latter implies that $x$ can be decomposed to $m+1$ pure strategies $p_i \in \mathcal{P}_i$. Since $x_e^i=1$ any strategy $p_i$ participating in the convex combination of $x$ must admit $e \in p_i$. Thus $e \in E_i$. On the opposite direction, $\mathcal{X}_i^e$ cannot be empty in case $e \in E_i$. Notice that in the latter direction there exits $p_i \in \mathcal{P}_i$ with $e \in \mathcal{P}_i$ and thus the corresponding $\{0,1\}$-vector $x_{p_i} \in \mathcal{X}_i^e$.
\end{proof}
\subsection{Proof of \Cref{lemma:non_empty}}
\lemmanonempty*
\begin{proof}
Initialize $y =(0,\ldots,0) \in \{0,1\}^{m}$. For each  resource $e \in E_i$ select a strategy $p_i^e \in \mathcal{P}_i$ and update $y$ as $y \leftarrow y + \mu \cdot x_{p_i^e}$. The consider $x := y / |E_i|$. Notice that $y \in \mathcal{X}_i$ as convex combination
$x_{p_i^e} \in \mathcal{X}_i$. At the same time, $y_e \geq 1 / |E_i|$ for each $e \in E_i$. Thus, the set $\mathcal{X}_i^\mu$ is not empty for any $\mu \leq 1/|E_i|$.
\end{proof}
\lemmadecompose*
\begin{proof}
First, we notice that the algorithm always successful in DAGs because, in virtue of \Cref{lemma:path_existence}, we can always find a path in Step 6 of \Cref{alg:charateodory}.
At this point, in order to study the complexity of each iteration, we notice that Step 5 requires at most $\abs{E}$ read operations and  Step 6 can be performed in $\mathcal{O}(\abs{V} + \abs{E})$ operations.
Therefore, every iteration of the outermost loop require $\mathcal{O}(\abs{E} + \abs{V})$ operations.
Finally, the number iterations in bounded by $\abs{E}$ because Step 7 makes one coordinate of the point $x$ equal to $0$ at every iteration and this ensures that after at most $\abs{E}$ iterations $\sum_{e\in\mathrm{Out}(s)}x_e = 0$. 
This conclude the proof because the total operation complexity is $\mathcal{O}(\abs{E}(\abs{E} + \abs{V}))$ as stated in the main text.
\end{proof}

\subsection{Proof of \Cref{lemma:properties_estimator_online}}
For proving that $\hat{c}^t_e$ is unbiased, we have to recall that $x^t_e = \E\bs{\mathds{1}\bs{e \in p^t}}$ from which it follows that
\begin{align*}
\E\bs{\hat{c}^t_e} = \E\bs{\frac{c^t_e}{x^t_e}\mathds{1}\bs{e \in p^t}} = \frac{c^t_e}{x^t_e}\E\bs{\mathds{1}\bs{e \in p^t}} = c^t_e
\end{align*}
The second part of the proof concerns bounding the absolute value of the cost estimate. More precisely, we show that $|\hat{c}^t_e| \leq c_{max}/\mu_t$ for all $ e \in E_i$. Indeed,
\begin{align*}
    |\hat{c}^t_e| = \abs{\frac{c^t_e}{x^t_e}\mathds{1}\bs{e \in p^t}} \leq \frac{c_{\max}}{\mu}
\end{align*}

\section{Proofs of \Cref{sec:first_thm_sketch}}
\subsection{Proof of~\Cref{thm:concentration}}
\thmconcentration*
\begin{proof} 
Since the objective is a linear function $\min_{\fcost\in\mathcal{X}} \sum^T_{t=1}  \innerprod{\cost^t}{\fcost} =  \min_{p\in\mathcal{P}} \sum^T_{t=1} \sum_{e \in p} c^t_e$, then  we  have that
\begin{equation*}
    \sum^T_{t=1} X_t = \sum^T_{t=1} \sum_{e\in E} c_e^t \cdot  \left(\E_t  \left[\mathds{1}[e \in p^t]\right] - \mathds{1}[e \in p^t]\right)
\end{equation*}
where $\E_t$ is an expectation over the choice of $p^t$ conditioned on the filtration adapted to the process $(p^1, c^1, \dots, p^{t-1}, c^{t-1})$. As $c^t$ is conditionally independent on $p^t$, we get that
\begin{equation*}
    \sum^T_{t=1} X_t  = \sum_{e\in E}\sum^T_{t=1}   \left(\E_t\left[ c_e^t\mathds{1}[e \in p^t] \right] - c_e^t\mathds{1}[e \in p^t]\right)
\end{equation*}
Hence, we recognize the martingale difference sequence $\left(\br{\E_t \left[ c_e^t\mathds{1}[e \in p^t]\right] - c_e^t\mathds{1}[e \in p^t]}\right)^T_{t=0}$ that satisfies $\abs{\E_t \left[ c_e^t\mathds{1}[e \in p^t]\right] - c_e^t\mathds{1}[e \in p^t]} \leq c_{\max}$ therefore by Azuma-Hoeffding inequality we can conclude that with probability $1 - \delta_1$ for every resource $e \in E$
\begin{equation*}
    \sum^T_{t=1}   \br{\E_t\left[ c_e^t\mathds{1}[e \in p^t]\right] - c_e^t\mathds{1}[e \in p^t]} \leq \sqrt{\frac{1}{2}c_{\max} \log \frac{m}{\delta_1} T}.
\end{equation*}    
\end{proof}

\thmogd*
\begin{proof}

\begin{align*}
    \innerprod{\hat{\cost}^t}{x_i^t - x_{\mu_t}^\star} &\leq \frac{\innerprod{x_i^t - x_i^{t+1}}{x_i^t - x_{\mu_t}^\star}}{\gamma_t} \\&= \frac{1}{2\gamma_t} \br{\norm{x_{\mu_t}^\star - x_i^t}^2 - \norm{x_{\mu_t}^\star - x_i^{t+1}}^2 + \norm{x_i^{t+1} - x_i^t}^2}
    \\&\leq \frac{1}{2\gamma_t} \br{\norm{x_{\mu_t}^\star - x_i^t}^2 - \norm{x_{\mu_t}^\star - x_i^{t+1}}^2} + \frac{\gamma_t}{2}\norm{\hat{c}^t}^2
    \\&\leq \frac{1}{2\gamma_t} \br{\norm{x_{\mu_t}^\star - x_i^t}^2 - \norm{x_{\mu_t}^\star - x_i^{t+1}}^2} + \frac{ c^2_{\max} m}{2}\frac{\gamma_t}{\mu^2_t}.
\end{align*}
where in the first and third equality we use the contraction property of the projection, in the second equality we developed the square and in the last inequality we the bound on the norm of the estimator.
Summing over $t$ we obtain
\begin{align*}
    \sum^T_{t=1}\innerprod{\cost^t}{x_i^t - x_{\mu_t}^\star}
    &\leq \sum^T_{t=1}\frac{1}{2\gamma_t} \br{\norm{x_{\mu_t}^\star - x_i^t}^2 - \norm{x_{\mu_t}^\star - x_i^{t+1}}^2} + \frac{ c^2_{\max} m}{2}\sum^T_{t=1}\frac{\gamma_t}{\mu^2_t} \\ &\leq \sum^T_{t=1}\br{\frac{1}{2\gamma_t} \norm{x_{\mu_t}^\star - x_i^t}^2 - \frac{1}{2\gamma_{t+1}} \norm{x_{\mu_{t+1}}^\star - x_i^{t+1}}^2} \\&\phantom{=}+ \sum^T_{t=1}\br{\frac{\norm{x_{\mu_{t+1}}^\star - x_i^{t+1}}^2 }{2\gamma_{t+1}} - \frac{\norm{x_{\mu_t}^\star - x_i^{t+1}}^2 }{2\gamma_{t}}} + \frac{ c^2_{\max} m}{2}\sum^T_{t=1}\frac{\gamma_t}{\mu^2_t}
    \\ &\leq \frac{\norm{x_{\mu_T}^\star - x_i^T}^2}{2\gamma_T}  - \frac{\norm{x_{\mu_0}^\star - x_i^{1}}^2} {2\gamma_{1}} + \frac{3 m}{2 \gamma_T} + \frac{ c^2_{\max} m}{2}\sum^T_{t=1}\frac{\gamma_t}{\mu^2_t} 
    \\ &\leq \frac{2m}{\gamma_T}  + \frac{ c^2_{\max} m}{2}\sum^T_{t=1}\frac{\gamma_t}{\mu^2_t} 
\end{align*}
Where in the second last inequality we used
\begin{align*}
\sum^T_{t=1}\br{\frac{\norm{x_{\mu_{t+1}}^\star - x_i^{t+1}}^2 }{2\gamma_{t+1}} - \frac{\norm{x_{\mu_t}^\star - x_i^{t+1}}^2 }{2\gamma_{t}}} &= \sum^T_{t=1}\br{\frac{\norm{x_{\mu_{t+1}}^\star - x_i^{t+1}}^2 }{2\gamma_{t+1}} - \frac{\norm{x_{\mu_{t+1}}^\star - x_i^{t+1}}^2 }{2\gamma_{t}}} \\&\phantom{=}+ \sum^T_{t=1}\frac{\norm{x_{\mu_{t+1}}^\star - x_i^{t+1}}^2 - \norm{x_{\mu_t}^\star - x_i^{t+1}}^2 }{2\gamma_{t}} \\ &= \sum^T_{t=1}\br{\norm{x_{\mu_{t+1}}^\star - x_i^{t+1}}^2\br{\frac{1}{2\gamma_{t+1}} - \frac{1}{2\gamma_{t}}}}\\ &\phantom{=}+ \sum^T_{t=1}\frac{\innerprod{x^\star_{\mu_t} + x^\star_{\mu_{t+1}} - 2x_i^{t+1}}{x^\star_{\mu_{t+1}} - x^\star_{\mu_{t}}} }{2\gamma_{t}} \\ &\leq m\sum^T_{t=1}\br{\frac{1}{2\gamma_{t+1}} - \frac{1}{2\gamma_{t}}} + \sum^T_{t=1}\frac{\norm{x^\star_{\mu_t} + x^\star_{\mu_{t+1}} - 2x_i^{t+1}}\norm{x^\star_{\mu_{t+1}} - x^\star_{\mu_{t}}} }{2\gamma_{t}} \\ &\leq \frac{m}{2 \gamma_T} + \sqrt{m} \sum^T_{t=1}\frac{\norm{(1 - m\mu_t)x^\star_i + m\mu_t s - (1 - m\mu_{t+1})x^\star_i - m\mu_{t+1} s} }{\gamma_{t}} \\ &= \frac{m}{2 \gamma_T} + m^{3/2}\sum^T_{t=1}\frac{(\mu_t - \mu_{t+1})\norm{x^\star - s}}{\gamma_t}
\\ &\leq \frac{m}{2 \gamma_T} + m^2\sum^T_{t=1}\frac{(\mu_t - \mu_{t+1})}{\gamma_t}
\\ &\leq \frac{m}{2 \gamma_T} + \frac{m^2}{\gamma_T}\sum^T_{t=1}(\mu_t - \mu_{t+1}) \\ &\leq \frac{m}{2 \gamma_T} + \frac{m^2 \mu_1}{\gamma_T} \\ &\leq \frac{m}{2 \gamma_T} + \frac{m}{\gamma_T} = \frac{3m}{2\gamma_T}
\end{align*}
\if 0
Then, we recognize that the second term in right hand side of \cref{eq:decomposition} is a bounded martingale difference sequence. Indeed, it holds that
\begin{equation*}
\E_t\bs{\innerprod{\hat{\cost}^t - c^t}{x_{\mu_t}^\star - x_{\mu_T}^\star}} = 0
\end{equation*}
and 
\begin{equation*}
    \abs{\innerprod{\hat{\cost}^t - c^t}{x_{\mu_t}^\star - x_{\mu_T}^\star}} \leq \norm{\hat{\cost}^t - c^t}_{\infty}\norm{x_{\mu_t}^\star - x_{\mu_T}^\star}_{1} \leq c_{\max} (m + 1)(\mu_t + 1) \leq 4 c_{\max} m
\end{equation*}
where we used that $\norm{x_{\mu_t}^\star - x_{\mu_T}^\star}_{1} \leq \mu_t (m + 1)$ and $\norm{\hat{\cost}^t - c^t}_{\infty} \leq c_{\max} \br{1 + 1/\mu_t} $.
Then, by Azuma-Hoeffding inequality it holds that with probability $1 - \delta$
\begin{equation*}
    \sum^T_{t=1}{\innerprod{\hat{\cost}^t - c^t}{x_{\mu_t}^\star - x_{\mu_T}^\star}} \leq c_{\max} m \sqrt{32 T \log(1/\delta)}
\end{equation*}
 Finally for the third term in the right hand side of \Cref{eq:decomposition} we have that
 \begin{align*}
\sum^T_{t=1}\innerprod{c^t}{x_{\mu_t}^\star - x_{\mu_T}^\star} &\leq \sum^T_{t=1} \norm{c^t}_{\infty}{x_{\mu_t}^\star - x_{\mu_T}^\star}_{1} \\ &\leq c_{\max} (m + 1) \sum^T_{t=1} (\mu_t - \mu_T)
 \end{align*}
 Putting all together gives the result.
 \fi
\end{proof}

\thmboundedpoly*
\begin{proof}

\begin{align*}
    \sum^T_{t=1}  \innerprod{\cost^t}{\fcost^\star_{\mu_t} - \fcost^\star_i} & \leq \sum^T_{t=1} \norm{\cost^t}_{\infty} \norm{\fcost^\star_i - \fcost^\star_{\mu_t}}_1 \\ &
    \leq m^2 c_{\max}\sum^T_{t=1}  \mu_t 
\end{align*}
where we used the following bound on $\norm{\fcost^\star_i - \fcost^\star_{\mu_t}}_1$. Recall that $s$ is the vector as constructed in \Cref{remark:comparator}.
\begin{align*}
    \norm{x^\star_i - x^\star_{\mu_t}}_1 & = \norm{x^\star_i - (1 - m\mu_t)x^\star_i - m\mu_t s}_1 \\ & = \norm{m\mu_t (x^\star_i - s)}_1 \\&\leq m^2 \mu_t 
\end{align*}
\end{proof}

\thmcostconcentration*
\begin{proof}
We denote $\mathcal{F}^t$ a filtration adapted to the sigma algebra induced by the random variables $\bc{x_i^l}^t_{l=1}$ and let $\E_t$ denote expectation conditioned on $\mathcal{F}^t$. We recognize that $\innerprod{c^t - \hat{c}^t }{\fcost_i^{t} - \fcost^\star_{\mu_t}}$ is a bounded martingale difference sequence.
Indeed, \begin{equation*}
\E_t\bs{\innerprod{c^t - \hat{c}^t }{\fcost_i^{t} - \fcost^\star_{\mu_t}}} = \E_t\bs{\innerprod{\E_t\bs{\hat{\cost}^t} - \hat{\cost}^t }{\fcost_i^{t} - \fcost^\star_{\mu_t}}} = {\innerprod{\E_t\bs{\hat{\cost}^t} - \E_t\bs{\hat{\cost}^t} }{\fcost_i^{t} - \fcost^\star_{\mu_t}}} = 0
\end{equation*}
where the last equality holds because $x_i^t$ is $\mathcal{F}^t$-measurable. In addition, $\innerprod{c^t - \hat{c}^t }{\fcost_i^{t} - \fcost^\star_{\mu_t}}$ can be bounded as
\begin{equation*}
    \abs{\innerprod{c^t - \hat{c}^t}{\fcost_i^{t} - \fcost^\star_{\mu_t}}} \leq \norm{c^t - \hat{c}^t}_{\infty}\norm{\fcost_i^{t} - \fcost^\star_{\mu_t}}_1 \leq 2m c_{\max}(1 + 1/\mu_t)
\end{equation*}
where we used $\norm{\fcost_i^{t} - \fcost^\star_{\mu_t}}_1 \leq 2 m$ and $\norm{c^t - \hat{c}^t}_{\infty} \leq c_{\max}(1 + 1/\mu_t)$.
Therefore, by Azuma-Hoeffding, with probability $1-\delta$,
\begin{equation*}
     \sum^T_{t=1}\innerprod{ c^t - \hat{c}^t}{\fcost_i^{t} - \fcost^\star_{\mu_t}} \leq 2 m c_{\max}\sqrt{2\sum^T_{t=1} (1 + 1/\mu_t)^2\log(1/\delta)} \leq 2m  c_{\max} (1 + 1/\mu_T)\sqrt{2 T \log(1/\delta)}
\end{equation*}
\end{proof}
At this point, we have all the elements for the proof of \Cref{t:regret}
\subsection{Proof of \Cref{t:regret}}
\thmnoregret*
\begin{proof}
Combining the previous theorems we can obtained the following bounds
\begin{align*}
\sum_{t=1}^T\sum_{e \in p^t_i} c_e^t - \min_{p_i^\star \in \mathcal{P}_i}\sum_{e \in p^\star_i} c_e^t &\leq \sqrt{\frac{1}{2}c_{\max} \log \frac{m}{\delta_1} T} + \frac{m}{\gamma_T}  + \frac{ c^2_{\max} m}{2}\sum^T_{t=1}\frac{\gamma_t}{\mu^2_t} + m c_{\max}\sum^T_{t=1}  \mu_t \\&\phantom{=}+ 2m  c_{\max} (1 + 1/\mu_T)\sqrt{2 T \log(1/\delta)}
\end{align*}
Now, replacing $\gamma_t = t^{-3/5}$ and $\mu_t = \min\bc{1/m, t^{-1/5}}$, we obtain
\begin{align*}
\sum_{t=1}^T\sum_{e \in p^t_i} c_e^t - \min_{p_i^\star \in \mathcal{P}_i}\sum_{e \in p^\star_i} c_e^t &\leq \sqrt{\frac{1}{2}c_{\max} T \log \frac{m}{\delta_1} } + \frac{m}{T^{-3/5}}  + \frac{c^2_{\max} m}{2}\sum^T_{t=1}\frac{t^{2/5}}{t^{3/5}} + \frac{ c^2_{\max} m}{2}\sum^{m^{1/5}}_{t=1}\frac{m^2}{t^{3/5}} + m c_{\max}\sum^T_{t=1}  \frac{1}{t^{1/5}} \\&\phantom{=}+ m c_{\max}\sum^{m^{1/5}}_{t=1} \frac{1}{m}+ 2m  c_{\max} (1 + 1/\mu_T)\sqrt{2 T \log(1/\delta)} \\ &= \sqrt{\frac{1}{2}c_{\max} T \log \frac{m}{\delta_1} } + m^2T^{3/5}  + \frac{6 c_{\max} m}{8}T^{4/5} + \frac{ c_{\max} m}{2} m^{11/5} + m c_{\max}\frac{5}{4}T^{4/5}\\&\phantom{=}+ m^{1/5} c_{\max}+ 2m  c_{\max} (1 + T^{1/5})\sqrt{2 T \log(1/\delta)} 
\end{align*}
which of the order stated in the main text.
\end{proof}

\section{Proof for \Cref{sec:second_thm_sketch}}

\lemmasmoothness*
\begin{proof}
\if 0 
For a fixed entry of the vector $x$ denoted as $x_{\bar{j}e}$, we can compute the partial derivative as follows
\begin{equation*}
    \frac{\partial \Phi (x)}{\partial x_{\bar{j}e}} = \alpha_e \br{1 + \sum^N_{j \neq \bar{j}} x_{je}} + b_e
\end{equation*}
At this point we have that
\begin{align*}
    \norm{\nabla \Phi (x) - \nabla \Phi (\bar{x})}_2 & = \sqrt{\sum^N_{\bar{j}= 1}\sum_{e\in \mathcal{E}} \br{\frac{\partial \Phi (x)}{\partial x_{\bar{j}e}} - \frac{\partial \Phi (\bar{x})}{\partial x_{\bar{j}e}}}^2} \\ & = \sqrt{\sum^N_{\bar{j}= 1}\sum_{e\in \mathcal{E}} \alpha^2_e\br{\sum^N_{j \neq \bar{j}} x_{je} - \bar{x}^{j,e}}^2}
    \\ & \leq \sqrt{\sum^N_{\bar{j}= 1}\sum_{e\in \mathcal{E}} \alpha^2_e\br{\sum^N_{j \neq \bar{j}} \abs{x_{je} - \bar{x}^{j,e}}}^2}
    \\ & = \sqrt{\sum^N_{\bar{j}= 1}\sum_{e\in \mathcal{E}} \alpha^2_e\br{\sum^N_{j \neq \bar{j}} \sqrt{\br{x_{je} - \bar{x}^{j,e}}^2}}^2}
     \\ & \leq \sqrt{\sum^N_{\bar{j}= 1}\sum_{e\in \mathcal{E}} \alpha^2_e\br{\sum^N_{j = 1} \sqrt{\br{x_{je} - \bar{x}^{j,e}}^2}}^2}
     \\ & \leq \sqrt{\sum^N_{\bar{j}= 1}\sum_{e\in \mathcal{E}} \alpha^2_e\br{\sqrt{\sum^N_{j = 1} n \br{x_{je} - \bar{x}^{j,e}}^2}}^2}
     \\ & = \sqrt{\sum^N_{\bar{j}= 1}\sum_{e\in \mathcal{E}} \alpha^2_e\sum^N_{j = 1} n \br{x_{je} - \bar{x}^{j,e}}^2}
     \\ & \leq \sum^N_{\bar{j}= 1}\alpha_{\max} \sqrt{n}\sqrt{\sum_{e\in \mathcal{E}} \sum^N_{j = 1}\br{x_{je} - \bar{x}^{j,e}}^2}
     \\ & = \alpha_{\max} n^{3/2}\norm{x - \bar{x}}_2
\end{align*}
\fi
We start by taking a first partial derivative of the potential function for the fractional cost $x_{\bar{i}e}$, that is
\begin{equation*}
    \frac{\partial }{\partial x_{\bar{i}e}} \bs{\Phi(x)} = \sum_{\mathcal{S}\subset[n], \bar{i}\in\mathcal{S}} \prod_{j\in\mathcal{S}, j \neq \bar{i}} x_{je} \prod_{j\notin\mathcal{S}} (1 - x_{je}) \sum^{\abs{\mathcal{S}}}_{i=0} c_e(i) - \sum_{\mathcal{S}\subset[n], \bar{i}\notin\mathcal{S}} \prod_{j\in\mathcal{S}} x_{je} \prod_{j\notin\mathcal{S}, j \neq \bar{i}} (1 - x_{je}) \sum^{\abs{\mathcal{S}}}_{i=0} c_e(i)
\end{equation*}

Taking a second a partial derivative with respect to $x_{\bar{j}\bar{e}}$ we obtain
\begin{equation*}
    \frac{\partial^2 }{\partial x_{\bar{j}\bar{e}} \partial x_{\bar{i}e}} \bs{\Phi(x)} = \begin{cases} 0 \quad \text{if} \quad \bar{e} \neq e \\
    0 \quad \text{if} \quad \bar{i} = \bar{j} \\
    \sum_{\mathcal{S}\subset[N], \bar{i}\in\mathcal{S}, \bar{j}\in\mathcal{S}} \prod_{j\in\mathcal{S}, j \neq \bar{i}, j \neq \bar{j}} x_{je} \prod_{j\notin\mathcal{S}} (1 - x_{je}) \sum^{\abs{\mathcal{S}}}_{i=0} c_e(i) \\ - \sum_{\mathcal{S}\subset[N], \bar{i}\notin\mathcal{S}, \bar{j}\in\mathcal{S}} \prod_{j\in\mathcal{S}, j \neq \bar{j}} x_{je} \prod_{j\notin\mathcal{S},  j \neq \bar{i}} (1 - x_{je}) \sum^{\abs{\mathcal{S}}}_{i=0} c_e(i) \\ - \sum_{\mathcal{S}\subset[N], \bar{i}\in\mathcal{S}, \bar{j}\notin\mathcal{S}} \prod_{j\in\mathcal{S}, j \neq \bar{i}} x_{je} \prod_{j\notin\mathcal{S},  j \neq \bar{j}} (1 - x_{je}) \sum^{\abs{\mathcal{S}}}_{i=0} c_e(i) \\ + \sum_{\mathcal{S}\subset[N], \bar{i}\notin\mathcal{S}, \bar{j}\notin\mathcal{S}} \prod_{j\in\mathcal{S}} x_{je} \prod_{j\notin\mathcal{S}, j \neq \bar{i}, j \neq \bar{j}} (1 - x_{je}) \sum^{\abs{\mathcal{S}}}_{i=0} c_e(i) \quad \text{otherwise}
    \end{cases}
\end{equation*}

Notice that the factor $\prod_{j\in\mathcal{S}, j \neq \bar{j}} x_{je} \prod_{j\notin\mathcal{S},  j \neq \bar{i}} (1 - x_{je})$ is a multivariate probability distribution over the subsets of players in which the ones with index $\bar{j}$ and $\bar{i}$ select the edge $e$ therefore the sum $\sum_{\mathcal{S}\subset[n], \bar{i}\in\mathcal{S}, \bar{j}\in\mathcal{S}} \prod_{j\in\mathcal{S}, j \neq \bar{i}, j \neq \bar{j}} x_{je} \prod_{j\notin\mathcal{S}} (1 - x_{je}) = 1$. Similar observations hold for the other three terms in the nonzero partial derivatives. Then it follows that $-2 n c_{\max} \leq \frac{\partial^2 }{\partial x_{\bar{j}\bar{e}} \partial x_{\bar{i}e}} \bs{\Phi(x)} \leq 2 n c_{\max} $ where $c_{\max}$ is a uniform bound on $c_e(\cdot)$.

Furthermore, we observe that there are at most $mn(n - 1)$ nonzero elements of the Hessian and those satisfy $\abs{\frac{\partial^2 }{\partial x_{\bar{j}\bar{e}} \partial x_{\bar{i}e}} \bs{\Phi(x)}}^2 \leq 4 n^2 c_{\max}^2$. Therefore, we can bound the Hessian Frobenius norm as
\begin{align*}
    \norm{\nabla^2 \Phi(x)}_F &= \sqrt{\sum_{e \in E}\sum_{\bar{e} \in E}\sum^N_{\bar{j}=1} \sum^N_{\bar{i}=1} \abs{\frac{\partial^2 }{\partial x_{\bar{j}\bar{e}} \partial x_{\bar{i}e}} \bs{\Phi(x)}}^2} \\ & \leq \sqrt{mn(n - 1) 4 n^2 c_{\max}^2 } \\ & \leq 2 n^2  c_{\max} \sqrt{m}
\end{align*}
Finally, since we have that for any matrix the Frobenius norm upper bounds the spectral norm, we can conclude that the maximum eigenvalue of $\nabla^2 \Phi(x)$ is at most $2 n^2  c_{\max} \sqrt{m}$ and therefore the potential function is $2 n^2  c_{\max} \sqrt{m}$-smooth.
\end{proof}
\subsection{Proof of \Cref{lemma:conversion}}
Before proving \Cref{lemma:conversion} we need an auxiliary lemma (\Cref{lemma:conversion_bis}) for which we need to introduce the notion of \emph{Caratheodory's decomposition image}.

\begin{definition}\label{d:image}
We define as \emph{Caratheodory's decomposition image of the set $\mathcal{X}_i^\mu$}, denoted as $\Delta(\mathcal{P}_i^\mu)$, the set all all probability distributions $\pi_i \in \mathcal{P}_i$ such that 
\[ \Pp{p_i\sim\pi_i}{e \in p_i}\geq \mu ~~~~~\text{for all}~ e \in E_i.\]
\end{definition}

Now, we can state and prove the auxiliary lemma which relates $\epsilon,\mu$ stationary point to  $4n^2mc_{\max}\epsilon$-Nash equilibrium for strategies profile in the Caratheodory's decomposition of the set $\mathcal{X}^\mu$.
\begin{lemma} Let $(\pi_i, \pi_{-i}) \in \Delta(\mathcal{P}_1)\times \ldots \times \Delta(\mathcal{P}_n)$ be a Caratheodory decomposition of $x \in \mathcal{X}_1\times \ldots \times \mathcal{X}_n$ and let $x$ be a $(\epsilon, \mu)$-stationary point according to \Cref{def:fractional_potential}. Then for each agent $i\in [n]$,
\begin{equation*}
c_i(\pi_i, \pi_{-i}) - \min_{\bar{\pi}_i \in \Delta(\mathcal{P}_i^{\mu})} c_i(\bar{\pi}_i, \pi_{-i})\\\leq 4 n^2 m c_{\max} \norm{x - \Pi_{\mathcal{X}^\mu} \left[x - \frac{1}{2 n^2  c_{\max} \sqrt{m}}  \nabla \Phi(x) \right]}\
\end{equation*}
where $\Delta(\mathcal{P}_i^{\mu}) \subset \Delta(\mathcal{P}_i)$ is the Caratheodory's decomposition image of the set $\mathcal{X}^\mu_i$.
\label{lemma:conversion_bis}
\end{lemma}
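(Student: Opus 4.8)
The plan is to reduce the game-theoretic deviation gap to a first-order stationarity estimate for $\Phi$, exploiting the fact that $\Phi$ is an \emph{exact potential} for the fractional game.

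First I would establish the gradient identity $\tfrac{\partial\Phi(x)}{\partial x_{i,e}} = \E\!\left[c_e\!\left(1 + \sum_{j\neq i}\mathds{1}[e\in p_j]\right)\right]$, where the expectation is over the product distribution induced by $x_{-i}$. This follows from the same subset-telescoping computation already performed in the proof of \Cref{lemma:smoothness}: grouping the subsets $\mathcal{S}$ according to whether they contain $i$ and cancelling the cumulative cost sums leaves exactly $c_e(|\mathcal{S}'|+1)$ weighted by the probability that the players in $\mathcal{S}'\subseteq[n]\setminus\{i\}$ select $e$. Since players sample independently under a Caratheodory decomposition, $\Pr[e\in p_j]=x_{j,e}$ and each load is a sum of independent indicators, so the right-hand side is precisely the expected marginal cost of $e$ to player $i$ conditioned on $i$ using $e$. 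Because $C_i(p)=\sum_{e\in p_i}c_e(\ell_e(p))$, this gives $c_i(\pi_i,\pi_{-i})=\sum_e x_{i,e}\,\tfrac{\partial\Phi(x)}{\partial x_{i,e}}=\innerprod{x_i}{\nabla_{x_i}\Phi(x)}$; and, since the marginal-cost vector depends only on $x_{-i}$, the cost is \emph{linear} in player $i$'s own marginal, so $c_i(\bar\pi_i,\pi_{-i})=\innerprod{\bar x_i}{\nabla_{x_i}\Phi(x)}$ for any $\bar\pi_i$ with marginal $\bar x_i$.

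Second, I would convert the minimum over $\Delta(\mathcal{P}_i^\mu)$ into a minimum over $\mathcal{X}_i^\mu$. By \Cref{d:image} every $\bar\pi_i\in\Delta(\mathcal{P}_i^\mu)$ has marginal in $\mathcal{X}_i^\mu$, and by \Cref{l:caratheodory} every $\bar x_i\in\mathcal{X}_i^\mu$ is realized as such a marginal; since the cost depends only on the marginal, the deviation gap equals $\max_{\bar x_i\in\mathcal{X}_i^\mu}\innerprod{x_i-\bar x_i}{\nabla_{x_i}\Phi(x)}$. Let $\tilde x_i$ attain this maximum.

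Third — the step that controls the dependence on $n$ — I would use that $\mathcal{X}^\mu=\mathcal{X}_1^\mu\times\cdots\times\mathcal{X}_n^\mu$ is a product set, so the projection defining stationarity decomposes blockwise: with $\lambda=(2n^2c_{\max}\sqrt{m})^{-1}$, the $i$-th block of $\Pi_{\mathcal{X}^\mu}[x-\lambda\nabla\Phi(x)]$ is $x_i^+:=\Pi_{\mathcal{X}_i^\mu}[x_i-\lambda\nabla_{x_i}\Phi(x)]$, and setting $G_i:=x_i-x_i^+$ and $G:=x-\Pi_{\mathcal{X}^\mu}[x-\lambda\nabla\Phi(x)]$ we have $\norm{G_i}\leq\norm{G}$. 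Splitting $\innerprod{x_i-\tilde x_i}{\nabla_{x_i}\Phi(x)}=\innerprod{G_i}{\nabla_{x_i}\Phi(x)}+\innerprod{x_i^+-\tilde x_i}{\nabla_{x_i}\Phi(x)}$, I bound the first term by $\norm{G_i}\norm{\nabla_{x_i}\Phi(x)}\leq c_{\max}\sqrt{m}\,\norm{G_i}$ (each of the $m$ block-gradient entries is at most $c_{\max}$), and the second by the blockwise projection inequality $\innerprod{\nabla_{x_i}\Phi(x)}{x_i^+-\tilde x_i}\leq\tfrac1\lambda\innerprod{G_i}{x_i^+-\tilde x_i}\leq\tfrac1\lambda\norm{G_i}\norm{x_i^+-\tilde x_i}$, using $\norm{x_i^+-\tilde x_i}\leq\sqrt{m}$ (both points lie in $[0,1]^m$). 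Substituting $\tfrac1\lambda=2n^2c_{\max}\sqrt{m}$ yields a gap of at most $(c_{\max}\sqrt{m}+2n^2c_{\max}m)\norm{G_i}\leq 4n^2mc_{\max}\norm{G}$, the claim. The main obstacle is precisely this accounting: a naive bound using the ambient diameter $\sqrt{nm}$ rather than the per-block diameter $\sqrt{m}$ would degrade the rate to $n^{5/2}$, so the product structure of $\mathcal{X}^\mu$ — which licenses both the blockwise projection inequality and $\norm{G_i}\leq\norm{G}$ — is essential.
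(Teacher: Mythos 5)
Your proof is correct, and its overall skeleton matches the paper's: reduce the deviation gap to a linear (potential) gap over $\mathcal{X}_i^\mu$ using the multilinearity of $\Phi$ and the fact that expected costs depend only on marginals (the paper routes this through \Cref{lemma:potential_fractional} plus linearity of $g(x_i):=\Phi(x_i,x_{-i})$, you route it through the explicit gradient identity --- same content), and then control that linear gap via the $(\epsilon,\mu)$-stationarity of $x$. Where you genuinely diverge is in the last step: the paper invokes the gradient-mapping results of Ghadimi--Lan and Agarwal et al.\ as a black box to pass from $\norm{G}\leq\epsilon$ to $\max_{\delta}\left(-\delta^\top\nabla_{x_i}g(x_i)\right)\leq 2(\epsilon/\lambda)\delta_{\max}$ with $\delta_{\max}\leq\sqrt{m}$, whereas you re-derive this implication from scratch via the split $\innerprod{x_i-\tilde x_i}{\nabla_{x_i}\Phi}=\innerprod{G_i}{\nabla_{x_i}\Phi}+\innerprod{x_i^+-\tilde x_i}{\nabla_{x_i}\Phi}$ together with the variational inequality characterizing $x_i^+=\Pi_{\mathcal{X}_i^\mu}[x_i-\lambda\nabla_{x_i}\Phi(x)]$. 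This buys self-containedness and makes explicit two points the paper leaves implicit: that the projection onto the product set $\mathcal{X}^\mu$ separates blockwise (so that $\norm{G_i}\leq\norm{G}$ and a per-agent bound is legitimate), and that the relevant diameter is the per-block $\sqrt{m}$ rather than the ambient $\sqrt{nm}$ --- exactly the accounting needed to land on $4n^2mc_{\max}$. Your additional term $c_{\max}\sqrt{m}\,\norm{G_i}$, coming from evaluating the gradient at $x_i$ rather than at the projected point, is harmless: it is dominated by the $2n^2c_{\max}m\norm{G_i}$ term, and the total constant still fits within the stated $4n^2mc_{\max}$.
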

\begin{proof}
Let a probability distribution $\bar{\pi_i} \in \Delta(\mathcal{P}_i^\mu)$ and its marginalization $\bar{x}_i$ i.e. $\bar{x}_{ie} = \Pp{p_i\sim\bar{\pi}_i}{e \in p_i}$. Then Definition~\ref{d:image} implies that $\bar{x}_i \in \mathcal{X}_i^\mu$.
As a result, \Cref{lemma:potential_fractional} implies 
\[ c_i\br{\pi_i, \pi_{-i}} - c_i\br{\bar{\pi}_i, \pi_{-i}}  = \Phi(\bar{x}_i, x_{-i}) - \Phi(x_i, x_{-i})\]

Let $\lambda$ be the inverse of the smoothness parameter of $\Phi$ that is $\lambda = \frac{1}{2 n^2  c_{\max} \sqrt{m}}$. To simplify notation let 
$\epsilon:=\norm{\Pi_{\mathcal{X}^{\mu}}[x - \lambda \nabla \Phi(x)] - x}_2$ meaning that $x$ is trivially an $(\epsilon, \mu)$-stationary point. Then, by the result \citep[Proposition B.1]{Agarwal:2019} and \citep[Lemma 3]{Ghadimi:2016}, we have that 
\begin{equation*}
    \max_{\delta \in \Delta} -\delta^T \nabla_{x_i} g(x_i) \leq 2\frac{\epsilon}{\lambda} \delta_{\max} \quad \forall i \in [n]
\end{equation*}
with $\Delta \triangleq \{\delta \quad\text{such that }\quad x_i + \delta \in \mathcal{X}_i^{\mu}, \norm{\delta}\leq \delta_{\max}\}$ and $g(x_i):= \Phi(x_i,x_{-i})$ is equal to the potential function $\Phi(\cdot)$ when all the players (but the player $i$) keep their strategy profile fixed. Since $\norm{x_i - x'_i} \leq \sqrt{m}$ for any $x,y \in \mathcal{X}_i$ we get that $\delta_{\max} \leq \sqrt{m}$. Now by plugging the value of $\lambda$, setting $\delta = x_i - \bar{x}_i$ on the left hand side, we get that for any player $i \in [n]$:
\begin{equation*}
    ( x_i - \bar{x}_{i})^T \nabla_{x_i} g(x_i) \leq 4 m n^2 c_{\max} \epsilon
\end{equation*}

The function $g(x_i)$ is a linear function (see Definition~\ref{def:fractional_potential}). By linearity of $g(x_i)$ it holds that
\begin{align*}
    \Phi(x) - \Phi(\bar{x}_{i}, x_{-i}) &= g(x_i) - g(\bar{x}_{i}) \\ &= ( x_i - \bar{x}_{i})^T \nabla_{x_i} g(x_i) \\ & \leq  4 m n^2 c_{\max} \epsilon
\end{align*}
As a result for any $\bar{\pi}_i \in \Delta(\mathcal{P}^\mu_i)$, 
\[ c_i\br{\pi_i, \pi_{-i}} - c_i\br{\bar{\pi}_i, \pi_{-i}}  \leq 4 m n^2 c_{\max} \epsilon\]

\end{proof}
We conclude the section by presenting a slightly more general version of Lemma~\ref{lemma:conversion}
\begin{lemma}
\label{lemma:conversion1}
Let $\pi = (\pi_1,\ldots,\pi_n) \in \Delta(\mathcal{P}_1)\times \ldots \times \Delta(\mathcal{P}_n)$ and  $x = (x_1,\ldots,x_n) \in \mathcal{X}_1 \times \ldots \times \mathcal{X}_n$ such that for all resources $e \in E$,
\[x_{i,e} = \Pp{p_i \sim \pi_i}{e \in p_i}.\]
Then the following holds, 
\begin{equation*}
c_i(\pi_i, \pi_{-i}) - \min_{\bar{\pi}_i \in \Delta(\mathcal{P}_i^{\mu})} c_i(\bar{\pi}_i, \pi_{-i})\\\leq 4 n^2 m c_{\max}\cdot \norm{x - \Pi_{\mathcal{X}^\mu} \left[x - \frac{1}{2 n^2  c_{\max} \sqrt{m}}  \nabla \Phi(x) \right]}\ + 2m^2nc_{\max}\mu
\end{equation*}
\end{lemma}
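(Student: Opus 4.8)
The plan is to obtain \Cref{lemma:conversion1} from \Cref{lemma:conversion_bis} by (i) dropping the hypothesis that $\pi$ is a Caratheodory decomposition and (ii) enlarging the comparator from $\Delta(\mathcal{P}_i^\mu)$ to the full set $\Delta(\mathcal{P}_i)$ --- the second change being exactly what produces the additive $2m^2 n c_{\max}\mu$ term (and what is needed to match the Mixed-NE definition). The crucial preliminary observation is that $c_i(\pi_i,\pi_{-i})$ depends on the profile only through the marginals $x$. Indeed, since $\pi=\pi_1\times\cdots\times\pi_n$ is a product measure, the indicators $(\mathds{1}[e\in p_j])_j$ are independent across players, so for each edge the load is a Poisson--binomial variable determined solely by $(x_{j,e})_j$; conditioning on $e\in p_i$ leaves the law of $\sum_{j\neq i}\mathds{1}[e\in p_j]$ unchanged, whence
\[
c_i(\pi_i,\pi_{-i})=\sum_{e\in E} x_{i,e}\,g_{i,e}(x_{-i}),\qquad g_{i,e}(x_{-i}):=\E_{p_{-i}\sim\pi_{-i}}\big[c_e\big(1+\textstyle\sum_{j\neq i}\mathds{1}[e\in p_j]\big)\big]\in[0,c_{\max}].
\]
Thus $c_i$ is linear in $x_i$ with coordinatewise gradient bounded by $c_{\max}$, and --- being a function of $x$ alone --- it is unchanged if $\pi$ is replaced by the Caratheodory decomposition $\pi^{C}$ of the same $x$.

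With this in hand I would first apply \Cref{lemma:conversion_bis} to $\pi^{C}$; since the left-hand quantity and the restricted comparator depend on $\pi$ only through its marginals, the conclusion transfers verbatim to the given $\pi$, giving
\[
c_i(\pi_i,\pi_{-i})-\min_{\bar\pi_i\in\Delta(\mathcal{P}_i^\mu)}c_i(\bar\pi_i,\pi_{-i})\ \le\ 4n^2 m c_{\max}\,\norm{x-\Pi_{\mathcal{X}^\mu}\big[x-\tfrac{1}{2n^2 c_{\max}\sqrt m}\,\nabla\Phi(x)\big]}.
\]

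The genuinely new step is to bound the gap between the restricted and the full minimum. Let $x_i^\star$ be the marginals of a minimizer $\pi_i^\star\in\Delta(\mathcal{P}_i)$ and form the exploration-mixed point $\bar x_i=(1-m\mu)x_i^\star+\mu\sum_{\ell=1}^m\tilde p^{\ell}$ exactly as in \Cref{d:project}; the computation of \Cref{remark:comparator} shows $\bar x_i\in\mathcal{X}_i^\mu$, so its Caratheodory decomposition yields an admissible $\bar\pi_i\in\Delta(\mathcal{P}_i^\mu)$ with $c_i(\bar\pi_i,\pi_{-i})=\sum_e\bar x_{i,e}\,g_{i,e}(x_{-i})$. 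Linearity and H\"older then give
\[
\min_{\bar\pi_i\in\Delta(\mathcal{P}_i^\mu)}c_i(\bar\pi_i,\pi_{-i})-\min_{\pi_i'\in\Delta(\mathcal{P}_i)}c_i(\pi_i',\pi_{-i})\le\innerprod{g_i(x_{-i})}{\bar x_i-x_i^\star}\le c_{\max}\norm{\bar x_i-x_i^\star}_1\le 2m^2 c_{\max}\mu,
\]
where $\norm{\bar x_i-x_i^\star}_1=\mu\,\norm{\sum_\ell\tilde p^{\ell}-m x_i^\star}_1\le\mu(m^2+m^2)=2m^2\mu$. Adding this to the previous display --- and absorbing the harmless slack $2m^2c_{\max}\mu\le 2m^2 n c_{\max}\mu$ --- yields the stated bound against the full comparator.

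The step I expect to be the main obstacle is this last one: one must exhibit a comparator that is simultaneously \emph{feasible} for the bounded-away polytope $\mathcal{X}_i^\mu$ and $\ell_1$-close to the unconstrained optimum, so that the induced cost inflation is $O(m^2 c_{\max}\mu)$. This is the cost-space analogue of the regret estimate in \Cref{thm:bounded_poly}, and it works only because $c_i$ is linear in the acting player's marginals --- which is precisely why the structural observation of the first paragraph is indispensable.
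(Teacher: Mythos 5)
Your proposal is correct and follows the same skeleton as the paper's proof: first invoke Lemma~\ref{lemma:conversion_bis} to handle the restricted comparator over $\Delta(\mathcal{P}_i^{\mu})$, then bound the gap between the restricted and unrestricted minima by constructing the exploration-mixed point $(1-m\mu)x_i^\star + m\mu s$ of Definition~\ref{d:project}/Remark~\ref{remark:comparator} and showing it lies in $\mathcal{X}_i^{\mu}$. The one place where you diverge is the second step's computation: the paper passes through the potential function, writing the cost gap as $\Phi(\bar{x}^\star_{i,\mu},x_{-i})-\Phi(\bar{x}^\star_i,x_{-i})$ and expanding the sum over subsets with the crude bound $\sum_{l=0}^{\abs{\mathcal{S}}}c_e(l)\leq n c_{\max}$, which is where its factor of $n$ in $2m^2 n c_{\max}\mu$ comes from; you instead use directly that $c_i$ is linear in the acting player's marginals with per-edge coefficients $g_{i,e}(x_{-i})\in[0,c_{\max}]$, which gives the sharper $2m^2 c_{\max}\mu$ and makes the $n$ in the stated bound pure slack. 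Your argument also explicitly justifies why Lemma~\ref{lemma:conversion_bis}, stated for a Caratheodory decomposition, applies to an arbitrary $\pi$ with the prescribed marginals (the cost depends on $\pi$ only through $x$, by independence across players); the paper uses this silently, so your version is, if anything, slightly more careful. No gaps.
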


\begin{proof}
\Cref{lemma:conversion_bis} implies that
\begin{equation*}
c_i(\pi_i, \pi_{-i}) - \min_{\bar{\pi}_i \in \Delta(\mathcal{P}_i^{\mu})} c_i(\bar{\pi}_i, \pi_{-i})\\\leq 4 n^2 m c_{\max} \cdot \norm{x - \Pi_{\mathcal{X}^\mu} \left[x - \frac{1}{2 n^2  c_{\max} \sqrt{m}}  \nabla \Phi(x) \right]}\ 
\end{equation*}
As a result, we just need to bound
\[\min_{\bar{\pi}_i \in \Delta(\mathcal{P}_i^{\mu})} c_i(\bar{\pi}_i, \pi_{-i}) - \min_{\bar{\pi}_i \in \Delta(\mathcal{P}_i)} c_i(\bar{\pi}_i, \pi_{-i})\]
Let $\bar{\pi}^\star_i \in \argmin_{\bar{\pi}_i \in \Delta(\mathcal{P}_i)} c_i(\bar{\pi}_i, \pi_{-i})$ and $\bar{x}^\star_i$ its marginalization. Let also $\bar{x}^\star_{i, \mu} := (1 - m \mu) \bar{x}^\star_{i} + m \mu s $ where the vector $s$ is defined as in the proof of \Cref{thm:bounded_poly} and $\bar{\pi}^\star_{i, \mu}$ its corresponding Caratheodory decomposition.
\begin{align*}
    \min_{\bar{\pi}_i \in \Delta(\mathcal{P}_i^{\mu})} c_i(\bar{\pi}_i, \pi_{-i}) &- \min_{\bar{\pi}_i \in \Delta(\mathcal{P}_i)} c_i(\bar{\pi}_i, \pi_{-i}) = \min_{\bar{\pi}_i \in \Delta(\mathcal{P}^i_{\mu})} c_i(\bar{\pi}_i, \pi_{-i}) -  c_i(\bar{\pi}^\star_i, \pi_{-i}) \\ &\leq c_i(\bar{\pi}^\star_{i,\mu}, \pi_{-i}) -  c_i(\bar{\pi}^\star_i, \pi_{-i})  \\&= \Phi(\bar{x}^\star_{i,\mu}, x_{-i}) - \Phi(\bar{x}^\star_i, x_{-i}) \\& = \sum_{e \in E} \sum_{\mathcal{S}\subset [n], i \in\mathcal{S}} \br{\bar{x}^\star_{ie} - (1 - m\mu)\bar{x}^\star_{ie} - m\mu s} \prod_{j\in \mathcal{S}, j \neq i} x_{je} \prod_{j\notin \mathcal{S}} ( 1- x_{je}) \sum^{\abs{\mathcal{S}}}_{l=0} c_e(l) \\ & \phantom{=} + \sum_{e \in E} \sum_{\mathcal{S}\subset [n], i \notin\mathcal{S}} \br{- \bar{x}^\star_{ie} + (1 - m\mu)\bar{x}^\star_{ie} + m \mu s} \prod_{j\in \mathcal{S}} x_{je} \prod_{j\notin \mathcal{S}, j \neq i} ( 1- x_{je}) \sum^{\abs{\mathcal{S}}}_{l=0} c_e(l) \\ & \leq  \sum_{e \in E} \br{\bar{x}^\star_{ie} - (1 - m \mu)\bar{x}^\star_{ie} - m \mu s} n c_{\max}\underbrace{\sum_{\mathcal{S}\subset [n], i \in\mathcal{S}} \prod_{j\in \mathcal{S}, j \neq i} x_{je} \prod_{j\notin \mathcal{S}} ( 1- x_{je})}_{=1}  \\ & \phantom{=} + \sum_{e \in E}  \br{- \bar{x}^\star_{ie} + (1 - m \mu)\bar{x}^\star_{ie} + m \mu s} n c_{\max}\underbrace{\sum_{\mathcal{S}\subset [n], i \notin\mathcal{S}} \prod_{j\in \mathcal{S}} x_{je} \prod_{j\notin \mathcal{S}, j \neq i} ( 1- x_{je})}_{=1} \\ & = 2 m^2 n c_{\max} \mu
\end{align*}.
\end{proof}
\if 0
\begin{restatable}{lemma}{lemmaeleven} Let $\Delta(\mathcal{P})^i_{\mu_T} \subset \Delta(\mathcal{P}^i)$ be the subset of path distributions that are Caratheodory's decomposition of the fractional selection weights belonging to the polytope $\mathcal{X}_i^{\mu_t}$ and let $\bc{(\pi^t_i, \pi^t_{-i})}$ be the Caratheodory's decompositions of the iterates $x^t$ produced by the update rule \Cref{eq:update_rule}. Then, there exists an iterate $t^\star$ such that for every player $i\in [n]$, it holds that
\begin{equation*}
c_i(\pi^{t^\star}_i, \pi^{t^\star}_{-i}) - \min_{\pi_i \in \Delta(\mathcal{P})^i_{\mu_T}} c_i(\pi_i, \pi_{-i}^{t^\star})\\\leq 4 m \sqrt{\frac{ \Phi_{\max}}{2 T \gamma_T}  + \frac{ c^2_{\max} n m}{ 2 T \gamma_T \lambda} \sum^T_{t=1}\frac{\gamma_t^2}{\mu_t}} + \frac{16 \sqrt{n} m^2}{T \lambda} \sum^T_{t=1} \mu_{t} 
\end{equation*}
\label{lemma:approx_stat_point}
\end{restatable}
\fi

\if 0
The next two corollaries show how we use \Cref{lemma:conversion} and \Cref{lemma:conversion_bis} to turn the guarantees of \Cref{lemma:G_bound} into guarantees of convergence to a Nash equilibrium.
\begin{corollary} Let $(\pi^t_i, \pi^t_{-i}) \in \Delta(\mathcal{P}_1)\times \ldots \times \Delta(\mathcal{P}_n)$ be a Caratheodory decomposition of the iterate $x^t \in \mathcal{X}_1\times \ldots \times \mathcal{X}_n$ produced by the update rule of \Cref{eq:update_rule}. Then, for each agent $i\in [n]$,
\begin{equation*}
\E_{t^\star \sim \mathrm{Unif}\{1,T\}}\E_{\pi^{t^\star}|t^\star}\bs{c_i(\pi^{t^\star}_i, \pi^{t^\star}_{-i}) - \min_{\pi_i \in \Delta(\mathcal{P}_i^{\mu_T})} c_i(\pi_i, \pi_{-i}^{t^\star})}\\\leq 4 \sqrt{m} \sqrt{\frac{ \Phi_{\max}}{2 T \gamma_T}  + \frac{ c^2_{\max} n m}{ 2 T \gamma_T \lambda} \sum^T_{t=1}\frac{\gamma_t^2}{\mu_t}} + \frac{16 \sqrt{n}m^2 }{T \lambda} \sum^T_{t=1} \mu_{t} 
\end{equation*}
\label{lemma:approx_stat_point}
\end{corollary}
\begin{proof}
Let $\lambda$ be the inverse of the smoothness parameter of $\Phi$ that is $\lambda = \frac{1}{2 n^2  c_{\max} \sqrt{m}}$ and $\Phi_{\max} = m n c_{\max}$. From  \Cref{lemma:G_bound} $$\E_{t^\star \sim \mathrm{Unif}\{1,T\}} \E_{\pi^{t^\star}|t^\star}\bs{\norm{\Pi_{\mathcal{X}^{\mu_T}}(x^{t^\star} - \lambda \nabla \Phi(x^{t^\star})) - x^{t^\star}}_2} \leq 
     2\sqrt{\frac{\lambda^2 \Phi_{\max}}{2 T \gamma_T}  + \frac{\lambda c^2_{\max} n m}{ 2 T \gamma_T} \sum^T_{t=1}\frac{\gamma_t^2}{\mu_t}} + \frac{8 \sqrt{nm^3} }{T} \sum^T_{t=1} \mu_{t}. $$

     Therefore plugging in $\epsilon = 2\sqrt{\frac{\lambda^2 \Phi_{\max}}{2 T \gamma_T}  + \frac{\lambda c^2_{\max} n m}{ 2 T \gamma_T} \sum^T_{t=1}\frac{\gamma_t^2}{\mu_t}} + \frac{8 \sqrt{nm^3} }{T} \sum^T_{t=1} \mu_{t}$ and $\mu = \mu_T$ in \Cref{lemma:conversion_bis} and using monotonicity of expectation, we obtain
\begin{equation*}
\E_{t^\star \sim \mathrm{Unif}\{1,T\}}\E_{\pi^{t^\star}|t^\star}\bs{c_i(\pi^{t^\star}_i, \pi^{t^\star}_{-i}) - \min_{\pi_i \in \Delta(\mathcal{P}_i^{\mu_T})} c_i(\pi_i, \pi_{-i}^{t^\star})}\\\leq 4 \sqrt{m} \sqrt{\frac{ \Phi_{\max}}{2 T \gamma_T}  + \frac{ c^2_{\max} n m}{ 2 T \gamma_T \lambda} \sum^T_{t=1}\frac{\gamma_t^2}{\mu_t}} + \frac{16 \sqrt{n}m^2 }{T \lambda} \sum^T_{t=1} \mu_{t} .
\end{equation*}
\end{proof}
\fi
\if 0
\begin{proof}
Let $\lambda$ be the inverse of the smoothness parameter of $\Phi$ that is $\lambda = \frac{1}{2 n^2  c_{\max} \sqrt{m}}$ and $\Phi_{\max} = m n c_{\max}$. From  \Cref{lemma:G_bound} there exist an iterate index $t^\star$ such that $\E \norm{\Pi_{\mathcal{X}^{\mu_t}}(x^{t^\star} - \lambda \nabla \Phi(x^{t^\star})) - x^{t^\star}}_2 \leq \epsilon\lambda$. Then, by the result \citep[Proposition B.1]{Agarwal:2019} and \citep[Lemma 3]{Ghadimi:2016}, we have that 
\begin{equation*}
    \E\bs{\max_{\delta \in \Delta} \delta^T \nabla_{x_i} g(x^{t^\star+1}_i; x^{t^\star+1}_{-i})} \leq 2\epsilon \delta_{\max} \quad \forall i \in [n]
\end{equation*}
with $\Delta \triangleq \{\delta \quad\text{s.t.}\quad x_i^{t^\star} + \delta \in \mathcal{X}_i^{\mu_t}, \norm{\delta}\leq \delta_{\max}\}$and $g(x_{t^\star+1}^i; x_{t^\star+1}^{-i})$ being equal to the potential function $\Phi$ when all the players (but the player $i$) keep their strategy profile fixed. Notice that $g(x_{t^\star+1}^i; x_{t^\star+1}^{-i})$ is by definition a linear function of the fractional cost of the player $i$ while it is a constant function with respect to the others players'  fractional costs.
Therefore, using that in $\mathcal{X}$ it holds that $\delta_{\max} \leq \sqrt{m}$, we have that for any player $i \in [n]$ it holds that
\begin{equation*}
    \E\bs{(\bar{x}_{i,\mu} - x^{t^\star+1}_i)^T \nabla_{x_i} g(x^{t^\star+1}_i; x_{-i}^{t^\star +1})} \leq 4\sqrt{m}\sqrt{\frac{\Phi_{\max}}{2 T \gamma_T}  + \frac{c^2_{\max} n m}{ 2 T \gamma_T \lambda } \sum^T_{t=1}\frac{\gamma_t^2}{\mu_t}} + \frac{16 \sqrt{n} m^2 }{T \lambda} \sum^T_{t=1} \mu_{t} 
\end{equation*}
However by linearity of $g(x^{t^\star+1}_i; x^{t^\star}_{-i})$ it holds that
\begin{align*}
    \E\bs{\Phi(x^{t^\star +1})} - \E\bs{\Phi(\bar{x}_{i,\mu}; x^{t^\star}_{-i})} &= \E \bs{g(x_i^{t^\star +1}; x_{-i}^{t^\star +1})} - \E \bs{g(\bar{x}_{i,\mu}; x_{-i}^{t^\star +1})} \\ &= \E\bs{(\bar{x}_{i,\mu} - x^{t^\star+1}_i)^T \nabla_{x_i} g(x^{t^\star+1}_i; x_{-i}^{t^\star +1})} \\ & \leq  4 \sqrt{m}\sqrt{\frac{ \Phi_{\max}}{2 T \gamma_T}  + \frac{c^2_{\max} n m}{ 2 T \gamma_T \lambda } \sum^T_{t=1}\frac{\gamma_t^2}{\mu_t}} + \frac{16 \sqrt{n}m^2}{T \lambda} \sum^T_{t=1} \mu_{t} 
\end{align*}
for any $\bar{x}_{i,\mu} \in \mathcal{X}_i^{\mu_t}$. Finally, we apply \Cref{lemma:potential_fractional} to bound the difference of expected costs with the final bound that we provided.
\end{proof}
\fi
\if 0
\begin{corollary}\label{lemma:polytope_suboptimality} Under the same definitions of \Cref{lemma:approx_stat_point}, For an iterate $t^\star \sim \mathrm{Unif}\{1,T\}$ such that that for every player $i\in [n]$
\begin{align*}
\E_{t^\star \sim \mathrm{Unif}\{1,T\}}\E_{\pi^{t^\star}|t^\star} \big[c_i(\pi^{t^\star}_i, \pi^{t^\star}_{-i}) - \min_{\pi_i \in \Delta(\mathcal{P}_i)} c_i(\pi_i, \pi_{-i}^{t^\star})\big]\leq & 4 \sqrt{m} \sqrt{\frac{\Phi_{\max}}{2 T \gamma_T}  + \frac{ c^2_{\max} nm}{ 2 T \gamma_T \lambda} \sum^T_{t=1}\frac{\gamma_t^2}{\mu_t}} + \frac{16 \sqrt{n} m^2}{T \lambda} \sum^T_{t=1} \mu_{t} \\ &  + 2 m^2 n c_{\max} \mu_T 
\end{align*} 
\end{corollary}
\begin{proof}
The result follows invoking \Cref{lemma:conversion} with $\epsilon$ as in \Cref{lemma:approx_stat_point} and $\mu=\mu_T$.
\end{proof}
\fi
\subsection{Proof of \Cref{thm:estimator_props}}
\thmestimatorprops*
\begin{proof}
In the first part of the proof we show that the projection operator is separable. That is, for a generic set $\mathcal{X}$ $\Pi_{\mathcal{X}}(z) = \bs{\Pi_{\mathcal{X}_1}[z_1], \dots , \Pi_{\mathcal{X}_n}[z_n]}^\trans$ for any $z \in \mathbb{R}^{nm}$ in the form $z = \bs{z_1, \dots, z_n}$ with $z_j \in \mathbb{R}^m$ for all $j \in [n]$. To prove this, we proceed as follows
\begin{equation*}
    \Pi_{\mathcal{X}}[z] = \argmin_{z^\prime \in \mathcal{X}} \norm{z - z^\prime}^2 = \argmin_{z^\prime \in \mathcal{X}} \sum^n_{i=1}\norm{z_i - z_i^\prime}^2 = \sum^n_{i=1} \argmin_{z_i^\prime \in \mathcal{X}_i} \norm{z_i - z_i^\prime}^2 = \bs{\Pi_{\mathcal{X}_1}[z_1], \dots, \Pi_{\mathcal{X}_n}[z_n]}
\end{equation*}

Let $(\pi_i^t , \pi_{-i}^t)$ denote a Caratheodory Decomposition for $x^t := (x_i^t , x_{-i}^t)$. Then,
\begin{align*}
\E\left[ \left[ \nabla_t\right]_{ie} \right]&= \Pr_{\pi_i^t}\left[\text{agent }i \text{ selects resource }e \text{ at round }t 
\right] \cdot \E\left[c_e^t / x_{ie}^t ~|~\text{agent }i \text{ selects resource }e \text{ at round }t\right]\\
&= x_{ie}^t \cdot \E\left[c_e^t / x_{ie}^t ~|~\text{agent }i \text{ selects resource }e \text{ at round }t\right]\\
&= \E\left[c_e^t ~|~\text{agent }i \text{ selects resource }e \text{ at round }t\right]\\
&= \sum_{\mathcal{S}_{-i} \subseteq{[n - 1]}} \bigg(\prod_{j\in\mathcal{S}_{-i}} \Pr_{\pi_j^t}\left[\text{agent }j \text{ selects }e \text{ at round }t \right] \\&\phantom{=}\cdot\prod_{j\notin\mathcal{S}_{-i}} \Pr_{\pi_j^t}\left[\text{agent }j \text{ does not select }e \text{ at round }t \right] c_e(\abs{\mathcal{S}_{-i}} + 1)\bigg)\\
&= \sum_{\mathcal{S}_{-i} \subseteq{[n - 1]}} \prod_{j\in\mathcal{S}_{-i}} x_{je}^t\prod_{j\notin\mathcal{S}_{-i}} (1-x_{je}^t) c_e\left(\abs{\mathcal{S}_{-i}} + 1\right)
\end{align*}

At the same time,
\begin{align*}
\frac{\partial \Phi(x)}{\partial x_{ie}} &= \sum_{\mathcal{S}_{-i}\subseteq [n-1]} \prod_{j \in \mathcal{S}_{-i}} x_{je} \prod_{j \notin \mathcal{S}_{-i}} (1 - x_{je} ) \sum^{\abs{\mathcal{S}_{-i}} + 1}_{\ell=0} c_e(\ell) - \sum_{\mathcal{S}_{-i}\subseteq [n-1]} \prod_{j \in \mathcal{S}_{-i}} x_{je} \prod_{j \notin \mathcal{S}_{-i}} (1 - x_{je} ) \sum^{\abs{\mathcal{S}_{-i}}}_{\ell=0} c_e(\ell)  \\
&= \sum_{\mathcal{S}_{-i} \subseteq{[n - 1]}} \prod_{j\in\mathcal{S}_{-i}} x_{je}^t\prod_{j\notin\mathcal{S}_{-i}} (1-x_{je}^t) c_e\left(\abs{\mathcal{S}_{-i}} + 1\right)\\
&=\E\left[ \left[ \nabla_t\right]_{ie} \right]
\end{align*}
The second part of the proof concerns bounding the norm of the stochastic gradients of the potential function. More precisely, we show that $\E\left[ \norm{\nabla_t}^2 \right]\leq \frac{n c^2_{\max} m}{\mu_t}$
\begin{align*}
    \E\left[ \norm{\nabla_t}^2 \right] &= \sum^{N}_{i=1}\sum_{e \in E_i}\Pr_{\pi_i^t}\left[\text{agent }i \text{ selects resource }e \text{ at round }t 
\right] \cdot \E\left[(c_e^t / x_{ie}^t)^2 ~|~\text{agent }i \text{ selects resource }e \text{ at round }t\right]\\
&= \sum^{n}_{i=1}\sum_{e \in E_i}x_{ie}^t \cdot \E\left[(c_e^t / x_{ie}^t)^2 ~|~\text{agent }i \text{ selects resource }e\right]\\
&= \sum^{n}_{i=1}\sum_{e \in E_i} \E\left[(c_e^t)^2 / x_{ie}^t ~|~\text{agent }i \text{ selects resource }e\right]\\
&\leq \sum^{n}_{i=1}\sum_{e \in E_i} c_{\max}/\mu_t\\
&=  \frac{\sum^n_{i=1} \abs{E_i} c_{\max}^2}{\mu_t} \leq  \frac{nm c_{\max}^2}{\mu_t}
\end{align*}
\end{proof}

\subsection{Proof of \Cref{lemma:G_bound}}
\statpoint*
\begin{proof}
We make use of the Moreau envelope function defined as follows,
\begin{equation*}
    \phi_{\lambda \mathcal{X}^{\mu_{t+1}}}(x) := \min_{y\in\mathcal{X}^{\mu_{t+1}}} \br{ \Phi(y) + \frac{1}{\lambda}\norm{x - y}^2}
\end{equation*}
Let also $y^{t+1} := \argmin_{y\in\mathcal{X}^{\mu_{t+1}}} \br{  \Phi(y) + \frac{1}{\lambda}\norm{x^t - y}^2}$. It holds that
\begin{align*}
    \phi_{\lambda \mathcal{X}^{\mu_{t+1}}}(x^{t+1}) &\leq \Phi(y^{t+1}) + \frac{1}{\lambda}\norm{x^{t+1} - y^{t+1}}^2 \\ & \leq \Phi(y^{t+1}) + \frac{1}{\lambda}\norm{x^{t} - \gamma_t \nabla_t - y^{t+1}}^2 \\ & =  \Phi(y^{t+1}) + \frac{1}{\lambda}\norm{x^{t} - y^{t+1}}^2 + \frac{\gamma_t^2}{\lambda} \norm{\nabla_t}^2 - \frac{2\gamma_t}{\lambda}(x^{t} - y^{t+1})^T\nabla_t \\ & = \phi_{\lambda\mathcal{X}^{\mu_{t+1}}}(x^{t})  + \frac{\gamma_t^2}{\lambda} \norm{\nabla_t}^2 - \frac{2 \gamma_t}{\lambda}(x^{t} - y^{t+1})^T\nabla_t .
\end{align*}

where the first inequality comes from the definition of Moreau envelope, the second inequality comes from projection property on convex sets and last by the definition $y^{t+1} = \argmin_{y\in\mathcal{X}^{\mu_{t+1}}} \br{  \Phi(y) + \frac{1}{\lambda}\norm{x^t - y}^2}$.

Then, taking total expectation on both sides and using the monotonicity property of expectation, we have
\begin{align*}
    \E \bs{\phi_{\lambda\mathcal{X}^{\mu_{t+1}}}(x^{t+1})} & \leq \E \bs{\phi_{\lambda\mathcal{X}^{\mu_{t+1}}}(x^{t})}  + \frac{\gamma_t^2}{\lambda} \E \bs{\norm{\nabla_t}^2} - \frac{2 \gamma_t}{\lambda}\E \bs{ (x^{t} - y^{t+1})^T\E\bs{ \nabla_t | x^t}}.
\end{align*}
At this point using the bound on the expected squared norm (see~\Cref{thm:estimator_props}) of the cost estimator and using that the cost estimator is unbiased we obtain
\begin{align*}
    \E \bs{\phi_{\lambda\mathcal{X}^{\mu_{t+1}}}(x^{t+1})} & \leq \E \bs{\phi_{\lambda\mathcal{X}^{\mu_{t+1}}}(x^{t})}  + \frac{\gamma_t^2}{\lambda} \frac{c^2_{\max} E}{\mu_t} - \frac{2\gamma_t}{\lambda}\E \bs{ (x^{t} - y^{t+1})^T \nabla \Phi(x^t)}
\end{align*}
At this point, using the fact that $\Phi(\cdot)$ is $\frac{1}{\lambda}$-smooth we get that
\begin{equation*}
    (y^{t+1}-x^{t})^T \nabla \Phi(x^t) \leq \Phi(y^{t+1}) - \Phi(x^{t}) + \frac{1}{2 \lambda} \norm{y^{t+1} - x^{t}}^2
\end{equation*}
which implies that
\begin{align*}
    \E \bs{\phi_{\lambda\mathcal{X}^{\mu_{t+1}}}(x^{t+1})} & \leq \E \bs{\phi_{\lambda\mathcal{X}^{\mu_{t+1}}}(x^{t})}  + \frac{\gamma_t^2}{\lambda} \frac{c^2_{\max} n m}{\mu_t} + \frac{2 \gamma_t}{\lambda}\E \bs{  \Phi(y^{t+1}) - \Phi(x^{t}) + \frac{1}{2 \lambda} \norm{y^{t+1} - x^{t}}^2}.
\end{align*}
Due to the fact that $\mathcal{X}^{\mu_{t}} \subseteq \mathcal{X}^{\mu_{t+1}}$ we get that $\phi_{\lambda\mathcal{X}^{\mu_{t+1}}}(x^{t}) \leq \phi_{\lambda\mathcal{X}^{\mu_{t}}}(x^{t})$ and thus
\begin{align*}
    \E \bs{\phi_{\lambda\mathcal{X}^{\mu_{t+1}}}(x^{t+1})} & \leq \E \bs{\phi_{\lambda\mathcal{X}^{\mu_{t}}}(x^{t})}  + \frac{\gamma_t^2}{\lambda} \frac{c^2_{\max} n m}{\mu_t} + \frac{2 \gamma_t}{\lambda}\E \bs{ \Phi(y^{t+1}) - \Phi(x^{t}) + \frac{1}{2 \lambda} \norm{y^{t+1} - x^{t}}^2}.
\end{align*}
By reordering the terms and summing from $t=1$ to $T$ we get that
\begin{align} 
-\frac{2}{\lambda T}\sum^T_{t=1}\gamma_t\E \bs{ \Phi(y^{t+1}) - \Phi(x^{t}) + \frac{1}{2 \lambda} \norm{y^{t+1} - x^{t}}^2} & \leq \frac{\E \bs{\phi_{\lambda\mathcal{X}^{\mu_1}}(x_{1})} - \E \bs{\phi_{\lambda\mathcal{X}^{\mu_T}}(x_{T})}}{T}  +  \frac{c^2_{\max} n m}{\lambda T} \sum^T_{t=1}\frac{\gamma_t^2}{\mu_t} \nonumber \label{eq:bound_1} \\ & \leq \frac{\Phi_{\max}}{T}  +  \frac{c^2_{\max} n m}{\lambda T} \sum^T_{t=1}\frac{\gamma_t^2}{\mu_t}.
\end{align}
since $\phi_{\lambda \mathcal{X}^{\mu_1}}(x_1) \leq \Phi(x_1) \leq n m c_{\max}$.

\if 0
\textcolor{red}{
we used that $\E \phi_{\lambda\mathcal{X}_{\mu_0}}(x_{0}) - \E \phi_{\lambda\mathcal{X}^{\mu_t}}(x_{T}) \leq 2\Phi_{\max} + \frac{m}{\lambda} = \mathcal{O}\br{\phi_{\max}}$ because
\begin{align*}
\E \phi_{\lambda\mathcal{X}_{\mu_0}}(x_{0}) - \E \phi_{\lambda\mathcal{X}^{\mu_t}}(x_{T}) &\leq \min_{y\in\mathcal{X}_{\mu_0}} \bs{\Phi(y) + \frac{1}{\lambda}\norm{x_0 - y}^2} - \min_{y\in\mathcal{X}^{\mu_t}} \E\bs{\Phi(y) + \frac{1}{\lambda}\norm{x^T - y}^2} \\&= - \max_{y\in\mathcal{X}_{\mu_0}} \bs{-\Phi(y) - \frac{1}{\lambda}\norm{x_0 - y}^2} + \max_{y\in\mathcal{X}^{\mu_t}} \E\bs{-\Phi(y) - \frac{1}{\lambda}\norm{x^T - y}^2} \\&\leq - \max_{y\in\mathcal{X}_{\mu_0}} \bs{-\Phi(y) - \frac{1}{\lambda}\norm{x_0 - y}^2} + \Phi_{\max} \\&\leq \frac{1}{\lambda}\norm{x_0 - y^1}^2 + \Phi_{\max} + \Phi(y^1) \\&\leq \frac{m}{\lambda} + 2\Phi_{\max} = \mathcal{O}\br{\phi_{\max}}
\end{align*}
where the last equality is proven replacing $\lambda$.}
\fi
\if 0
Then, defining as $t^\star$ the time index minimizing the left hand side an using that $\E \phi_\lambda(x_{0}) - \E \phi_\lambda(x_{T}) \leq \Phi_{\max}$, we obtain
\begin{align} 
-\frac{2 \gamma}{\lambda}\E \bs{ \Phi(x^{t^\star}) - \Phi(y_{t^\star+1}) + \frac{1}{2 \lambda} \norm{y_{t^\star+1} - x^{t^\star}}^2} & \leq \frac{\Phi_{\max}}{T}  + \frac{\gamma^2}{\lambda} \frac{c^2_{\max} n m}{\mu} .

\end{align}
\fi
Since $\Phi(x)$ is $\frac{1}{\lambda}$-smooth the function $H(x) = \Phi(x) + \frac{1}{\lambda} \norm{x - x^{t}}^2$ is convex. Now the definition $y^{t+1} = \argmin_{y \in \mathcal{X}^{\mu_{t+1}}} H(y)$ that implies $\innerprod{\nabla H (y^{t+1})}{x - y^{t+1}} \geq 0$ for all $x\in\mathcal{X}^{\mu_{t+1}}$. The latter holds also for all $x\in\mathcal{X}^{\mu_t}$ since $ \mathcal{X}^{\mu_t} \subseteq \mathcal{X}^{\mu_{t+1}}$. At this point we get that,
\begin{align*}
    \Phi(x^{t}) - \Phi(y^{t+1})   - \frac{1}{2 \lambda} \norm{y^{t+1} - x^{t}}^2 & =    H(x^{t}) - H (y^{t+1})  + \frac{1}{2 \lambda} \norm{y^{t+1} - x^{t}}^2 \\ & \geq \nabla H (y^{t+1})^T(  x^{t} - y^{t+1}) + \frac{1}{2\lambda} \norm{y^{t+1} - x^{t}}^2 \\ & \geq \frac{1}{2\lambda} \norm{y^{t+1} - x^{t}}^2
\end{align*}
Therefore, plugging into \eqref{eq:bound_1}, it holds that
\begin{equation*}
    \frac{1 }{T \lambda^2} \sum^T_{t=1}\gamma_t\E \norm{y^{t+1} - x^{t}}^2  \leq \frac{n m c_{\max}}{T}  + \frac{c^2_{\max} n m}{\lambda T} \sum^T_{t=1}\frac{\gamma_t^2}{\mu_t}.
\end{equation*}
Then, using the lower bound $\frac{1}{T \lambda^2} \sum^T_{t=1}\gamma_t\E \norm{y^{t+1} - x^{t}}^2 \geq \frac{\gamma_T}{T \lambda^2} \sum^T_{t=1}\E \norm{y^{t+1} - x^{t}}^2$ on the left hand side, using Jensen's inequality and taking square root on both sides we obtain
\begin{equation}
    \frac{1}{T} \sum^T_{t=1}\E \norm{y^{t+1} - x^{t}}  \leq \sqrt{\frac{\lambda^2 n m c_{\max}}{T \gamma_T}  + \frac{\lambda c^2_{\max} n m}{ T \gamma_T} \sum^T_{t=1}\frac{\gamma_t^2}{\mu_t}}. \label{eq:avg_bound}
\end{equation}
Before concluding the proof we need to bound the difference between the elements in the sequence $y^t$ and the iterates of proximal point projected always on the final set $\mathcal{X}^{\mu_t}$. To this end, we introduce the sequence $\tilde{y}^{t+1} = \Pi_{\mathcal{X}^{\mu_{T}}}\bs{x^t - \frac{\lambda}{2}\nabla\Phi(\tilde{y}^{t+1})}$ and we notice that
\begin{equation*}
    \norm{y^{t+1} - \tilde{y}^{t+1}} = \norm{\Pi_{\mathcal{X}_{\mu_{t+1}}}\bs{x^t - \frac{\lambda}{2}\nabla\Phi(y^{t+1})}- \Pi_{\mathcal{X}_{\mu_{T}}}\bs{x^t - \frac{\lambda}{2}\nabla\Phi(\tilde{y}^{t+1})}}
\end{equation*}
At this point by defining $w^t := x^t - \frac{\lambda}{2}\nabla\Phi(y^{t+1})$ and $\tilde{w}^t := x^t - \frac{\lambda}{2}\nabla\Phi(\tilde{y}^{t+1})$ we get that
\begin{align*}
\norm{y^{t+1} - \tilde{y}^{t+1}} &\leq\norm{\Pi_{\mathcal{X}}[w^t] - \Pi_{\mathcal{X}^{\mu_{t+1}}}[w^t]} + \norm{\Pi_{\mathcal{X}}[\tilde{w}^t] - \Pi_{\mathcal{X}^{\mu_{T}}}[\tilde{w}^t]} + \norm{\Pi_{\mathcal{X}}[w^t] - \Pi_{\mathcal{X}}[\tilde{w}^t]} \\& \leq \sqrt{nm^3} \mu_{t+1} + \sqrt{nm^3} \mu_{T} + \norm{w^t - \tilde{w}^t} \\& = \sqrt{nm^3} \mu_{t+1} + \sqrt{nm^3} \mu_{T} + \frac{\lambda}{2}\norm{\nabla\Phi(y^{t+1}) - \nabla\Phi(\tilde{y}^{t+1})} \\& \leq \sqrt{nm^3} \mu_{t+1} + \sqrt{nm^3} \mu_{T} + \frac{1}{2}\norm{y^{t+1} - \tilde{y}^{t+1}}
\end{align*}
where in the first inequality, we used the bound distance between a point in $\mathcal{X}$ and its projection in $\mathcal{X}^{\mu}$ used in the proof of \Cref{thm:bounded_poly} and in the last inequality we used the Lipschitz continuity of the gradients of the potential function.
The above estimation implies \begin{equation}\norm{y^{t+1} - \tilde{y}^{t+1}}  \leq 2 \sqrt{nm^3} (\mu_{t+1} + \mu_{T})\label{eq:y_tilde_y_bound}.\end{equation} Moreover, by a simple application of triangular inequality and the bounds in \Cref{eq:avg_bound} and in \Cref{eq:y_tilde_y_bound}, we obtain 
\begin{align*}
\frac{1}{T} \sum^T_{t=1}\E\norm{x^t- \tilde{y}^{t+1}} &\leq \frac{1}{T} \sum^T_{t=1}\E\norm{y^{t+1} - \tilde{y}^{t+1}} + \frac{1}{T} \sum^T_{t=1}\E \norm{y^{t+1} - x^{t}}  \\&\leq \frac{1}{T} \sum^T_{t=1} 2 \sqrt{nm^3} (\mu_{t+1} + \mu_{T}) + \sqrt{\frac{\lambda^2 n m c_{\max}}{2 T \gamma_T}  + \frac{\lambda c^2_{\max} n m}{ 2 T \gamma_T} \sum^T_{t=1}\frac{\gamma_t^2}{\mu_t}} \\&\leq \frac{4 \sqrt{nm^3} }{T} \sum^T_{t=1} \mu_{t} + \sqrt{\frac{\lambda^2n m c_{\max}}{2 T \gamma_T}  + \frac{\lambda c^2_{\max} n m}{ 2 T \gamma_T} \sum^T_{t=1}\frac{\gamma_t^2}{\mu_t}}
\end{align*}
Finally, we conclude the proof with the following steps
\begin{align*}
     \frac{1}{T}\sum^T_{t=1}\E \norm{G(x^t)}_2 &= \frac{1}{T}\sum^T_{t=1}\E \norm{\Pi_{\mathcal{X}^{\mu_t}}\bs{x^{t} - \frac{\lambda}{2} \nabla \Phi(x^{t})} - x^{t}}_2  \\&\leq \frac{1}{T}\sum^T_{t=1}\E \norm{\Pi_{\mathcal{X}^{\mu_t}}\bs{x^{t} - \frac{\lambda}{2} \nabla \Phi(x^{t})} - \tilde{y}^{t+1}}_2 + \frac{1}{T}\sum^T_{t=1}\E \norm{\tilde{y}^{t+1} - x^{t}}_2 \\ & \leq \frac{1}{T}\sum^T_{t=1}\E \norm{\Pi_{\mathcal{X}^{\mu_t}}\bs{x^{t} - \frac{\lambda}{2} \nabla \Phi(x^{t})} - \Pi_{\mathcal{X}^{\mu_t}}\bs{x^{t} - \frac{\lambda}{2} \nabla \Phi(\tilde{y}^{t+1})}}_2\\ &\phantom{\leq} + \sqrt{\frac{\lambda^2 n m c_{\max}}{2 T \gamma_T}  + \frac{\lambda c^2_{\max} n m}{ 2 T \gamma_T} \sum^T_{t=1}\frac{\gamma_t^2}{\mu_t}} + \frac{4 \sqrt{nm^3} }{T} \sum^T_{t=1} \mu_{t} \\ & \leq \frac{\lambda }{2 T}\sum^T_{t=1}\E \norm{ \nabla \Phi(x^{t}) - \nabla \Phi(\tilde{y}^{t+1}) }_2 + \sqrt{\frac{\lambda^2 n m c_{\max}}{2 T \gamma_T}  + \frac{\lambda c^2_{\max} n m}{ 2 T \gamma_T} \sum^T_{t=1}\frac{\gamma_t^2}{\mu_t}} + \frac{4 \sqrt{nm^3} }{T} \sum^T_{t=1} \mu_{t} \\ & \leq \frac{\lambda}{2 T}\frac{1}{\lambda}\sum^T_{t=1}\E\norm{x^{t} - \tilde{y}^{t+1}}_2+ \sqrt{\frac{\lambda^2 n m c_{\max}}{2 T \gamma_T}  + \frac{\lambda c^2_{\max} n m}{ 2 T \gamma_T} \sum^T_{t=1}\frac{\gamma_t^2}{\mu_t}} + \frac{4 \sqrt{nm^3} }{T} \sum^T_{t=1} \mu_{t} \\ & \leq
     2\sqrt{\frac{\lambda^2 n m c_{\max}}{2 T \gamma_T}  + \frac{\lambda c^2_{\max} n m}{ 2 T \gamma_T} \sum^T_{t=1}\frac{\gamma_t^2}{\mu_t}} + \frac{8 \sqrt{nm^3} }{T} \sum^T_{t=1} \mu_{t} 
\end{align*}
that concludes the proof.
\end{proof}

\subsection{Proof of \Cref{thm:convergence_to_nash}}
\thmconvergencetoNash*
\begin{proof}
Let $x^t$ denote the marginalization of $\pi^t$ then by applying Lemma~\ref{lemma:conversion1} for $\mu:= \mu_T$ we get that
\begin{align*}
\max_{i \in [n]}\left[c_i(\pi^{t}_i, \pi^{t}_{-i}) - \min_{\pi_i \in \Delta(\mathcal{P}_i)} c_i(\pi_i, \pi_{-i}^{t})\right]\leq & 4 n^2mc_{\max} \norm{G(x^t)} + 2m^2nc_{\max}\mu_T
\end{align*} 

As a result,
\begin{align*}
\frac{1}{T}\E \left[\sum_{t=1}^T\max_{i \in [n]}\left[c_i(\pi^{t}_i, \pi^{t}_{-i}) - \min_{\pi_i \in \Delta(\mathcal{P}_i)} c_i(\pi_i, \pi_{-i}^{t})\right]\right]\leq & 4 n^2mc_{\max} \E\left[ \frac{1}{T} \sum_{t=1}^T \norm{G(x^t)}\right] + 2m^2nc_{\max}\mu_T
\end{align*} 
where $G(x) = \Pi_{\mathcal{X}^{\mu_T}}\bs{x - \lambda \nabla \Phi(x)} - x$. Then by Theorem~\ref{lemma:G_bound} and the fact that $\lambda = (2n^2c_{\max}\sqrt{m})^{-1}$ we get

\begin{align*}
\frac{1}{T}\E \left[\sum_{t=1}^T\max_{i \in [n]}\left[c_i(\pi^{t}_i, \pi^{t}_{-i}) - \min_{\pi_i \in \Delta(\mathcal{P}_i)} c_i(\pi_i, \pi_{-i}^{t})\right]\right]\leq & 4 \sqrt{m} \sqrt{\frac{n m c_{\max}}{2 T \gamma_T}  + \frac{ c^2_{\max} nm}{ 2 T \gamma_T \lambda} \sum^T_{t=1}\frac{\gamma_t^2}{\mu_t}} + \frac{16 \sqrt{n} m^2}{T \lambda} \sum^T_{t=1} \mu_{t} \\ &  + 2 m^2 n c_{\max} \mu_T 
\end{align*} 
To simplify notation let
\[(\mathrm{A}):= \frac{1}{T}\E \left[\sum_{t=1}^T\max_{i \in [n]}\left[c_i(\pi^{t}_i, \pi^{t}_{-i}) - \min_{\pi_i \in \Delta(\mathcal{P}_i)} c_i(\pi_i, \pi_{-i}^{t})\right]\right]\]

At this point by choosing the sequence $\gamma_t = C_\gamma t^{-3/5}$ and $\mu_t = C_\mu\min\bc{1/m, t^{-1/5}}$ we have that
\begin{align*}
(\mathrm{A}) &\leq 4 \sqrt{m} \sqrt{\frac{n m c_{\max}}{2 T^{2/5}C_\gamma}  + \frac{ c^2_{\max} nm C_\gamma}{ 2 T^{2/5} \lambda C_\mu} \sum^{m^{1/5}}_{t=1}\frac{1}{mt^{6/5}} + \frac{ c^2_{\max} nm C_\gamma}{ 2 T^{2/5} \lambda C_\mu} \sum^T_{t=1}\frac{t^{1/5}}{t^{6/5}}} \\&\phantom{=}+ \frac{20 \sqrt{n}m^2 C_\mu}{\lambda T^{1/5}}  + \frac{2 m^2 n c_{\max} C_\mu}{T^{1/5}} \\
    &\leq 4 \sqrt{m} \sqrt{\frac{n m c_{\max}}{2 T^{2/5}C_\gamma}  + \frac{c^2_{\max} nmC_\gamma}{ 2 T^{2/5}\lambda C_\mu} (\log T + 1)} + \frac{20 \sqrt{n}m^2 C_\mu}{\lambda T^{1/5}} + \frac{2 m^2 n c_{\max} C_\mu}{T^{1/5}} \\
    &= \br{4 \sqrt{m} \sqrt{\frac{n m c_{\max}}{2 C_\gamma} + \frac{ c^2_{\max} nm C_\gamma}{ 2 \lambda C_\mu} (\log T + 1)} + \frac{20 \sqrt{n}m^2 C_\mu}{\lambda} + 2 m^2 n c_{\max} C_\mu} T^{-1/5}
\end{align*}
By replacing the values of $\lambda = (2 n^2 c_{\max} \sqrt{m})^{-1}$, we obtain
\begin{align*}
(\mathrm{A}) &\leq \br{4 \sqrt{m} \sqrt{\frac{ mn c_{\max}}{2 C_\gamma } +  \frac{c^3_{\max} n^3m^{3/2} C_\gamma(\log T + 1)}{C_\mu} } + 80 m^{5/2} n^{5/2} c_{\max}C_\mu } T^{-1/5}
\end{align*}
By neglecting non-dominant terms and choosing $C_{\gamma} = (m^{4/5} n^{8/5}c_{\max})^{-1}$ and $C_{\mu} = (n^{6/5} m^{11/10})^{-1}$, we obtain:
\begin{align*}
(\mathrm{A}) &\leq \br{4 \sqrt{m} \sqrt{\frac{ c^2_{\max} n^{13/5}m^{9/5}}{2 } +  c^2_{\max} n^{13/5}m^{9/5}(\log T + 1)} + 80 m^{14/10} n^{13/10} c_{\max}} T^{-1/5} \\&\leq  \frac{88 m^{14/10} n^{13/10} c_{\max} (\log T + 1) }{T^{1/5} }
\end{align*}
The latter implies that if $T = \mathcal{O}\left( m^{7}n^{6.5}/\epsilon^5 \right)$ then $(\mathrm{A}) \leq \epsilon$.

By the choosing $C_{\gamma}=C_\mu=1$ we obtain
\[(\mathrm{A}) \leq \mathcal{O}\br{\frac{\sqrt{m^5 n^3 c^3_{\max}}(\log T + 1) }{T^{1/5} }}\]
which implies that if $T = \mathcal{O}\left( m^{12.5}n^{7.5}/\epsilon^5 \right)$ then $(\mathrm{A}) \leq \epsilon$.
\end{proof}

\Markov*
\begin{proof}
Let the random variable $E_t:= \max_{i \in [n]}\left[c_i(\pi^{t}_i, \pi^{t}_{-i}) - \min_{\pi_i \in \Delta(\mathcal{P}_i)} c_i(\pi_i, \pi_{-i}^{t})\right]$. Consider the random variable $E$ taking the value of the random variable $E_t$ with $t$ being sampled uniformly at random. Notice that 
\[\E [E] = \frac{1}{T}\E \left[\sum_{t=1}^T\max_{i \in [n]}\left[c_i(\pi^{t}_i, \pi^{t}_{-i}) - \min_{\pi_i \in \Delta(\mathcal{P}_i)} c_i(\pi_i, \pi_{-i}^{t})\right]\right] \leq \epsilon.\]
which by Markov inequality implies that with probability $\geq 1-\delta$, $E \leq \epsilon/\delta$. Thus with probability $\geq 1 -\delta$, $\max_{i \in [n]}\left[c_i(\pi^{t}_i, \pi^{t}_{-i}) - \min_{\pi_i \in \Delta(\mathcal{P}_i)} c_i(\pi_i, \pi_{-i}^{t})\right] \leq \epsilon/\delta$ meaning that $\pi^t$ is an $\epsilon/\delta$-Mixed NE. This establishes the second item of Corollary~\ref{c:markov}. Now consider the set of time steps $\mathcal{B}:= \{t \in \{1,t\}:~ E_t > \epsilon/\delta^2\}$. With probability $1-\delta$, $\sum_{t=1}^T E_t \leq \frac{\epsilon T}{\delta}$ we directly get that we probability $1-\delta$, $|\mathcal{B}| \leq \delta T$. As a result, with probability $\geq 1-\delta$, $(1-\delta)$ fraction of the profiles $\pi^1,\ldots.\pi^T$ are $\epsilon/\delta^2$-Mixed NE.
\end{proof}

\section{Auxiliary Lemmas}
\if 0
\begin{lemma}
\label{lemma:feasibility} $e \in \mathcal{A}$ if and only if the set  $\bc{ x\in\mathcal{X} : x_e = 1 }$ is not empty.
\end{lemma}
\begin{proof}
$\br{\Rightarrow}$ If $e \in \mathcal{A}$, this means that $e \in \hat{P}$ with $\hat{P}$ being a simple path that is feasible in $\mathcal{X}$ with $\hat{P}_e = 1$.

$\br{\Leftarrow}$ Let assume that $x\in\bc{\mathcal{X} : x_e = 1}$. Then, by Caratheodory's theorem, we can write $x=\sum^{E+1}_{j=1} \lambda^j \hat{P}^j$ with $\bc{\hat{P}^j}^{E+1}_{j=1}$ being simple paths.
At this point, it must exists at least one index $j^\star$ such that $\hat{P}_e^{j^\star} = 1$. This means that there exists a simple path containing the edge $e$. Then, it follows that $e \in \mathcal{A}$.
\end{proof}
\fi
\begin{lemma}
\label{lemma:path_existence}
Let the set $A$ be defined as in \Cref{alg:charateodory}, then $\forall e \in A$, given $x\in\mathcal{X}^\mu$ such that $x_e > 0$,  there exists a simple path $\hat{p}$ such that \begin{itemize}
    \item (i) $e \in \hat{p}$, 
    \item (ii) $x_e > 0 \quad \forall e \in \hat{p}$.
\end{itemize}
Therefore, Step 6 in \Cref{alg:charateodory} can always be implemented.
\end{lemma}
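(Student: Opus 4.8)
The plan is to establish the claim by a flow-decomposition reachability argument that exploits two features of the setting: membership $x\in\mathcal{X}^\mu\subseteq\mathcal{X}_i$ forces $x$ to obey the flow-conservation equalities of the path polytope, and $G(V,E)$ is acyclic. Write the target edge as $e=(u,v)$ with $x_e>0$. I will grow a positive-weight path forward from $v$ to $t_i$, a positive-weight path backward from $u$ to $s_i$, and splice them together through $e$. Two boundary facts will be used to pin down the endpoints: for every point of the path polytope one has $\sum_{e'\in\mathrm{In}(s_i)}x_{e'}=0$ and $\sum_{e'\in\mathrm{Out}(t_i)}x_{e'}=0$, because each vertex of $\mathcal{X}_i$ is a simple $(s_i,t_i)$-path that neither enters $s_i$ nor leaves $t_i$, and these equalities survive convex combinations (and, indeed, the subtractions performed in \Cref{alg:charateodory}).

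For the forward path I would argue as follows. Since $e$ enters $v$ and $x_e>0$, the in-flow at $v$ is strictly positive; hence whenever $v\neq t_i$ the conservation constraint $\sum_{e'\in\mathrm{In}(v)}x_{e'}=\sum_{e'\in\mathrm{Out}(v)}x_{e'}$ yields a strictly positive out-flow and therefore a positive-weight outgoing edge to follow. Iterating, every vertex reached along a positive-weight edge inherits strictly positive in-flow, so unless it is $t_i$ it again offers a positive-weight outgoing edge. Acyclicity forces the visited vertices to increase strictly in a fixed topological order, so the walk cannot repeat a vertex and must halt within $|V|$ steps. The only vertex at which it can halt (no positive-weight out-edge) is $t_i$: every internal vertex has $\text{in}=\text{out}>0$, and the walk never reaches $s_i$ because no positive-weight edge enters $s_i$. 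This delivers a simple positive-weight path from $v$ to $t_i$.

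The backward path is the exact mirror image, following positive-weight incoming edges from $u$: strict positivity of the out-flow at $u$ gives, by conservation, a positive-weight incoming edge; acyclicity (now the topological index strictly decreases) guarantees termination; and since no positive-weight edge leaves $t_i$, the walk can only halt at $s_i$, yielding a simple positive-weight path from $s_i$ to $u$. Gluing these with $e$ produces $\hat{p}$. To see that $\hat{p}$ is simple I would fix a topological order: the edge $e=(u,v)$ forces $u\prec v$, every vertex of the backward segment is $\preceq u$, and every vertex of the forward segment is $\succeq v$, so the two vertex sets are disjoint and no vertex repeats. Consequently $\hat{p}$ is a simple $(s_i,t_i)$-path with $e\in\hat{p}$ and $x_{e'}>0$ for all $e'\in\hat{p}$, which are precisely conditions (i)--(ii); since $A=\{e\in E:x_e>0\}$ in \Cref{alg:charateodory}, this certifies that the path demanded in Step~6 always exists. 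The main obstacle is making the termination step airtight --- ruling out the walk getting ``stuck'' at an internal vertex or drifting to the wrong terminal --- which is exactly where flow conservation (positive in-flow $\Rightarrow$ positive out-flow, and symmetrically) supplies the next edge, and where the boundary facts $\mathrm{in}(s_i)=\mathrm{out}(t_i)=0$ together with acyclicity force the correct endpoints and simplicity.
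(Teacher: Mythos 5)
Your proof is correct, but it takes a genuinely different route from the paper's. The paper argues via the V-description: by Carath\'eodory's theorem, $x=\sum_j\lambda_j\hat{p}_j$ is a convex combination of simple paths; if no path with $\lambda_j>0$ contained $e$ we would have $x_e=0$, so some $\hat{p}_{j^\star}$ with $\lambda_{j^\star}>0$ contains $e$, and every edge of that path satisfies $x_{e'}\geq\lambda_{j^\star}>0$. You instead work with the H-description (flow conservation) plus acyclicity, growing a positive-weight walk forward from the head of $e$ and backward from its tail; the boundary facts $\sum_{e'\in\mathrm{In}(s_i)}x_{e'}=0$ and $\sum_{e'\in\mathrm{Out}(t_i)}x_{e'}=0$, which you correctly justify through the extreme-point characterization since they are not among the listed constraints, pin the endpoints at $s_i$ and $t_i$, and the topological-order argument gives termination and simplicity. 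The trade-off: the paper's argument is shorter and applies verbatim to any combinatorial polytope $\mathcal{X}_i=\mathrm{conv}(\hat{\mathcal{P}}_i)$, with no graph structure needed; yours is constructive and effectively supplies the $\mathcal{O}(|V|+|E|)$ implementation of Step~6 that the complexity claim in Lemma~\ref{l:decompose_lemma} presupposes, and it makes explicit why the existence claim persists for the residual vectors inside the while-loop (conservation at internal vertices and the two boundary zeros are preserved by subtracting $x_{\min}$ along a path), a point both proofs otherwise leave implicit since the stated hypothesis $x\in\mathcal{X}^{\mu}$ holds only at the first iteration.
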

\begin{proof}
By the Caratheodory's theorem, there exists a collection of simple paths $\{ \hat{p}_1, \dots,  \hat{p}_{m+1}\}$ and scalars $\lambda_1, \dots, \lambda_{m+1}$ such that $\lambda_i \geq 0 \quad \forall i \in \{1, m+1\}$ and $\sum^{m+1}_{i=1} \lambda_i = 1$ that allows to write $x = \sum^{m+1}_{j=1} \lambda_j \hat{p}_j$.
At this point, assume by contradiction that $\hat{p}_{je} = 0$ for all $j \in \{1, m+1\}$. This implies that $x_e = 0$ which is a clear contradiction.
That means that there exist $j^\star \in \{1, m +1\}$ such that $e \in \hat{p}_{j^\star}$ proving part (i). In addition it must be true that the weight in the convex combination is positive, i.e. $\lambda_{j^\star} > 0$. This implies that $x_e \geq \lambda_{j^\star}\hat{p}_{j^\star e}$. Therefore, for the edges $e$ s.t. $\hat{p}_{j^\star e} = 1$, it holds that $x_e \geq \lambda_{j^\star} > 0$.
\end{proof}


\begin{lemma}
\label{lemma:potential_fractional}
Let $x,\bar{x} \in \mathcal{X}$ and $\pi,\bar{\pi}$ be their respective Caratheodory decompositions. Then,
\begin{equation*}
c_i\br{\bar{\pi}_i, \pi_{-i}} - c_i\br{\pi_i, \pi_{-i}}  = \Phi(\bar{x}_i, x_{-i}) - \Phi(x_i, x_{-i})
\end{equation*}
\end{lemma}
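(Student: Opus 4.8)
The plan is to exploit two structural facts. First, holding the opponents' marginals $x_{-i}$ fixed, the fractional potential $\Phi$ of \Cref{def:fractional_potential} is an \emph{affine} (indeed multilinear) function of agent $i$'s variables $x_i=(x_{i,e})_{e\in E}$: each coordinate $x_{i,e}$ enters $\Phi$ only through the $e$-th summand and only to the first power (as $x_{i,e}$ when $i\in\mathcal{S}$ and as $1-x_{i,e}$ when $i\notin\mathcal{S}$), so the partial derivatives $\partial\Phi/\partial x_{i,e}$ are constant in $x_i$ and depend only on $x_{-i}$. Second, by the Caratheodory decomposition (\Cref{l:caratheodory}) the marginals obey $x_{i,e}=\Pr_{p_i\sim\pi_i}[e\in p_i]$ and $\bar x_{i,e}=\Pr_{p_i\sim\bar\pi_i}[e\in p_i]$, and since the players sample their strategies independently, the indicators $\{\mathds{1}[e\in p_j]\}_{j}$ are independent Bernoulli random variables with parameters $x_{j,e}$.

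Next I would compute $c_i(\pi_i,\pi_{-i})$ by linearity of expectation over the edges of $p_i$. Writing $\ell_e=\mathds{1}[e\in p_i]+\sum_{j\neq i}\mathds{1}[e\in p_j]$ and conditioning on whether $i$ uses $e$, the independence of agent $i$ from the others yields
\[ c_i(\pi_i,\pi_{-i})=\sum_{e\in E} x_{i,e}\sum_{\mathcal{S}_{-i}\subseteq[n]\setminus\{i\}}\prod_{j\in\mathcal{S}_{-i}}x_{j,e}\prod_{j\notin\mathcal{S}_{-i},\,j\neq i}(1-x_{j,e})\,c_e(\abs{\mathcal{S}_{-i}}+1). \]
The inner sum is exactly $\partial\Phi(x)/\partial x_{i,e}$; this telescoping identity is carried out verbatim in the proof of \Cref{thm:estimator_props}, so I would simply invoke it. Hence $c_i(\pi_i,\pi_{-i})=\sum_{e\in E}x_{i,e}\,\partial\Phi(x)/\partial x_{i,e}$, and the identical formula holds with $x_i$ replaced by $\bar x_i$, since the coefficients $\partial\Phi/\partial x_{i,e}$ depend only on the fixed $x_{-i}$.

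Finally I would combine the two ingredients. Because $\Phi(\cdot,x_{-i})$ is affine in $x_i$ with gradient that is constant in $x_i$, the first-order expansion is exact:
\[ \Phi(\bar x_i,x_{-i})-\Phi(x_i,x_{-i})=\sum_{e\in E}(\bar x_{i,e}-x_{i,e})\,\frac{\partial\Phi(x)}{\partial x_{i,e}}. \]
Subtracting the two cost expressions from the previous step produces precisely this same right-hand side, giving $c_i(\bar\pi_i,\pi_{-i})-c_i(\pi_i,\pi_{-i})=\Phi(\bar x_i,x_{-i})-\Phi(x_i,x_{-i})$, which is the claim. The only points requiring care are the bookkeeping that the Caratheodory marginals legitimately reduce the correlated path sampling to per-edge independent Bernoullis, and the matching of the combinatorial inner sum with $\partial\Phi/\partial x_{i,e}$; both are routine given independence and multilinearity, and the latter is already established earlier in the paper. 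I do not expect a substantive obstacle, as this lemma is simply the fractional lift of the defining telescoping property of a Rosenthal potential, and affineness in a single player's variables makes the exact-potential identity immediate.
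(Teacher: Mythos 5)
Your proof is correct, but it takes a genuinely different route from the paper's. The paper first reduces the mixed-strategy cost difference to an expectation of pure-strategy differences, invokes the exact-potential identity $C_i(p_i',p_{-i})-C_i(p_i,p_{-i})=\Phi(p_i',p_{-i})-\Phi(p_i,p_{-i})$ for pure profiles (Rosenthal), and then verifies by a combinatorial count over the events ``exactly $s$ agents select $e$'' that the expectation of the pure potential under the product of Caratheodory decompositions equals the fractional potential of \Cref{def:fractional_potential} evaluated at the marginals. You instead bypass the pure potential entirely: you identify $c_i(\pi_i,\pi_{-i})=\sum_{e}x_{i,e}\,\partial\Phi(x)/\partial x_{i,e}$ (the same computation carried out in the proof of \Cref{thm:estimator_props}, which is established independently of this lemma, so there is no circularity), note that $\partial\Phi/\partial x_{i,e}$ is constant in $x_i$ by multilinearity, and conclude via the exact first-order expansion of the affine map $x_i\mapsto\Phi(x_i,x_{-i})$ --- a linearity fact the paper itself exploits later in the proof of \Cref{lemma:conversion_bis}. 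The one point you rightly flag as needing care, that $c_i$ depends on $\pi$ only through the per-edge marginals, does go through: the cost decomposes edge by edge, and for fixed $e$ the indicators $\mathds{1}[e\in p_j]$ are independent Bernoullis across agents, so the cross-edge correlations within each agent's path distribution are irrelevant. Your argument is arguably more economical since it reuses the gradient identity already proved for \Cref{thm:estimator_props}, whereas the paper's version makes the ``fractional potential = expected Rosenthal potential'' interpretation explicit, which is conceptually informative but requires the extra counting step.
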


\begin{proof}
We start by manipulating the cost difference 
\begin{align*}
c_i\br{\bar{\pi}_i, \pi_{-i}} - c_i\br{\pi_i, \pi_{-i}} &= \sum_{p_i\in\mathcal{P}_i} \sum_{p_{-i}\in\mathcal{P}_{-i}} \Pp{\pi_{-i}}{p_{-i}} \br{\Pp{\bar{\pi}_i}{p_i}C_i(p_i,p_{-i}) - \Pp{\pi_i}{p_i}C_i(p_i,p_{-i})} \\&= \sum_{p_i\in\mathcal{P}_i}\sum_{p'_i\in\mathcal{P}_i} \sum_{p_{-i}\in\mathcal{P}_{-i}} \Pp{\pi_{-i}}{p_{-i}}\Pp{\bar{\pi}_i}{p'_i}\Pp{\pi_i}{p_i} \br{C_i(p'_i,p_{-i}) - C_i(p_i,p_{-i})}
 \\&= \sum_{p_i\in\mathcal{P}_i}\sum_{p'_i\in\mathcal{P}_i} \sum_{p_{-i}\in\mathcal{P}_{-i}} \Pp{\pi_{-i}}{p_{-i}}\Pp{\bar{\pi}_i}{p'_i}\Pp{\pi_i}{p_i} \br{\Phi(p'_i,p_{-i}) - \Phi(p_i,p_{-i})}
 \\&= \sum_{e \in E}\sum_{p_i\in\mathcal{P}_i}\sum_{p'_i\in\mathcal{P}_i} \sum_{p_{-i}\in\mathcal{P}_{-i}} \Pp{\pi_{-i}}{p_{-i}}\Pp{\bar{\pi}_i}{p'_i}\Pp{\pi_i}{p_i} \br{\sum_{i=0}^{l_e(p'_i,p_{-i})} c_e(i) - \sum_{i=0}^{l_e(p_i,p_{-i})}c_e(i)} \\&= \sum_{e \in E}\sum_{p'_i\in\mathcal{P}_i} \sum_{p_{-i}\in\mathcal{P}_{-i}} \Pp{\pi_{-i}}{p_{-i}}\Pp{\bar{\pi}_i}{p'_i}\sum_{i=0}^{l_e(p'_i,p_{-i})} c_e(i) \\&\phantom{=}- \sum_{e \in E}\sum_{p_i\in\mathcal{P}_i} \sum_{p_{-i}\in\mathcal{P}_{-i}} \Pp{\pi_{-i}}{p_{-i}}\Pp{\pi_i}{p_i} \sum_{i=0}^{l_e(p_i,p_{-i})}c_e(i)
\end{align*}
At this point we can consider the two terms of the last expression can be written as the potential function in \Cref{def:fractional_potential}.
\begin{align*}
\sum_{e \in E}\sum_{p_i\in\mathcal{P}_i} \sum_{p_{-i}\in\mathcal{P}_{-i}} \Pp{\pi_{-i}}{p_{-i}}\Pp{\bar{\pi}_i}{p_i}\sum_{i=0}^{l_e(p_i,p_{-i})} c_e(i) &= \sum^N_{s=1}\sum_{e \in E}\sum_{p_i\in\mathcal{P}_i} \sum_{p_{-i}\in\mathcal{P}_{-i}} \Pp{\pi_{-i}}{p_{-i}}\Pp{\bar{\pi}_i}{p_i}\mathds{1}\bc{l_e(p_i,p_{-i}) = s}\sum_{i=0}^{s} c_e(i) \\&= \sum_{e \in E} \sum^N_{s=1} \Pp{\pi_i, \bar{\pi}_i}{\text{Exactly $s$ agents select edge $e$}}\sum_{i=0}^{s} c_e(i) \\&= \sum_{e \in E} \sum^N_{s=1} \sum_{\mathcal{S} \subset \binom{[n]}{s}} \prod_{j\in\mathcal{S}} x_{je}\prod_{j\notin\mathcal{S}} (1-x_{je})\sum_{i=0}^{s} c_e(i)
\\&= \sum_{e \in E} \sum_{\mathcal{S} \subset{[n]}} \prod_{j\in\mathcal{S}} x_{je}\prod_{j\notin\mathcal{S}} (1-x_{je})\sum_{i=0}^{\abs{\mathcal{S}}} c_e(i)
\end{align*}
\end{proof}
\if 0
\section{Tighter Lemma D.3 in \cite{leonardos2022global}}

Let $\beta$ be the smoothness constant of $\Phi$ and consider the updates \textcolor{red}{Are those the updates considered by \cite{leonardos2022global} ?}
\begin{equation*}
    \begin{cases}
    \pi^{(t+{1/2})}_i &= P_{\Delta^S_A}\bs{\pi^{(t)}_i + \beta \nabla_{\pi_i} \Phi(\pi^{(t)}_i)} \\
    \pi^{(t+1)}_i &= (1 - \alpha)\pi^{(t+{1/2})}_i + \frac{\alpha}{A_i}
    \end{cases}
\end{equation*}
Then, we want a bound for 
\begin{equation*}
    \max_{\delta: \norm{\delta}\leq 1, \pi^{(t)}_i + \delta \in \Delta^S_A} \delta^T \nabla_{\pi_i}\Phi(\pi^{(t+1)}_i) = (\delta_{\max})^T \nabla_{\pi_i}\Phi(\pi^{(t+1)}_i)
\end{equation*}
where we defined as $\delta_{max}$ the maximizer of the left hand side.

At this point, we can use Lipschitzness of the gradient as follows
\begin{align*}
    (\delta_{\max})^T \nabla_{\pi_i}\Phi(\pi^{(t+1)}_i) &= (\delta_{\max})^T \nabla_{\pi_i}\Phi(\pi^{(t+1/2)}_i) + (\delta_{\max})^T \br{\nabla_{\pi_i}\Phi(\pi^{(t+1)}_i) - \nabla_{\pi_i}\Phi(\pi^{(t+1/2)}_i)}
\end{align*}
For the first term, we can use Lemma D.2 in \cite{leonardos2022global} to conclude that, if $\norm{G(\pi^{(t)}_i)} \leq \epsilon$
\begin{align*}
   (\delta_{\max})^T \nabla_{\pi_i}\Phi(\pi^{(t+1/2)}_i) &\leq \max_{\delta: \norm{\delta}\leq 1, \pi^{(t)}_i + \delta \in \Delta^S_A} \delta^T \nabla_{\pi_i}\Phi(\pi^{(t+1/2)}_i) \\&\leq 2 \epsilon
\end{align*}
For the second term, using Cauchy-Schwartz and $\beta$-Lipshitzness of $\nabla_{\pi_i}\Phi$, we obtain
\begin{align*}
    (\delta_{\max})^T \br{\nabla_{\pi_i}\Phi(\pi^{(t+1)}_i) - \nabla_{\pi_i}\Phi(\pi^{(t+1/2)}_i)} &\leq \norm{\delta_{\max}}_2 \norm{\nabla_{\pi_i}\Phi(\pi^{(t+1)}_i) - \nabla_{\pi_i}\Phi(\pi^{(t+1/2)}_i)}_2 \\ &\leq \underbrace{\norm{\delta_{\max}}_2}_{\leq 1} \beta \norm{\pi^{(t+1)}_i - \pi^{(t+1/2)}_i}_2 \\ &\leq \beta \norm{\alpha \br{\pi^{(t+1/2)}_i - \frac{1}{A_i}}}_2 \\&\leq \alpha \beta \norm{\br{\pi^{(t+1/2)}_i - \frac{1}{A_i}}}_1
    \\&\leq 2 \alpha \beta
\end{align*}
Then, putting all together, we have
$$\max_{\delta: \norm{\delta}\leq 1, \pi^{(t)}_i + \delta \in \Delta^S_A} \delta^T \nabla_{\pi_i}\Phi(\pi^{(t+1)}_i) \leq 2 \epsilon + 2 \alpha \beta$$
That seems to improve the dependence on $\alpha$ when compared to Lemma D.3 in \cite{leonardos2022global}.
\clearpage

\section{Additional experiments}
\begin{figure}
    \centering
\begin{tabular}{c}
\subfloat[Uniform initialization]{%
    \includegraphics[width=0.99\linewidth]{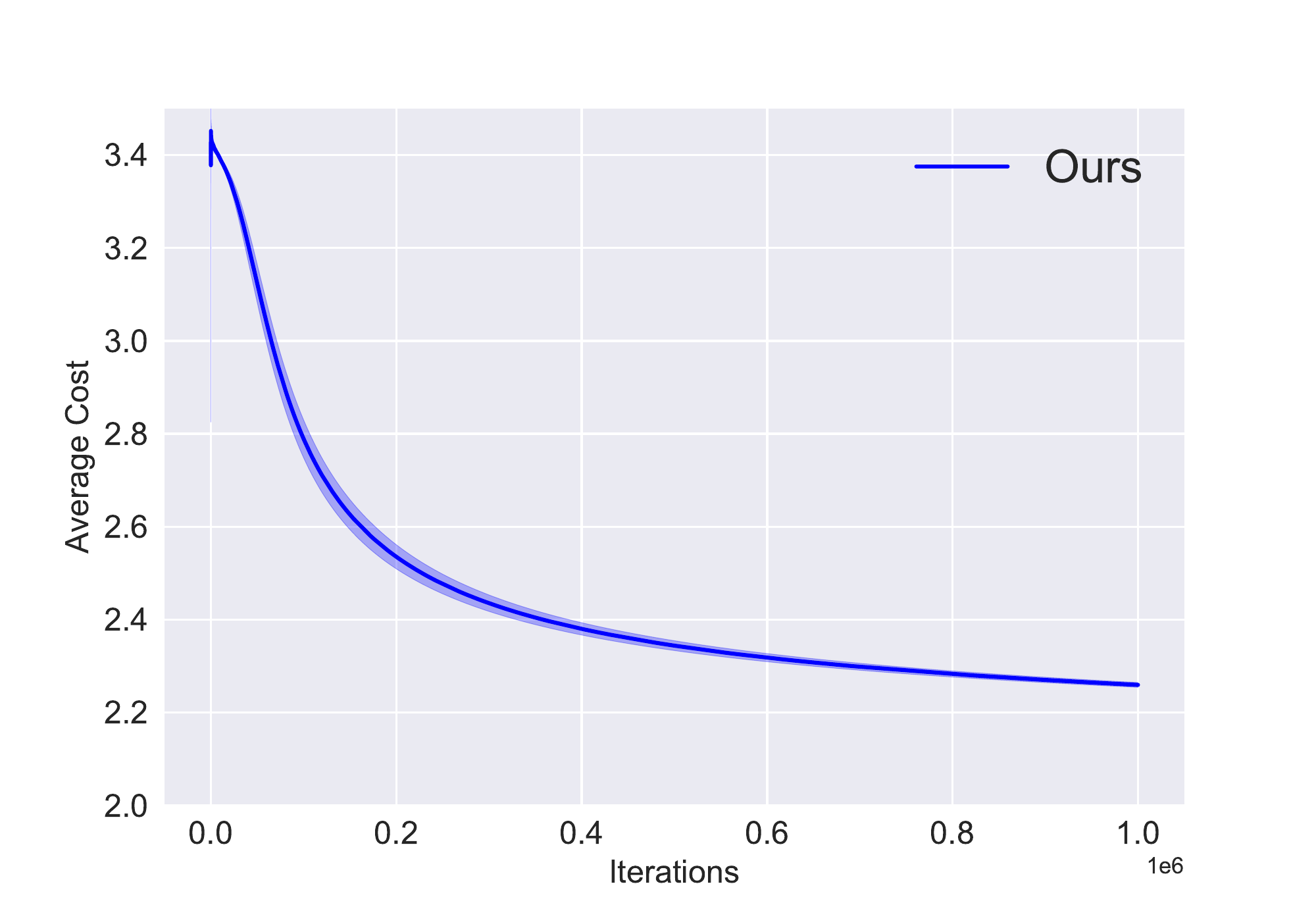}
     }
\end{tabular}
\caption{Convergence to Nash equilibrium for $10$ agents case on a two nodes graph with $11$ edges between each node pair. EXP2 seem difficult to run in this setting because we could not compute the KL projection with Scipy in this setting.}
\end{figure}
\fi

\if 0
\section{Proof of Theorem~\ref{thm:no_regret}}
\begin{theorem}
\label{thm:no_regret}
With probability $1 - \delta$ it holds that:
\begin{equation}
    \overline{\mathrm{Regret}}_T \leq \frac{3E c^2_{\max}}{2} \sum^T_{t=1}\frac{\gamma_t}{\mu^2_t} + 2 E c^2_{\max} \sum^T_{t=1}\frac{\gamma_t}{\mu_t} +
    \frac{2}{\gamma_T} + c_{\max} \br{1 + 1/\mu_T} \sqrt{8 T \log(1/\delta)} + 2 T c_{\max} \mu_T 
\end{equation}
\end{theorem}
\begin{proof}
We proceed with a first regret decomposition where we introduce a point $\fcost^\star\in\argmin_{\fcost\in\mathcal{X}} \sum^T_{t=1}  \innerprod{\cost^t}{\fcost}$ and we define $\fcost_{\mu}^\star\triangleq (1 - {\mu_T}) \fcost^\star + \frac{\mu_T}{E} $
\begin{align*}
    \overline{\mathrm{Regret}}_T =& \sum^T_{t=1} \innerprod{\cost^t} {\fcost_t} -  \sum^T_{t=1}  \innerprod{\cost^t}{\fcost^\star} \\ =& \underbrace{\sum^T_{t=1} \innerprod{\cost^t} {\fcost_t - \fcost^\star_\mu}}_{(A)} -  \underbrace{\sum^T_{t=1}  \innerprod{\cost^t}{\fcost^\star - \fcost^\star_\mu}}_{(B)},
\end{align*}
The term $(B)$ is bounded as
\begin{align*}
    \sum^T_{t=1}  \innerprod{\cost^t}{\fcost^\star - \fcost^\star_\mu} & \leq \sum^T_{t=1} \norm{\cost^t}_{\infty} \norm{\fcost^\star - \fcost^\star_\mu}_1 \\
    & \leq T c_{\max} \norm{\fcost^\star - (1 - \mu_T) \fcost^\star + \frac{\mu_T}{E}}_1 \\
    & \leq  2 T c_{\max} \mu_T
\end{align*}

For the term (A), we need a further decomposition
\begin{align}
\label{eq:A}
\sum^T_{t=1} \innerprod{\cost^t} {\fcost_t - \fcost^\star_\mu} = \underbrace{\sum^T_{t=1} \innerprod{\cost^t} {\fcost_{t+1} - \fcost^\star_\mu}}_{(A1)} - \underbrace{\sum^T_{t=1} \innerprod{\cost^t} {\fcost^{t} - \fcost_{t+1}}}_{(A2)}.
\end{align}

The term $(A2)$ is bounded as follows:
\begin{align*}
    \sum^T_{t=1} \innerprod{\cost^t} {\fcost^{t} - \fcost_{t+1}} &\leq \sum^T_{t=1}\norm{\cost^t}\norm{\fcost^{t} - \fcost_{t+1}} \\
    &\leq \sum^T_{t=1}\norm{\cost^t}\norm{\fcost^{t} - \fcost^\prime_{t+1}} 
     \\
    &\leq \sum^T_{t=1}\norm{\cost^t}\norm{\widehat{\cost}^t} \gamma_t\\
    &\leq E c^2_{\max}\sum^T_{t=1}\frac{\gamma_t}{\mu_t} 
\end{align*}
where the last inequality follows from $\norm{\cost^t} \leq \sqrt{E}c_{\max}$ and $\norm{\widehat{\cost}^t} \leq \sqrt{E}\frac{c_{\max}}{\mu_t}$.

For the term $(A1)$, we follow the steps of stochastic projected gradient descent analyses. As a first step, we rewrite the update recovering the variance of the estimator $\widehat{\cost^t}$,
\begin{equation*}
\fcost^\prime_{t+1} = \fcost_t - \gamma_t \widehat{\cost}^t = \fcost_t - \gamma_t {\cost}_t - \gamma_t \underbrace{\br{\widehat{\cost}^t - \E_t{\widehat{\cost}^t}}}_{\triangleq \boldsymbol{\epsilon}_t}.
\end{equation*}
where $\E_t$ denotes the expectation conditioned on $\fcost_t$.
By property of projection we have that
\begin{equation*}
\innerprod{\fcost^\prime_{t+1} - \fcost_{t+1}}{\fcost_t - \fcost_{t+1}} \leq 0
\end{equation*}
Replacing the update for $\fcost^\prime_{t+1}$, we obtain
\begin{equation*}
\gamma_t\innerprod{\cost^t + \boldsymbol{\epsilon}_t}{\fcost^{t} - \fcost_{t+1} } \geq \norm{\fcost_t - \fcost_{t+1}}^2
\end{equation*}
Developing the inner product on the left hand side and rearranging the terms we obtain
\begin{equation}
\label{eq:estimate1}
    \innerprod{\cost^t}{\fcost_{t+1}} \leq \innerprod{\cost^t}{\fcost^\star_{\mu}} + \innerprod{\cost^t}{\fcost^{t}-\fcost^\star_{\mu}} - \frac{\norm{\fcost_t - \fcost_{t+1}}^2}{\gamma_t} + \innerprod{ \boldsymbol{\epsilon}_t}{\fcost^{t} - \fcost_{t+1} }.
\end{equation}
Now we will work on the term $\innerprod{\cost^t}{\fcost^{t}-\fcost^\star_{\mu}}$. By the contraction property of the projection, we have
\begin{align*}
    \norm{\fcost_{t+1} - \fcost^\star_{\mu}}^2 \leq & \norm{\fcost^\prime_{t+1} - \fcost^\star_{\mu}}^2 \\
    & \leq \norm{\fcost^{t} - \gamma_t(\cost^t + \boldsymbol{\epsilon}_t) - \fcost^\star_{\mu}}^2 \\ 
    & = \norm{\fcost^{t} - \fcost^\star_{\mu}}^2 + \gamma_t^2 \norm{\cost^t + \boldsymbol{\epsilon}_t}^2 - 2\gamma_t \innerprod{\cost^t + \boldsymbol{\epsilon}_t}{\fcost^{t} - \fcost^\star_{\mu}}.
\end{align*}
Moving the inner product on the left hand side we obtain
\begin{equation}
\label{eq:estimate2}
    2 \gamma_t \innerprod{\cost^t}{\fcost^{t} - \fcost^\star_{\mu}} \leq \norm{\fcost^{t} - \fcost^\star_{\mu}}^2 - \norm{\fcost_{t+1} - \fcost^\star_{\mu}}^2 + \gamma_t^2 \norm{\cost^t + \boldsymbol{\epsilon}_t}^2 - 2\gamma_t \innerprod{ \boldsymbol{\epsilon}_t}{\fcost^{t} - \fcost^\star_{\mu}}
\end{equation}

Replacing \eqref{eq:estimate2} in \eqref{eq:estimate1} we obtain
\begin{multline*}
    \innerprod{\cost^t}{\fcost_{t+1}} \leq \innerprod{\cost^t}{\fcost^\star_{\mu}}  - \frac{\norm{\fcost_t - \fcost_{t+1}}^2}{\gamma_t} + \innerprod{ \boldsymbol{\epsilon}_t}{\fcost^{t} - \fcost_{t+1} } \\ +
    \frac{\norm{\fcost^{t} - \fcost^\star_{\mu}}^2 - \norm{\fcost_{t+1} - \fcost^\star_{\mu}}^2}{2\gamma_t} + \frac{\gamma_t \norm{\cost^t + \boldsymbol{\epsilon}_t}^2}{2} -  \innerprod{ \boldsymbol{\epsilon}_t}{\fcost^{t} - \fcost^\star_{\mu}}.
\end{multline*}
Discarding negative terms, rearranging and plugging in the following upper bounds \begin{equation*}\innerprod{\boldsymbol{\epsilon}_t}{\fcost^{t} - \fcost_{t+1} } \leq \gamma_t \norm{\widehat{\cost}^t- {\cost}_t}\norm{\widehat{\cost}^t} \leq \gamma_t \norm{\widehat{\cost}^t}^2 + \gamma_t \norm{\widehat{\cost}^t}\norm{{\cost}_t} \leq \gamma_t \frac{E c^2_{\max}}{\mu_t^2} + \gamma_t\frac{E c^2_{\max}}{\mu_t},
\end{equation*}
\begin{equation*}
    \norm{\cost^t + \boldsymbol{\epsilon}_t}^2 = \norm{\widehat{\cost}^t}^2 \leq \frac{E c^2_{\max}}{\mu_t^2},
\end{equation*}
we obtain
\begin{align*}
    \innerprod{\cost^t}{\fcost_{t+1} - \fcost^\star_{\mu}} &\leq \gamma_t \frac{E c^2_{\max}}{\mu_t^2} + \gamma_t\frac{E c^2_{\max}}{\mu_t} +
    \frac{\norm{\fcost^{t} - \fcost^\star_{\mu}}^2 - \norm{\fcost_{t+1} - \fcost^\star_{\mu}}^2}{2\gamma_t} + \frac{\gamma_t E c^2_{\max}}{2 \mu_t^2} -  \innerprod{ \boldsymbol{\epsilon}_t}{\fcost^{t} - \fcost^\star_{\mu}} \\
    &= \gamma_t \frac{3E c^2_{\max}}{2\mu_t^2} + \gamma_t\frac{E c^2_{\max}}{\mu_t} +
    \frac{\norm{\fcost^{t} - \fcost^\star_{\mu}}^2 - \norm{\fcost_{t+1} - \fcost^\star_{\mu}}^2}{2\gamma_t} -  \innerprod{ \boldsymbol{\epsilon}_t}{\fcost^{t} - \fcost^\star_{\mu}}.
\end{align*}
Summing from $t=0$ to $T$,
\begin{equation}
\label{eq:estimate3}
    \sum^T_{t=1}\innerprod{\cost^t}{\fcost_{t+1} - \fcost^\star_{\mu}} \leq \frac{3E c^2_{\max}}{2} \sum^T_{t=1}\frac{\gamma_t}{\mu^2_t} + E c^2_{\max} \sum^T_{t=1}\frac{\gamma_t}{\mu_t} +
    \frac{2}{\gamma_T} -  \sum^T_{t=1}\innerprod{ \boldsymbol{\epsilon}_t}{\fcost^{t} - \fcost^\star_{\mu}}.
\end{equation}
To conclude the bound for $(A1)$ we recognize that $\innerprod{\boldsymbol{\epsilon}_t}{\fcost^{t} - \fcost^\star_{\mu}}$ is a martingale difference sequence.
Indeed, \begin{equation*}
\E_t{\innerprod{\boldsymbol{\epsilon}_t}{\fcost^{t} - \fcost^\star_{\mu}}} = \E_t{\innerprod{\cost^t - \E_t{\cost^t} }{\fcost^{t} - \fcost^\star_{\mu}}} = {\innerprod{\E_t\cost^t - \E_t{\cost^t} }{\fcost^{t} - \fcost^\star_{\mu}}} = 0
\end{equation*}
bounded as
\begin{equation*}
    \abs{\innerprod{\boldsymbol{\epsilon}_t}{\fcost^{t} - \fcost^\star_{\mu}}} \leq \norm{\boldsymbol{\epsilon}_t}_{\infty}\norm{\fcost^{t} - \fcost^\star_{\mu}}_1 \leq 2 c_{\max}\br{1 + 1/\mu_T}
\end{equation*}
Therefore by Azuma-Hoeffding, with probability $1-\delta$,
\begin{equation*}
     -  \sum^T_{t=1}\innerprod{ \boldsymbol{\epsilon}_t}{\fcost^{t} - \fcost^\star_{\mu}} \leq c_{\max} \br{1 + 1/\mu_T} \sqrt{8 T \log(1/\delta)}
\end{equation*}
Plugging in \eqref{eq:estimate3}, we have with probability $1-\delta$ that:
\begin{equation}
\label{eq:estimate4}
    \sum^T_{t=1}\innerprod{\cost^t}{\fcost_{t+1} - \fcost^\star_{\mu}} \leq \frac{3E c^2_{\max}}{2} \sum^T_{t=1}\frac{\gamma_t}{\mu^2_t} + E c^2_{\max} \sum^T_{t=1}\frac{\gamma_t}{\mu_t} +
    \frac{2}{\gamma_T} + c_{\max} \br{1 + 1/\mu_T} \sqrt{8 T \log(1/\delta)}.
\end{equation}
that concludes the bound for the term $(A1)$ in \eqref{eq:A}.

In conclusion, we sum the bounds for the terms $(A1), (A2)$ and $(B)$ to get with probability $1-\delta$
\begin{align*}
\overline{\mathrm{Regret}}_T &\leq \underbrace{\frac{3E c^2_{\max}}{2} \sum^T_{t=1}\frac{\gamma_t}{\mu^2_t} + E c^2_{\max} \sum^T_{t=1}\frac{\gamma_t}{\mu_t} +
    \frac{2}{\gamma_T} + c_{\max} \br{1 + 1/\mu_T} \sqrt{8 T \log(1/\delta)}}_{\geq (A1)} \\&\phantom{=}+ \underbrace{E c^2_{\max}\sum^T_{t=1}\frac{\gamma_t}{\mu_t} }_{\geq (A2)} + \underbrace{2 T c_{\max} \mu_T}_{\geq (B)}
    \\ & \leq \frac{3E c^2_{\max}}{2} \sum^T_{t=1}\frac{\gamma_t}{\mu^2_t} + 2 E c^2_{\max} \sum^T_{t=1}\frac{\gamma_t}{\mu_t} +
    \frac{2}{\gamma_T} + c_{\max} \br{1 + 1/\mu_T} \sqrt{8 T \log(1/\delta)} + 2 T c_{\max} \mu_T
\end{align*}
\end{proof}
\fi
\section{On the difference with $\epsilon$-greedy exploration}

We present two examples to highlight the difference between $\epsilon$-greedy and Exploration with Bounded-Away Polytopes for the special case of simplex. 
Exploration with Bounded-Away Polytopes takes a mixed strategy $x\in \Delta_n$ and transforms it to the strategy
$$ x' := \Pi_{\Delta_n^\mu}( x) $$
where $\Delta_n^\mu = \Delta_n \cap \{x: x_i \geq \mu\}$. On the other hand $\epsilon$-greedy exploration transforms $x$ to $x'$ as follows,
$$ x' := (1-\epsilon) x + \frac{\epsilon}{n} (1,\ldots,1) $$
These are two different transformation that do not coincide. We will provide two specific examples : one example for $x \in  \Delta_n^\mu$ and one for $x \notin  \Delta_n^\mu$.

\textbf{Example for $x \in  \Delta_n^\mu$ :} Consider $\epsilon = \mu$ and $x=(2/3,1/3)$. In this case $\epsilon$-greedy exploration selects the strategy $x'= ( (1-\mu)2/3 + \mu/2 , (1-\mu)1/3 + \mu/2 )$ while Exploration with Bounded-Away Polytopes selects the strategy $x'=(2/3,1/3)$ because $x \in  \Delta_n^\mu$ implies that $x = x'$.

\textbf{Example for $x \notin  \Delta_n^\mu$ :} Consider $\epsilon = \mu$ and $x=(8/10,2/10, 0)$. In this case $\epsilon$-greedy exploration selects the strategy $x'= ( (1-\mu)8/10 + \mu/3 , (1-\mu)2/10 + \mu/3, \mu/3)$.
For Exploration with Bounded-Away Polytopes with $\mu \leq \frac{0.4}{3}$ we can use  the KKT conditions of the problem to derive that $x' = (x_1',x_2',x_3')$ must satisfy the following system
\begin{equation}
\begin{cases}
& 2(x^\prime_1 - 0.8) + \lambda = 0
\\& 2(x^\prime_2 - 0.2) + \lambda = 0
\\& 2(x^\prime_3 - 0.0) + \lambda \geq 0
\\& - x_1^\prime + \mu < 0
\\& - x_2^\prime + \mu < 0
\\& - x_3^\prime + \mu = 0
\end{cases} 
\end{equation}

which admits the solution $x' = (0.8 - \mu/2, 0.2 - \mu/2, \mu)$ which is a different transformation than the one obtained with $\epsilon$-greedy.
To see this, take for example $\epsilon=\mu=0.12$. $\epsilon$ greedy exploration gives $(( (1-\mu)8/10 + \mu/3 , (1-\mu)2/10 + \mu/3, \mu/3)) = (0.744, 0.216, 0.04)$ while Exploration with Bounded-Away Polytopes gives $(0.8 - \mu/2, 0.2 - \mu/2, \mu) = (0.74, 0.14, 0.12)$.
\clearpage
\if 0
\section{Additional experiments}
\label{app:experiments}
\begin{figure*}[!h] 
\centering
\begin{tabular}{cc}
\subfloat[$\frac{\mathcal{R}_\mathcal{A}(T)}{T}$ \label{fig:logregret20}]{%
    \includegraphics[width=0.45\linewidth]{Figs/loglog/regret20.pdf}
     } &
\subfloat[Exploitability \label{fig:lognash20}]{%
    \includegraphics[width=0.45\linewidth]{Figs/loglog/loglogstar/nash_distnash20.pdf}
     } \\
\subfloat[$\frac{\mathcal{R}_\mathcal{A}(T)}{T}$  \label{fig:logregret5}]{%
    \includegraphics[width=0.45\linewidth]{Figs/loglog/regret5.pdf}
     } &
\subfloat[Exploitability \label{fig:lognash_5_agent}]{%
    \includegraphics[width=0.45\linewidth]{Figs/loglog/loglogstar/nash_distnash5.pdf}
     } \\
\end{tabular}
\caption{Experiments on network games with $20$ nodes for $20$ (\Cref{fig:logregret20} and \ref{fig:lognash20}) and $5$ agents (\Cref{fig:logregret5} and \ref{fig:lognash_5_agent}). Curves averaged over $10$ seeds for the 20 agents case and $50$ seeds for 5 agents.}
\label{fig:5agent_game}
\end{figure*}
\fi
\if 0
\begin{figure*}[h!] 
\centering
\begin{tabular}{cccc}
\subfloat[$\frac{\mathcal{R}_\mathcal{A}(T)}{T}$ \label{fig:regret}]{%
    \includegraphics[width=0.22\linewidth]{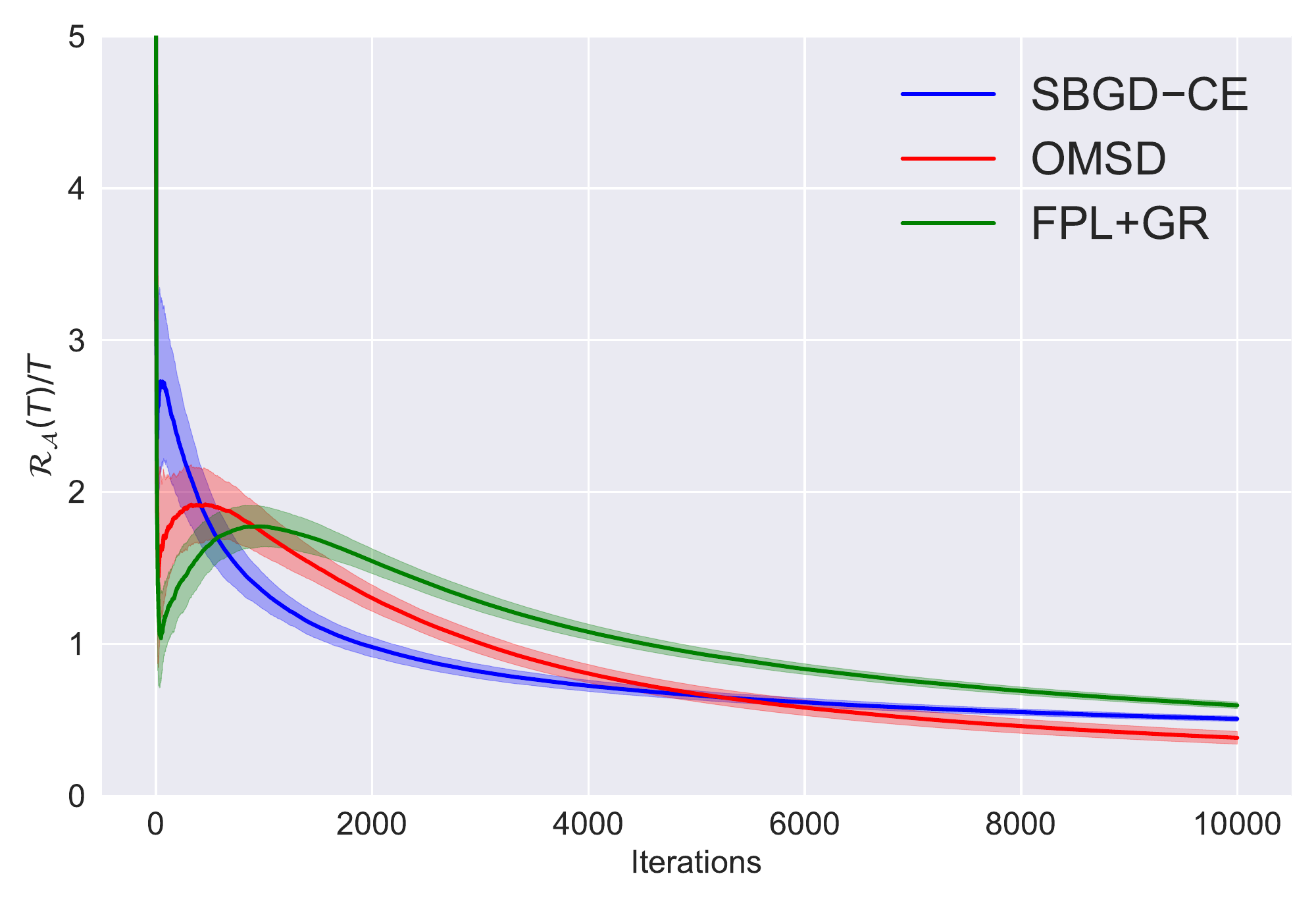}
     } &
\subfloat[Exploitability \label{fig:nash}]{%
    \includegraphics[width=0.22\linewidth]{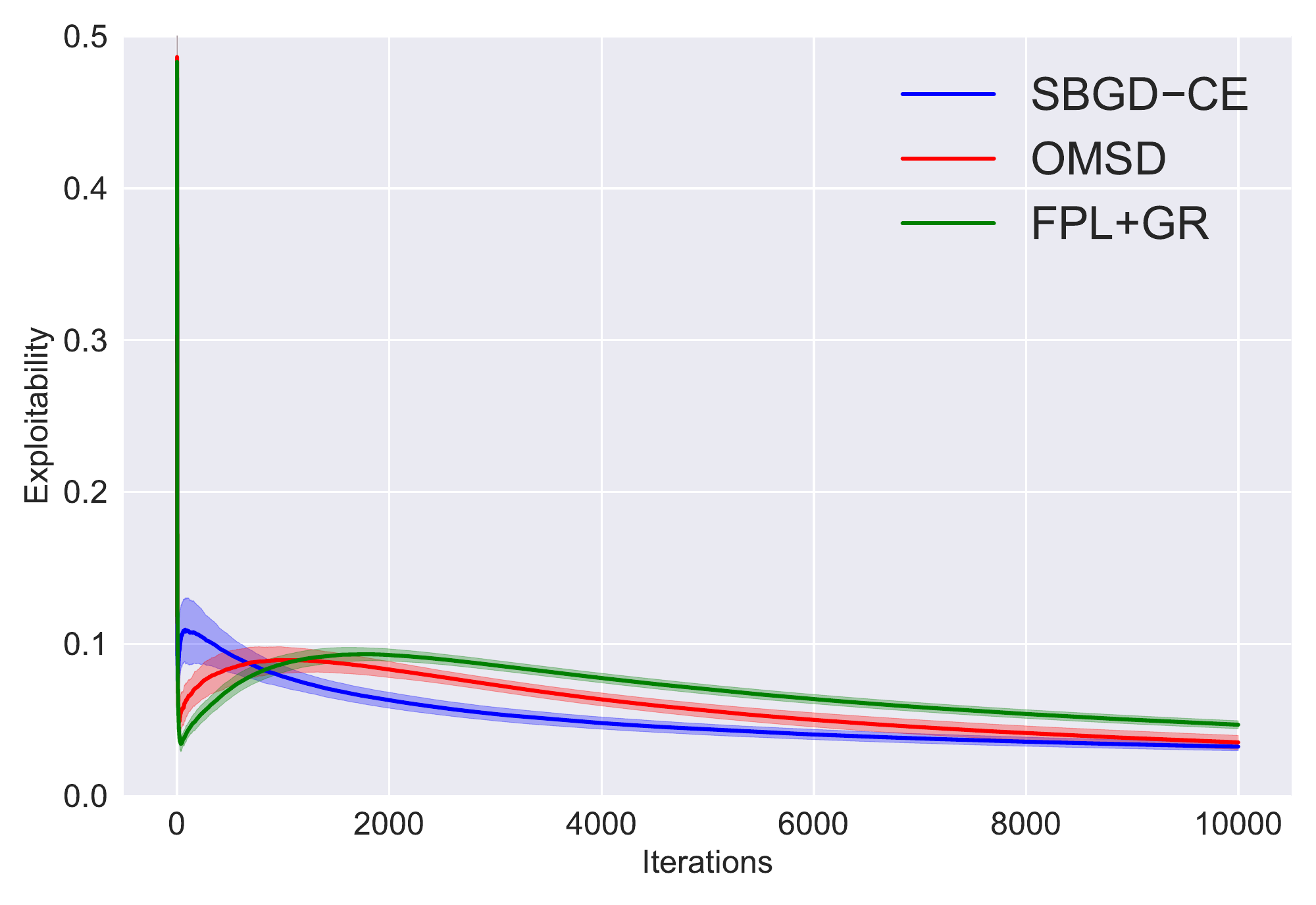}
     } &
\subfloat[Exploitability \label{fig:lognash}]{%
    \includegraphics[width=0.22\linewidth]{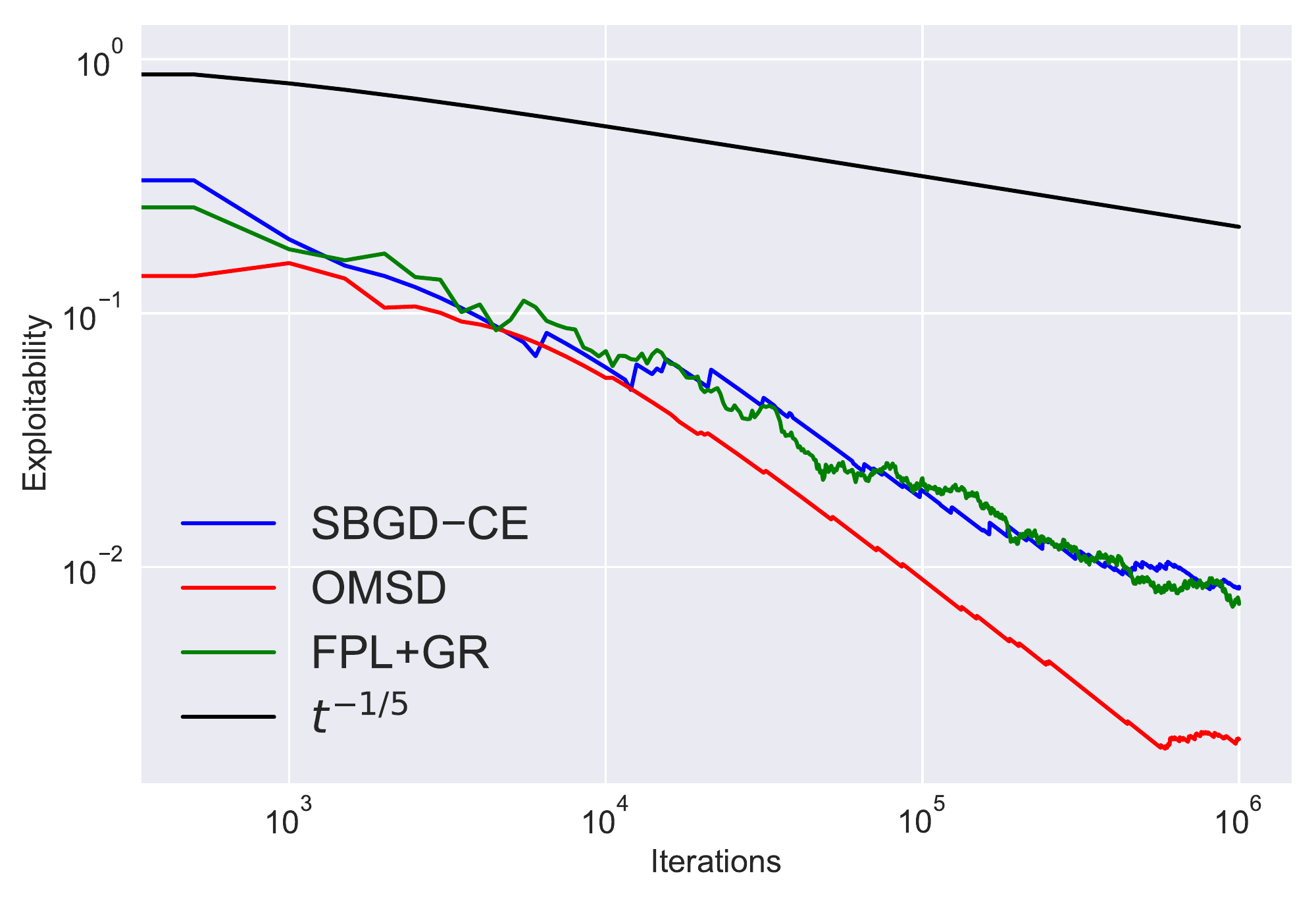}
     } &
\subfloat[Exploitability \label{fig:nash_5_agent}]{%
    \includegraphics[width=0.22\linewidth]{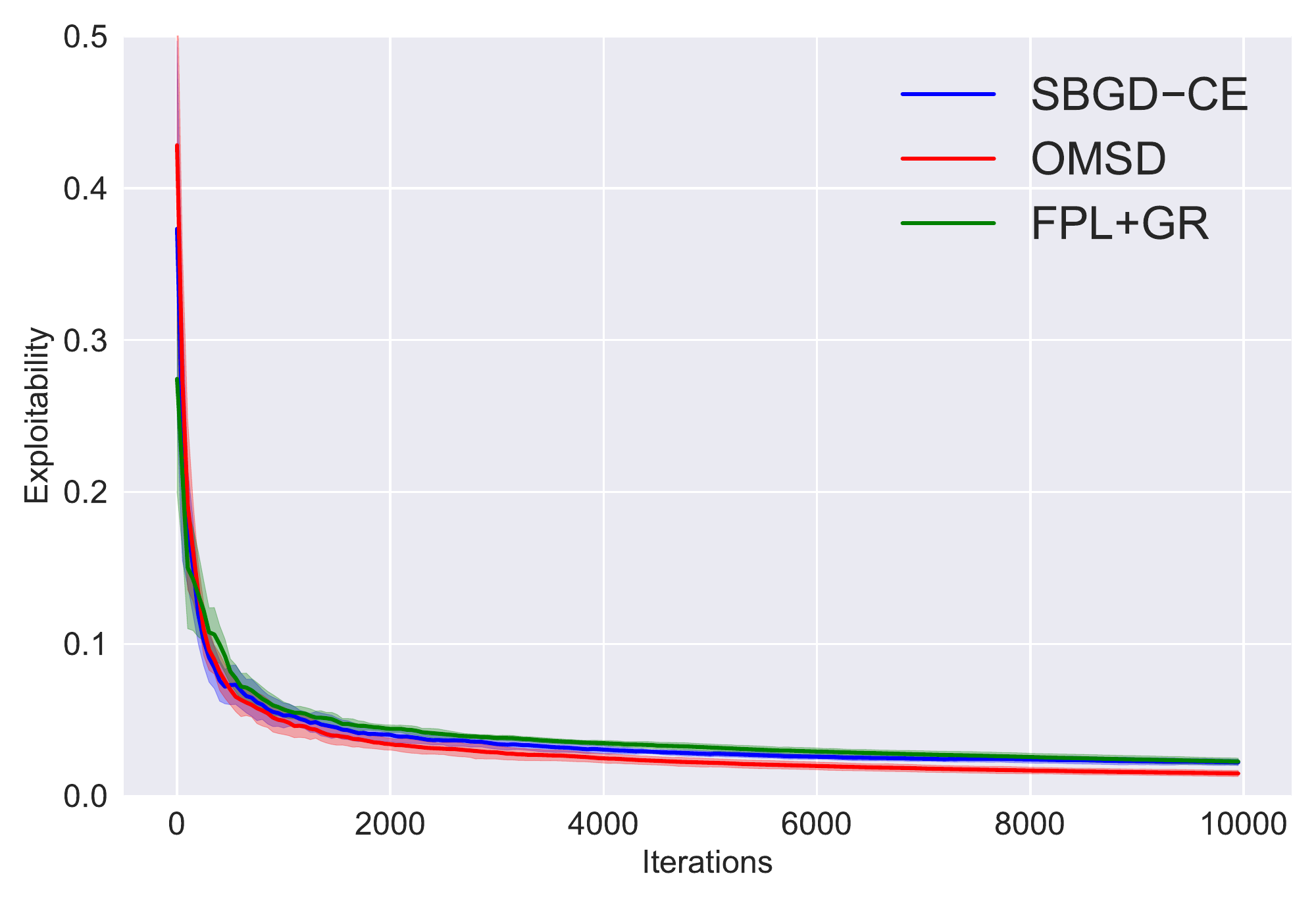}
     } \\
\end{tabular}
\caption{Experiments on network games with $20$ nodes for  $2$ agents in Figures \ref{fig:regret} and \ref{fig:nash}. Curves averaged over $50$ seeds.}
\label{fig:5agent_game}
\end{figure*}
\begin{figure*}[h!] 
\centering
\begin{tabular}{cccc}
\subfloat[$\frac{\mathcal{R}_\mathcal{A}(T)}{T}$ \label{fig:regret_20_agent}]{%
    \includegraphics[width=0.22\linewidth]{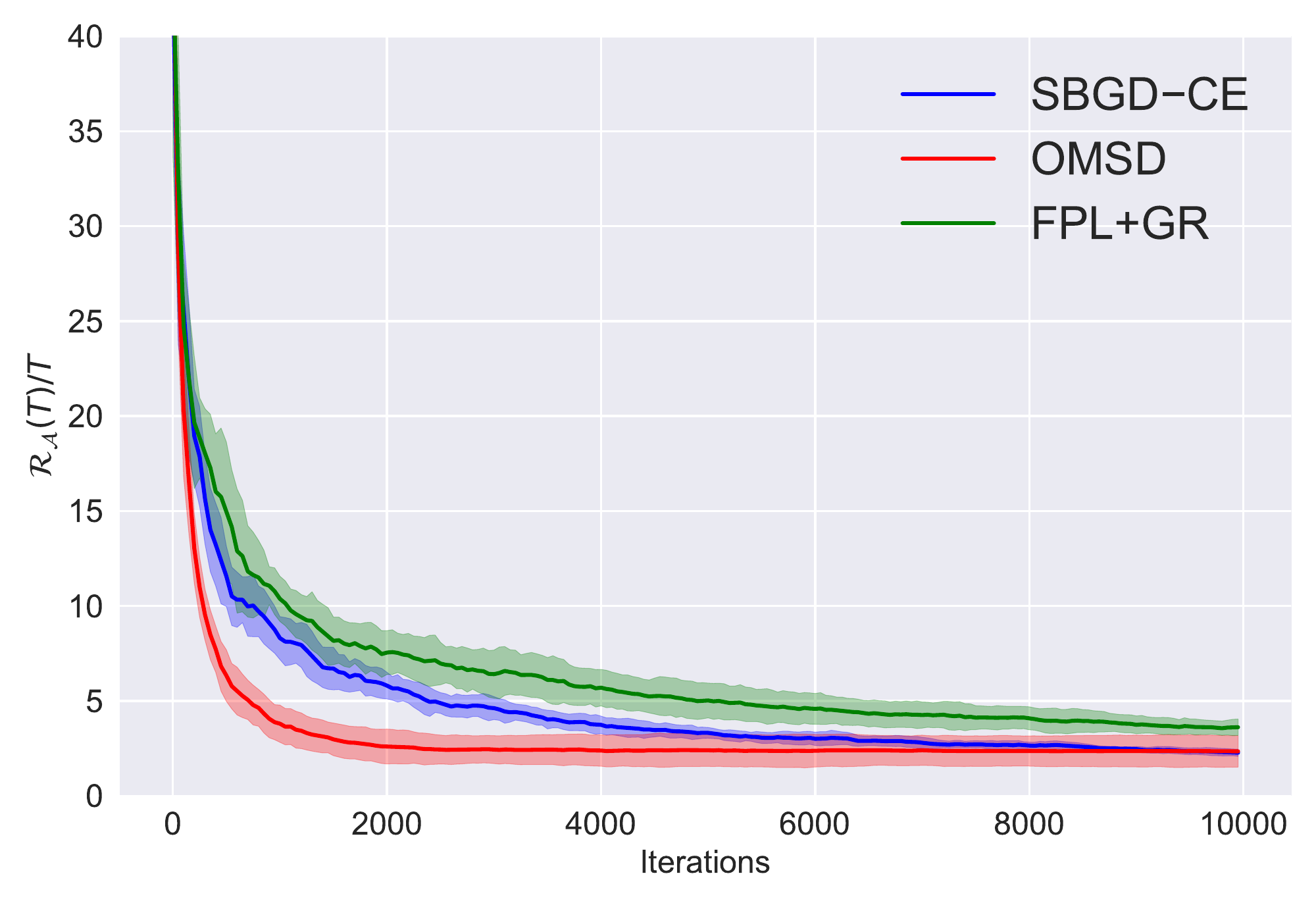}
     } &
\subfloat[Exploitability \label{fig:nash_20_agent}]{%
    \includegraphics[width=0.22\linewidth]{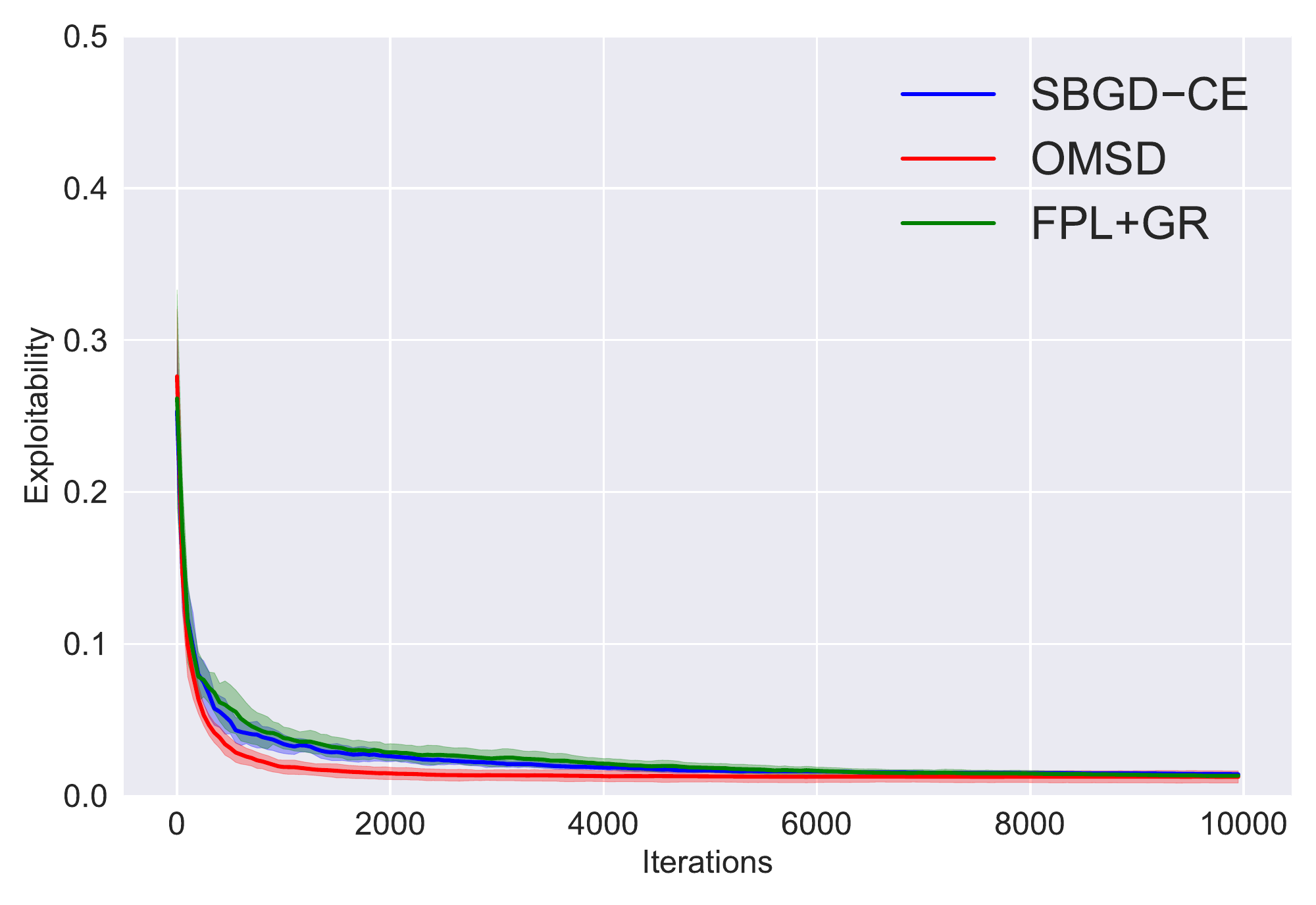}
     } &
\subfloat[$\frac{\mathcal{R}_\mathcal{A}(T)}{T}$ \label{fig:regret_5_agent}]{%
    \includegraphics[width=0.22\linewidth]{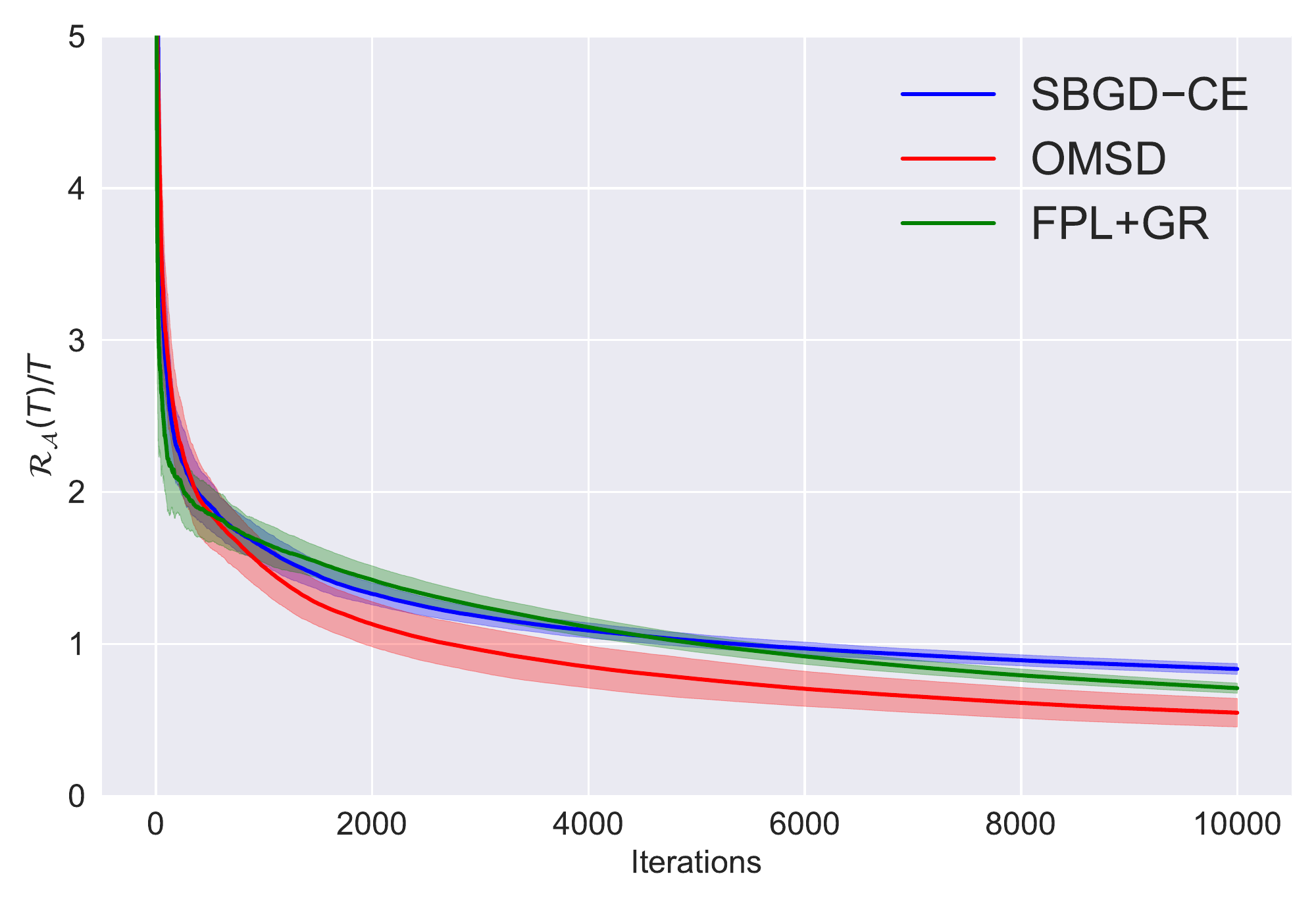}
     } &
\subfloat[Exploitability \label{fig:nash_5_agent}]{%
    \includegraphics[width=0.22\linewidth]{Figs/long_graph/5agents/nash_distnash5.pdf}
     } \\
\end{tabular}
\caption{Experiments on network games with $20$ nodes for $20$ (\Cref{fig:regret_20_agent} and \ref{fig:regret_20_agent}) and $5$ agents (\Cref{fig:regret_5_agent} and \ref{fig:nash_5_agent}). Curves averaged over $10$ seeds for the 20 agents case and $50$ seeds for 5 agents.}
\label{fig:20agent_game}
\end{figure*}
\fi
\newpage
\if 0
\section{Auxiliary Lemma}
\begin{lemma}
    It holds that 
    \begin{align*}
        y_{t+1} \triangleq \argmin_{y\in\mathcal{X}^{\mu_{t+1}}} \br{\Phi(y) + \frac{1}{\lambda}\norm{x^t - y}^2}
    \end{align*}
    satisfies
    \begin{equation*}
        y^{t+1} = \Pi_{\mathcal{X}^{\mu_{t+1}}}\bs{x^t - \frac{\lambda}{2}\nabla \Phi(y^{t+1})}
    \end{equation*}
    and viceversa.
\end{lemma}
\begin{proof}
Let $g(y) = \Phi(y) + \frac{1}{\lambda}\norm{x^t - y}^2$ and $h(y) = \frac{1}{2}\norm{y - x^t + \frac{\lambda}{2}\nabla \Phi(y^{t+1})}$ and notice that these functions are convex and that
\begin{equation*}
    \nabla g(y) = \nabla \Phi(y) + \frac{2}{\lambda} (y - x^t)
\end{equation*}
and 
\begin{equation*}
    \nabla h(y) = y- x^t + \frac{\lambda}{2}\nabla \Phi(y)
\end{equation*}.
For a convex function it holds that $x^\star \in \argmin_{x\in\mathcal{X}} f(x)$ if and only $\innerprod{\nabla f(x^\star)}{x^\star - x} \leq 0 \quad \forall x \in \mathcal{X}$ (Equation 4.21 in Convex Optimization Byod and Vandenberghe).

Now we can proceed as follows
    \begin{align*}
    y_{t+1} = \argmin_{y\in\mathcal{X}^{\mu_{t+1}}} \br{\Phi(y) + \frac{1}{\lambda}\norm{x^t - y}^2} \Leftrightarrow y_{t+1} &= \argmin_{y\in\mathcal{X}^{\mu_{t+1}}} g(y) \\&\Leftrightarrow
     \innerprod{\nabla g(y_{t+1})}{y^{t+1} - y} \leq 0 \quad \forall y \in \mathcal{X}^{\mu_{t+1}} \\&\Leftrightarrow
     \innerprod{\frac{2}{\lambda}\nabla h(y_{t+1})}{y^{t+1} - y} \leq 0 \quad \forall y \in \mathcal{X}^{\mu_{t+1}} \\&\Leftrightarrow
     \innerprod{\nabla h(y_{t+1})}{y^{t+1} - y} \leq 0 \quad \forall y \in \mathcal{X}^{\mu_{t+1}} \\ &\Leftrightarrow \argmin_{y\in\mathcal{X}^{\mu_{t+1}}} h(y)
     \\&\Leftrightarrow \argmin_{y\in\mathcal{X}^{\mu_{t+1}}} \frac{1}{2}\norm{y - x^t + \frac{\lambda}{2}\nabla \Phi(y^{t+1})}
     \\&\Leftrightarrow y^{t+1} = \Pi_{\mathcal{X}^{\mu_{t+1}}}\bs{x^t - \frac{\lambda}{2}\nabla \Phi(y^{t+1})}
    \end{align*}
\end{proof}

\begin{proof}
We kindly disagree with the reviewer on the 2nd issue. Our claim holds no matter the convex set.

Let $g(y) := \Phi(y) + \frac{1}{\lambda}\norm{x^t - y}^2$ which a convex function (by the selection of $\lambda$).
    \begin{align*}
    y_{t+1} = \argmin_{y\in\mathcal{X}^{\mu_{t+1}}} g(y)&\Leftrightarrow
     \innerprod{\nabla g(y_{t+1})}{y^{t+1} - y} \leq 0 \quad \forall y \in \mathcal{X}^{\mu_{t+1}} \quad \text{(Equation 4.21 in [1]).}\\
     &\Leftrightarrow
     \innerprod{\nabla \Phi(y_{t+1}) + \frac{2}{\lambda} (y_{t+1} - x_t)}{y^{t+1} - y} \leq 0 \quad \forall y \in \mathcal{X}^{\mu_{t+1}} \\
          &\Leftrightarrow
     \innerprod{x_t - \frac{2}{\lambda}\nabla \Phi(y_{t+1}) -y_{t+1}}{y - y^{t+1}} \leq 0 \quad \forall y \in \mathcal{X}^{\mu_{t+1}} 
     \\&\Leftrightarrow y^{t+1} = \Pi_{\mathcal{X}^{\mu_{t+1}}}\bs{x^t - \frac{\lambda}{2}\nabla \Phi(y^{t+1})}
    \end{align*}
The last equivalence comes from $y'= \Pi_D[c] \Leftrightarrow  (c - y')^\top (y - y') \leq 0$ for all $y \in D$. To prove the latter consider $h(y):= \norm{c - y}^2$. Then $y'=\argmin_{y\in D}h(y)$  $\Leftrightarrow\nabla h(y')^\top(y' - y) \leq 0$ for all $y \in D~\Leftrightarrow $ $(c - y')^\top (y - y') \leq 0$ for all $y \in D$.  
\end{proof}
[1] Convex Optimization, Byod and Vandenberghe, 2004 (\url{https://web.stanford.edu/~boyd/cvxbook/bv_cvxbook.pdf}).

\fi

\end{document}